\def\s{\mathbf{s}}
\def\x{\mathbf{x}}
\def\y{\mathbf{y}}
\def\z{\mathbf{z}}
\def\v{\mathbf{v}}
\def\V{\mathbf{V}}
\def\f{\mathbf{f}}
\newcommand{\floor}[1]{\left\lfloor #1\right\rfloor}
\newcommand{\ceil}[1]{\left\lceil #1\right\rceil}
\newcommand{\zerotok}[1]{\left\{0..#1\right\}}
\DeclareMathOperator*{\med}{med}
\DeclareMathOperator*{\argmin}{\arg\!\min}
\newcommand{\absolute}[1]{\left\lvert #1 \right\rvert }
\newcommand{\median}[1]{\med \left(#1\right)} 
\newcommand{\mname}[1]{#1}
\newcommand{\profilename}{{three-type}}
\newcommand{\mechanismname}{{Piecewise Uniform}}
\newcommand{\Mechanismname}{{Piecewice Uniform}}
\newtheorem{definition}{Definition}
\newtheorem{theorem}{Theorem}
\newtheorem{lemma}{Lemma}
\theoremstyle{plain}
\newtheorem{claim}{Claim}
\newtheorem{example}{Example}
\title{Truthful Aggregation of Budget Proposals\\ with Proportionality Guarantees\thanks{A preliminary version appeared in the 36th AAAI Conference in Artificial Intelligence.}}
\author{
Ioannis Caragiannis\thanks{ Department of Computer Science,
Aarhus University, \textit{iannis@cs.au.dk}} \and
George Christodoulou\thanks{ Department of Computer Science,
Aristotle University of Thessaloniki, \textit{ gichristo@csd.auth.gr }} \and
Nicos Protopapas\thanks{Department of Computer Science,
University of Liverpool, \textit{N.Protopapas@liverpool.ac.uk}}
}
\date{}
\begin{document}

\maketitle

	\begin{abstract}
          We study a participatory budgeting problem, where a set of
          strategic agents wish to split a divisible budget among
          different projects, by aggregating their proposals on a
          single division.  Unfortunately, the straight-forward rule
          that divides the budget proportionally is susceptible to
          manipulation. In a recent work, \citet{freeman2021journal}
          proposed a class of truthful mechanisms, called \emph{moving
            phantom} mechanisms. Among others, they propose a
          \emph{proportional} mechanism, in the sense that in the
          extreme case where all agents prefer a single project to
          receive the whole amount, the budget is assigned
          proportionally.
		
          While proportionality is a naturally desired property, it is
          defined over a limited type of preference profiles. To
          address this, we expand the notion of proportionality, by
          proposing a quantitative framework which evaluates a budget
          aggregation mechanism according to its worst-case distance
          from the proportional allocation. Crucially, this is defined
          for every preference profile. We study this measure on the class of
          moving phantom mechanisms, and we provide approximation
          guarantees. For two projects, we show that the Uniform
          Phantom mechanism is the optimal among all truthful
          mechanisms. For three projects, we propose a new,
          proportional mechanism which is virtually optimal among all moving
          phantom mechanisms. Finally, we provide impossibility
          results regarding the approximability of moving phantom
          mechanisms.
	\end{abstract}

    \section{Introduction}\label{sec:Intro}
	Participatory budgeting is an emerging democratic process that
        engages community members with public decision-making,
        particularly when public expenditure should be allocated to
        various public projects. Since its initial adoption in the
        Brazilian city of Porto Alegre in the late
        1980s~\cite{cabannes2004participatory}, its usage has been
        spread in various cities across the world. Madrid, Paris, San
        Francisco, and Toronto provide an indicative, but far from
        exhaustive, list of cities that have adopted participatory
        budgeting procedures. See the survey by ~\citet{aziz2021participatory} for
        more examples.
	
        In this paper, we follow the model of~\citeauthor{freeman2021journal}
        in~\cite{freeman2021journal}, where voters are tasked to split an
        exogenously given amount of money among various projects. As
        an illustrative example, consider a city council inquiring the
        residents on how to divide the upcoming year's budget on
        education, among a list of publicly funded schools. Each
        citizen proposes her preferred allocation of the budget and
        the city council uses a suitable aggregation mechanism to
        allocate the budget among the schools.
	
	A natural way to aggregate the proposals is to compute the arithmetic mean for each project and assign to each project exactly that proportion of the budget. Following the above example, consider a community of $80$ thousand residences, wishing to split a budget over $3$ schools. If $20$ thousand residents propose a $(20\%,0\%,80\%)$ division of the budget, other $20$ thousand residents propose a $(40\%,40\%,20\%)$ division and the remaining $40$ thousand residents propose a $(100\%,0\%,0\%)$ division, the budget should be split as $(65\%,10\%,25\%)$ over the $3$ schools. This method, (or variations\footnote{A usual variation is the \emph{trimmed mean} mechanism, where some of the extreme bids are discarded. This is done to discourage a single voter to heavily influence a particular alternative.} of it) is used in practice in economics and sports.  See~\cite{renault2005protecting,renault2011assessing,rosar2015continuous} for some applications. Assigning the budget proportionally comes with some perks: it can be easily described, it is calculated efficiently and it scales naturally to any number of projects.
	
	Unfortunately, allocating the budget proportionally comes also with
    a serious drawback; namely, it is susceptible to
    manipulation. Indeed, consider a simple example with two
    projects and one hundred voters. Fifty voters propose a $(50\%,50\%)$ allocation, while the other fifty voters propose a $(100\%,0\%)$ allocation. Hence, the proportional allocation is
    $(75\%,25\%)$. Assume now, that one voter changes her $(50\%,50\%)$ proposal to $(0\%,100\%)$. This turns the aggregated division to $(74.5\%,25.5\%)$, a division which is closer to the
    $(50\%,50\%)$ proposal that she prefers. Hence, she may have an
    incentive to misreport her most preferred allocation to obtain
    a better outcome, according to her preference.
	
    Truthful mechanisms, i.e. mechanisms nullifying the incentives for strategic manipulation have already been proposed in the literature, for voters with \emph{$\ell_1$} preferences. Under $\ell_1$ preferences (See~\cite{freeman2021journal}), a voter has an ideal division in mind and suffers a disutility equal to the $\ell_1$ distance from her ideal division. \citet{Goel2019} and~\citet{linder08} propose truthful budget aggregation mechanisms that minimize the sum of disutilities for the voters, a quantity known in the literature as the \emph{utilitarian social welfare}. 

    Recently, \citeauthor{freeman2021journal} in~\cite{freeman2021journal}  observed that these mechanisms
    are disproportionately biased towards the opinion of the majority\footnote{As an illustrative example consider an instance where $2k+1$ voters propose budget divisions over two projects. $k+1$ voters assign the whole budget to the first project, while $k$ voters assigns the whole budget to the second project. For this instance, $(100\%,0\%)$ is the unique division that minimizes the sum of disutilities for the voters.}.  This lead
    them to propose the property of \emph{proportionality}. A
    mechanism is \emph{proportional} if, in any input consisted only
    by \emph{single-minded} voters (voters which fully assign the
    budget to a single project), each project receives the proportion
    of the voters supporting that project. They proposed a truthful
    and proportional mechanism, called the \emph{Independent Markets}
    mechanism.
    
    The Independent Markets mechanism belongs to a broader class of
    truthful mechanisms, called \emph{moving phantom mechanisms}. A
    moving phantom mechanism for $n$ voters and $m$ projects, allocates to
    each project the median between the voters' proposals for that
    project and $n+1$ carefully selected \emph{phantom values}. The
    selection of the phantom values is crucial: it ensures both the
    strategy-proofness of the mechanism, as well as its ability to
    return a feasible aggregated division, i.e. that the portions sum
    up to $1$. For example, The Independent Markets mechanism, places
    the $n+1$ phantom values uniformly in the interval $[0,x]$, for
    some $x \in [0,1]$, that guarantees feasibility.
    
    While proportionality is a natural fairness property, it is
    defined only under a limited scope: A proportional mechanism
    guarantees to provide the proportional division {\em only when all
      voters are single-minded}, and provides no guarantee for all
    other inputs. In this paper, we move one step further and we
    address the question: ``How far from the proportional division can
    the outcome of a truthful mechanism be?"
	
    Building on the work of~\citet{freeman2021journal}, we propose a
    more robust measure, and we extend the notion of proportionality
    as follows: Given any input of budget proposals, we define the
    \emph{proportional division} as the coordinate-wise mean of the
    proposals and then we measure the $\ell_1$ distance between the
    outcome of any mechanism and the proportional division. We call
    this metric the \emph{$\ell_1$-loss}. We say that a mechanism is
    $\alpha$-approximate if the maximum $\ell_1$-loss, over all
    preference profiles, is upper bounded by $\alpha\in[0,2]$. {So, in the one extreme $\alpha=0$ implies that the mechanisms always achieves the proportional solution, while in the other extreme $\alpha=2$ has provides no useful implication.}
	
	\subsection{Our Contribution}
	
    	In this paper, we expand the notion of proportionality due to~\citet{freeman2021journal}, by proposing a quantitative
        worst-case measure that compares the outcome of a mechanism
        with the proportional division. We evaluate this measure on
        truthful mechanisms, focusing on the important class of moving
        phantom mechanisms~\cite{freeman2021journal}. Our main
        objective is to design truthful mechanisms with small
        $\alpha$-approximation. We are able to provide effectively optimal
        mechanisms for the case of two and three projects.
    	
    	For the case of two projects, we show that the Uniform Phantom
        mechanism from~\cite{freeman2021journal} is $1/2$-approximate.
        Then, for the case of three projects, we first examine the
        Independent Markets mechanism and we show that this mechanism
        cannot be better than $0.6862$-approximate. We then propose a
        new, proportional moving phantom mechanism which we call the
        \emph{\mechanismname} mechanism which is
        $(2/3+\epsilon)$-approximate, where $\epsilon$ is a small constant\footnote{This constant is at most $10^{-5}$, and arises because we are using a computer-aided proof.}. The analysis of this mechanism is
        substantially more involved than the case of two projects and
        en route to proving the approximation guarantee we
        characterize the instances bearing the maximum $\ell_1$-loss,
        for any moving phantom mechanism.
    	
    	We complement our results by showing matching impossibility
        results: First, we show that there exists no
        $\alpha$-approximate moving phantom mechanism for any
        $\alpha<1-1/m$. This implies that our results for two and
        three projects are essentially best possible, within the family of
        moving phantom mechanisms. Furthermore, we show that no
        $\alpha$-approximate truthful mechanism exists, for
        $\alpha<1/2$, implying that the Uniform Phantom
        mechanism is the best possible among all truthful mechanisms. 
        
        Finally, we turn out attention in cases with large number of projects. We show that our proposed mechanisms and well as the Independent Markets mechanism are at least ($2-\Theta(m^{-1/3})$)-approximate. In addition, we show that any utilitian social welfare maximizing mechanism is ($2-\Theta(m^{-1})$)-approximate.
        
    \subsection{Further Related Work}
	
    Arguably the work closest to our work is the work by \citet{freeman2021journal}. Apart from the Independent Markets
    mechanism they propose and analyze another moving phantom
    mechanism, which maximizes the social welfare and  turned to be equivalent to the truthful mechanisms of~\citet{Goel2019} and~\citet{linder08}, at least up to tie-breaking rules. They also showed that this mechanism is the only mechanism which guarantees Pareto optimal budget divisions.
    
    Both the Uniform Phantom mechanism and the class of moving phantom mechanisms are conceptually related to the \emph{generalized median rules}, from \citeauthor{moulin1980strategy}'s seminal work~\cite{moulin1980strategy}. The generalized median rules are defined on a single-dimensional domain (e.g. the $[0,1]$ line) and are proven to characterize all truthful mechanisms under the mild assumptions of anonymity and continuity. These rules use $n+1$ \emph{phantom values}, where $n$ is the number of voters. For the case of two projects, generalized median rules can be used directly. In fact the Uniform Phantom mechanism, analyzed in Section~\ref{ssec:two_projects}, is a generalized median rule, tailored to our setting. We also note that variants of the Uniform Phantom mechanism appears under different contexts in various papers~\cite{caragiannis16,renault2005protecting,renault2011assessing,aziz2021strategyproof}.

    Unfortunately, generalized median rules cannot be used directly for more than two projects, since there exists no guarantee that the outcome would be a valid division (i.e. the total budget is allocated to the projects). For example, for more than two projects, the Uniform Phantom mechanism returns an outcome which sums to a value at least equal to $1$. 
    The family of moving phantom mechanisms extends the idea of the generalized median rules, by carefully selecting a coordinated set of phantom values, imposing an outcome to be a valid division and retain truthfulness for any number of projects. This family is broad, and it is an open question whether it includes all truthful mechanisms, under the mild assumptions of anonymity, neutrality and continuity. See also the related discussion and references in~\cite{freeman2021journal}.

    Voters with $\ell_1$ preferences are a special case of the well-studied \emph{single-peaked} preferences~\cite{moulin1980strategy} and
    have some precedence in public policy
    literature~\cite{Goel2019}. Recently, a natural utility model, equivalent to $\ell_1$ preferences has been proposed in~\cite{nehring2019resource}.
    We note here that the family of moving phantom mechanism fails to remain strategyproof when the $\ell_1$ preferences assumption is dropped, and replaced with the more general assumption of single-peaked preferences (see the related discussion in~\citet{laraki2021level}).

    The non-truthful mechanism which assigns a budget proportionally has been studied from a strategic point of view in single-dimensional domains (i.e. for two projects).
    \citet{renault2005protecting} and~\citet{renault2007bayesian} explore the equilibria imposed by this mechanism. 
    \citet{rosar2015continuous} compares allocations based on the mean and the median in a single-dimensional domain in a model with incomplete information. In a recent work,~\citet{Puppe21Meanvsmedian} conducted an experimental study between a normalized median rule and the aggregated mean, in a setting similar to ours. Their findings show that under the normalized-median rule, which is not truthful, people where frequently sincere, while under the aggregated median rule the voters' behavior was mainly polarized towards the extremes. 
    	
    In terms of applications, our work falls well under the Participatory Budgeting paradigm. For a survey on Participatory Budgeting, from a mechanism design prospective, the reader is referred to~\cite{aziz2021participatory}. According to the taxonomy therein, our model belongs in the Divisible Participatory Budgeting class. Other examples under the same category, include~\cite{fain2016core,garg2019iterative,aziz2019fair,airiau19,bogomolnaia2005,duddy2015fair,michorzewski2020price}. Among
    others, these works analyze mechanisms with various fairness
    notions, some of which are in the spirit of proportionality. As an example~\citet{michorzewski2020price} quantify the effect of imposing fairness guarantees versus the maximum social possible welfare, under the utility model of dichotomous preferences.
    A large part of the literature concerning Participatory Budgeting covers a model where projects cannot be funded partially, but
    instead are either fully funded or not funded at all. Part of the work of~\citet{Goel2019} is dedicated to this model. Other notable examples include~\cite{benade2020preference,aziz2018proportionally,lu2011budgeted}.
    The problem we tackle here is also related to facility location, a problem with attracted substantial interest by the computational social choice and mechanism design community. For a recent survey, see~\citet{Chan2021FacilityLocationSurvey}.
    
    Finally, our work follows the agenda of approximate mechanism design without money, firstly promoted in~\citet{procaccia2013approximate}. We use an additive approximation measure. Such measures has been used both in the approximation mechanism~\cite{alon2009additive,goemans2006minimum} and the mechanism design~\cite{cai2016facility,serafino2020truthfulness,roughgarden2009quantifying} literature.
    
    \subsection{Roadmap}
    
    The rest of the paper is structured as follows. We begin with preliminary definitions in Section~\ref{sec:preliminaries}. In Section~\ref{sec:upperbounds} we present mechanisms with small approximation for two (Section~\ref{ssec:two_projects}) and three (Section~\ref{ssec:three_projects}). In Section~\ref{sec:lowerbounds} we present impossibility results.

        \section{Preliminaries}\label{sec:preliminaries}
    
	Let $[k]=\{1,...,k\}$ for any $k \in \mathbb{N}$. Let $[n]$ be a set of voters and $[m]$ be a set of projects, for $n \geq 2$ and $m \geq 2$. 
	Let $\mathcal{D}(m)=\{ \x \in [0,1]^m : \sum_{j \in [m]} x_j =1\}$. This set is also known as the \emph{standard simplex}~\cite{boyd2004convex}.
	
	We call a \emph{division} among $m$ projects any tuple $\x \in \mathcal{D}(m)$. Let $d(\x,\y)=\sum_{j \in [m] } \absolute{x_j-y_j}$ denote the $\ell_1$ distance between the divisions $\x$ and $\y$. Voters have structured preferences over budget divisions. Each voter $i \in [n]$ has a most preferred division, her \emph{peak}, $\v^*_i$, and for each division $\x$, she suffers a disutility equal to $d({\v^*_i,\x})$, i.e. the $\ell_1$ distance between her peak $\v^*$ and $\x$.
	
	Each voter $i \in [n]$ reports a division $\v_i$. These divisions form a \emph{preference profile} $\V=(\v_i)_{i \in [n]}$. A \emph{budget aggregation mechanism} $f: \mathcal{D}(m)^n \rightarrow D(m)$  uses the proposed divisions to decide an aggregate division $f(\V)$. A mechanism $f$ is \emph{continuous} when the function  $f: \mathcal{D}(m)^n \rightarrow D(m)$ is continuous. A mechanism is \emph{anonymous} if the output is independent of any voters' permutation and \emph{neutral} if any permutation of the projects (in the voters' proposals) .
	permutes the outcome accordingly.
	
	In this paper, we focus on \emph{truthful} mechanisms, i.e. mechanisms where no voter can alter the aggregated division to her favor, by misreporting her preference.
	
	\begin{definition}[Truthfulness\cite{freeman2021journal}]\label{def:truthfullness}
          A budget aggregation mechanism $f$ is truthful if, for all
          preference profiles $\V$, voters $i$, and divisions $\v^*_i$
          and $\v_i$, $d(f(\V_{-i},\v_i)) \geq d(f(\V_{-i},\v^*_i))$.
	\end{definition}
    
    \noindent A large part of this work is concerned with the class of \emph{moving phantom mechanisms}, proposed in~\cite{freeman2021journal}.
    
	\begin{definition}[Moving phantom mechanisms]\label{def:moving_phantoms}
	Let $\mathcal{Y}=\{y_k: k \in \zerotok{n},\}$ be a family of functions such that, for every $k \in \zerotok{n}$, $y_k:[0,1] \rightarrow [0,1]$ is a continuous, weakly increasing function with $y_k(0)=0$ and $y_k(1)=1$. In addition, $y_0(t) \leq y_1(t) \leq ... \leq y_n(t)$ for every $t \in [0,1]$. The set $\mathcal{Y}$ is called a \emph{phantom system}. For any valid phantom system, a \emph{moving phantom mechanism} $f^\mathcal{Y}$, is defined as follows: For any profile $\V$ and any project $j \in [m]$, 
	
	\begin{equation}\label{eq:def:moving_phantoms}
	f_j^\mathcal{Y}(\V) = \median{ \V_{i \in [n],j},(y_k(t^*))_{k \in \zerotok{n}} }
	\end{equation}
	
	for some   \begin{equation}
	t^* \in \left\{ t: \sum_{j \in [m] } \median{ \V_{i \in [n],j},(y_k(t))_{k \in \zerotok{n}} } =1 \right\} \label{eq:feasibletstar}.
	\end{equation}
	\end{definition}
    
    Each function $y_k(t)$ represents a phantom. At each time $t$, the phantom system returns the values $(y_0(t),...,y_n(t))$. For the particular time $t^*$, these values are sufficient for the sum of the medians to be equal to $1$ (see equation~\ref{eq:feasibletstar}). \citeauthor{freeman2021journal} in~\cite{freeman2021journal} show that one such $t^*$ always exists and, in case of multiple candidate values, the specific choice of $t^*$ does not affect the outcome.
    
    The following theorem states that every budget aggregation mechanism following Definition~\ref{def:moving_phantoms} is truthful.
    
    \begin{theorem}\cite{freeman2021journal}
    Every moving phantom mechanism is truthful.
    \end{theorem}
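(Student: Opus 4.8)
The plan is to push everything through the coordinate-wise median formula in Definition~\ref{def:moving_phantoms}, and to bring in the feasibility constraint $\sum_j f_j=1$ only at the very end, to glue the coordinates together. Fix a voter $i$ and the reports $\V_{-i}$ of the other voters. For a parameter $t$ and a scalar $a\in[0,1]$, the median that the mechanism computes in coordinate $j$ when voter $i$ reports $a$ there is the $(n+1)$-st smallest of $2n+1$ reals: the $n-1$ fixed coordinates $v_{i',j}$ with $i'\neq i$, the $n+1$ phantoms $y_k(t)$, and $a$. Sorting the $2n$ fixed reals and inserting $a$ shows this median equals $\min\{U_j(t),\max\{L_j(t),a\}\}$, that is, $a$ \emph{clamped} to the interval $[L_j(t),U_j(t)]$, where $L_j(t)\le U_j(t)$ are the $n$-th and $(n+1)$-st order statistics of the $2n$ fixed reals. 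Since every $y_k$ is weakly increasing, both endpoints $L_j(t)$ and $U_j(t)$ are weakly increasing in $t$; this is exactly a moving generalized median rule in each coordinate. Writing $t^*$ for a feasible parameter when voter $i$ reports her peak $\v^*_i$ and $s^*$ for one when she reports some $\v_i$ --- both exist and the outcomes do not depend on the choice, by the cited result of~\citeauthor{freeman2021journal} --- the two outcomes are $x_j=\min\{U_j(t^*),\max\{L_j(t^*),v^*_{i,j}\}\}$ and $y_j=\min\{U_j(s^*),\max\{L_j(s^*),v_{i,j}\}\}$, and the goal is to show $d(\v^*_i,\y)\ge d(\v^*_i,\x)$.

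Next I would split on whether $s^*\ge t^*$ or $s^*\le t^*$; these cases are symmetric, so suppose $s^*\ge t^*$, hence $L_j(s^*)\ge L_j(t^*)$ and $U_j(s^*)\ge U_j(t^*)$ for every $j$. The heart of the proof is the monotonicity lemma: for every coordinate $j$ with $y_j<x_j$ we have $v^*_{i,j}\ge x_j$. Indeed, $x_j=\min\{U_j(t^*),\max\{L_j(t^*),v^*_{i,j}\}\}>v^*_{i,j}$ forces $x_j=L_j(t^*)$; but then $y_j=\min\{U_j(s^*),\max\{L_j(s^*),v_{i,j}\}\}\ge L_j(s^*)\ge L_j(t^*)=x_j$, contradicting $y_j<x_j$. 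Informally: once voter $i$'s reported coordinate lies below the interval, the outcome sits at its left endpoint, and raising the feasibility parameter only pushes that endpoint up, so the outcome cannot have dropped. This lemma --- where the clamp formula and the feasibility-driven monotonicity of the endpoints interact --- is the step I expect to be the main obstacle.

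Granting the lemma, a short case analysis gives, for every coordinate $j$,
\[
 \absolute{v^*_{i,j}-y_j}-\absolute{v^*_{i,j}-x_j}\ \ge\ x_j-y_j .
\]
If $y_j<x_j$, the lemma gives $v^*_{i,j}\ge x_j>y_j$ and both sides equal $x_j-y_j$; if $y_j=x_j$ both sides are $0$; and if $y_j>x_j$ one verifies the inequality directly in the three sub-cases $v^*_{i,j}\le x_j$, $x_j<v^*_{i,j}\le y_j$, and $v^*_{i,j}>y_j$. Summing over $j\in[m]$ and using $\sum_j x_j=\sum_j y_j=1$ (both $\x$ and $\y$ lie in $\mathcal{D}(m)$) yields $d(\v^*_i,\y)-d(\v^*_i,\x)\ge\sum_j(x_j-y_j)=0$, which is truthfulness. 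The case $s^*\le t^*$ runs identically with the roles of $\x$ and $\y$ in the lemma interchanged: now $y_j>x_j$ forces $v^*_{i,j}\le x_j$, one obtains $\absolute{v^*_{i,j}-y_j}-\absolute{v^*_{i,j}-x_j}\ge y_j-x_j$ for all $j$, and summation again gives a nonnegative difference.

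The genuinely delicate points are (i) making the clamp formula precise under ties among votes and phantoms (the order-statistic description still holds, but needs care) and (ii) the monotonicity lemma above. Everything else is bookkeeping; and note that the argument is intrinsically $\ell_1$ --- it is the coordinate-separability of $d$ together with the single normalization $\sum_j x_j=\sum_j y_j=1$ that lets the per-coordinate bounds sum to $0$.
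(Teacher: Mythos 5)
The paper does not supply its own proof of this theorem; it is cited from \citet{freeman2021journal} and stated without argument. So I can only assess your proposal on its own merits, and compare it informally with the proof strategy from the cited work.

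Your argument is correct. The clamp identity $\median{\cdot}=\min\{U_j(t),\max\{L_j(t),a\}\}$ with $L_j(t)\le U_j(t)$ the $n$-th and $(n+1)$-st order statistics of the $2n$ fixed values is right (there is nothing to worry about under ties, since order statistics are well defined with repetitions), and the monotonicity of $L_j,U_j$ in $t$ follows at once from the monotonicity of the phantoms. The key lemma is also verified cleanly: if $s^*\ge t^*$ and $y_j<x_j$, then $x_j>v^*_{i,j}$ would force $x_j=L_j(t^*)$, whence $y_j\ge L_j(s^*)\ge L_j(t^*)=x_j$, a contradiction. The coordinate-wise inequality $\absolute{v^*_{i,j}-y_j}-\absolute{v^*_{i,j}-x_j}\ge x_j-y_j$ is exact in the cases $y_j\le x_j$ (using the lemma when strict) and follows from the triangle inequality when $y_j>x_j$; summing and using $\sum_j x_j=\sum_j y_j=1$ finishes the case $s^*\ge t^*$, and the case $s^*\le t^*$ is indeed symmetric with the roles of $L$ and $U$ swapped. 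One small presentational remark: since the $y_j>x_j$ branch is handled by the plain triangle inequality anyway, only the $y_j<x_j$ branch really needs the monotonicity lemma, so you can trim the case analysis. Conceptually, your approach is the same one \citeauthor{freeman2021journal} use (coordinate-wise generalized median with a global feasibility parameter, monotonicity of $t^*$, then a per-coordinate accounting summed against the simplex constraint), so there is no methodological divergence, just a tidy reconstruction.
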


    Each median in Definition \ref{def:moving_phantoms} can be computed using a sorted array with $2n+1$ slots, numbered from $1$ to $2n+1$. The median value is located in slot $n+1$. Throughout this paper we refer to the slots $1$ to $n$ as the \emph{lower} slots, and $n+2$ to $2n+1$ as the \emph{upper} slots.

	
    \noindent For a given preference profile $\V$, let
    $$\bar{\V}=\left(\frac{1}{n}\sum_{i \in [n]} v_{i,j} \right)_{j \in
      [m]}$$  be the \emph{proportional division}.  A
    \emph{single-minded} voter is a voter $i \in [n]$ such that
    $v_{i,j}=1$ for some project $j \in [m]$.  A budget aggregation
    mechanism is called \emph{proportional}, if for any preference
    profile $\V$ consisted solely by single-minded voters, it holds
    $f(\V)=\bar{\V}$.

    For a given budget aggregation mechanism $f$ and a preference profile $\V$, we define the $f\ell_1$-loss as the $\ell_1$ distance between the outcome $f(\V)$ and the proportional division $\bar{\V}$, i.e.
	
	\begin{align}
	\ell(\V)= d( f(\V),\bar{\V} ) = \sum_{j \in [m] } \absolute{ f_j(\V)-\bar{\V}_j }.
	\end{align}
	
	We say that a budget aggregation mechanism is $\alpha$-approximate when the $\ell_1$-loss for any preference profile is no larger than $\alpha$. We note that no mechanism can be more than $2$-approximate, as the $\ell_1$ distance between any two arbitrary divisions is at most $2$.

    \section{Upper Bounds}\label{sec:upperbounds}
	In this section we present mechanisms with small approximation
    guarantees for $m=2$ and $m=3$. As we will see later on, the guarantee for $2$ projects is optimal for any truthful budget aggregation mechanism. For the case of $3$ projects, our guarantee is practically optimal, when we are confined in the family of moving phantom mechanisms.
    
    \subsection{Two projects}\label{ssec:two_projects}

    For the case of two projects, we focus on the \emph{Uniform
      Phantom} mechanism~\cite{freeman2021journal}, for which we show a $1/2$-approximation. This mechanism is proven to be the unique truthful and proportional mechanism, when aggregating a budget between $2$ projects.
      
     The Uniform Phantom mechanism places $n+1$ phantoms uniformly
    over the $[0,1]$ line, i.e. $$f_j=\median{\V_{i \in
        [n],j},(k/n)_{k \in \zerotok{n}}},$$ for $j \in \{1,2\}$. Later,
    in Theorem \ref{thm:truthful_lowerbound}, we show that $1/2$
    is the best approximation which can be achieved by any truthful mechanism.

    \begin{theorem}\label{thm:Uniform2projects}
	    For $m=2$, the Uniform Phantom mechanism is $1/2$-approximate.
    \end{theorem}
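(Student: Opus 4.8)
The plan is to reduce the statement to a one-dimensional estimate on a single median value, and then to finish with an elementary calculus bound. For $m=2$ a profile $\V$ is determined by the first coordinates $a_i := v_{i,1}\in[0,1]$, the proportional division is $(\bar a,\,1-\bar a)$ with $\bar a=\frac1n\sum_i a_i$, and $m^\star := f_1(\V)$ is the median of the $2n+1$ numbers $a_1,\dots,a_n,0,\tfrac1n,\dots,\tfrac nn$. Since the phantom multiset $\{0,\tfrac1n,\dots,\tfrac nn\}$ is symmetric about $\tfrac12$ and $v_{i,2}=1-v_{i,1}$, I would first observe that $f_2(\V)=1-f_1(\V)$; this makes the outcome feasible and gives $\ell(\V)=|f_1(\V)-\bar a|+|f_2(\V)-(1-\bar a)|=2\,|m^\star-\bar a|$. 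So the whole theorem comes down to showing $|m^\star-\bar a|\le\tfrac14$.

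Next I would convert the defining property of the median into counting inequalities on the votes. Because $m^\star$ sits in slot $n+1$ of the sorted array of $2n+1$ values, at least $n+1$ of them are $\ge m^\star$ and at least $n+1$ are $\le m^\star$. Among the $n+1$ phantoms, the number that are $\ge m^\star$ equals $n-\lceil nm^\star\rceil+1\le n(1-m^\star)+1$, and the number that are $\le m^\star$ equals $\lfloor nm^\star\rfloor+1\le nm^\star+1$. Subtracting these from $n+1$, I get that at least $nm^\star$ of the votes satisfy $a_i\ge m^\star$ and at least $n(1-m^\star)$ of them satisfy $a_i\le m^\star$.

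From the first count, retaining only the large votes and discarding the rest (all $\ge 0$) gives $\sum_i a_i\ge (nm^\star)\cdot m^\star$, hence $\bar a\ge (m^\star)^2$. From the second count, at most $nm^\star$ votes exceed $m^\star$ (each bounded by $1$) while the remaining at least $n(1-m^\star)$ votes are bounded by $m^\star$, so $\sum_i a_i\le n(1-m^\star)\,m^\star+nm^\star\cdot 1=n\bigl(1-(1-m^\star)^2\bigr)$, hence $\bar a\le 1-(1-m^\star)^2$. Rearranging, $m^\star\le\sqrt{\bar a}$ and $m^\star\ge 1-\sqrt{1-\bar a}$, so
\[
|m^\star-\bar a|\le\max\bigl\{\sqrt{\bar a}-\bar a,\ \sqrt{1-\bar a}-(1-\bar a)\bigr\}\le\tfrac14,
\]
since $g(t)=\sqrt t-t$ attains maximum $\tfrac14$ on $[0,1]$ (at $t=\tfrac14$). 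Thus $\ell(\V)=2|m^\star-\bar a|\le\tfrac12$.

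The proof is short and I do not expect a genuine obstacle; the only point needing care is the counting step — correctly handling ties in the median and the floor/ceiling terms for the phantoms — together with the realization that it is exactly the ``a large median forces many large votes, which drags the mean up toward it'' phenomenon (and its mirror image on the low side) that yields the quadratic bounds $\bar a\ge(m^\star)^2$ and $\bar a\le 1-(1-m^\star)^2$, which are then matched exactly against $\sqrt t-t\le\tfrac14$. Both bounds are tight simultaneously: for $n=2$ with votes $(\tfrac12,1)$ one gets $\bar a=\tfrac34$, $m^\star=\tfrac12$, and $\ell(\V)=\tfrac12$, so the constant $\tfrac12$ cannot be improved for this mechanism.
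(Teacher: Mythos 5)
Your proof is correct, and it takes a cleaner route than the paper's own argument, though both rest on the same underlying counting idea. The paper fixes the smallest phantom index $k$ with $x\le k/n$, deduces the counting inequality $\tfrac kn\,x\le\bar v\le\tfrac{n-k}{n}x+\tfrac{k-1}{n}+\cdots$, and then splits into two cases (median is a phantom vs.\ median is a voter's report) before bounding the result by $\tfrac kn(1-\tfrac kn)\le\tfrac14$. You instead convert the median's slot-$(n+1)$ property directly into lower bounds on the number of votes on each side of $m^\star$ (at least $nm^\star$ votes $\ge m^\star$ and at least $n(1-m^\star)$ votes $\le m^\star$), obtaining the symmetric quadratic sandwich $(m^\star)^2\le\bar a\le 1-(1-m^\star)^2$ and finishing with the elementary $\sqrt t-t\le\tfrac14$. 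This avoids the case-split entirely, makes the symmetry between the two directions transparent, and the extremal bound falls out of a one-line calculus fact rather than an optimization over an integer index. Your argument also packages the counting more uniformly via floor/ceiling of $nm^\star$, which handles ties automatically. The tightness example at the end is a nice addition not present in the paper's proof.
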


    \begin{proof}
      Let $f$ be the Uniform Phantom mechanism, and let $\V$ be a
      preference profile. Let $f(\V)=(x,1-x)$ and
      $\bar{\V}=(\bar{v},1-\bar{v})$ for some $x \in [0,1]$ and $\bar{v} \in [0,1]$. The loss of the mechanism
      for $\V$ is
    	\begin{align}
    	\ell(\V)=2\absolute{x-\bar{v}}.
    	\end{align}

	Let ${k}\in \zerotok{n}$ be the minimum phantom index such that $ x \leq \frac{{k}}{n} $. This implies that the phantoms with indices $k,...,n$ are located in the slots $n+1$ to $2n+1$. These phantoms are exactly $n+1-{k}$ i.e. exactly ${k}$ voters' reports are located in the same area. Since all values in these slots are at least equal to the median we get that
	\begin{align}\label{eq:m2:1}
	\frac{{k}}{n} \cdot x \leq \bar{v} & \leq \frac{n-{k}}{n} \cdot  x + \frac{{k-1}}{n} + \frac{1}{n}\cdot \mathbbm{1}\left\{x=k/n\right\} \nonumber \\& + \frac{x}{n} \cdot \mathbbm{1}\left\{x<k/n\right\}
	\end{align}
	
	The first inequality holds, since exactly $k$ voters' reports have value at least equal to the median $x$. For the second inequality, we note that exactly $n-k$ voters' reports have value at most $x$, while at least $k-1$ voters' reports can have value at most $1$. If the median is equal to $k/n$, we can safely assume that this is a phantom value, and there should be exactly $k$ values upper bounded by $1$. Otherwise, if the median is strictly smaller than $k/n$, then $x$ should be a voter's report and exactly $k-1$ voters' reports are located in the upper slots. 	
	
	By removing $x$ from both inequalities in \ref{eq:m2:1} we get:
	
	\begin{align}\label{eq:m2:2}
	\frac{{k}}{n} \cdot x - x\leq \bar{v} -x &\leq  \frac{{k-1}}{n} + \frac{1}{n}\cdot \mathbbm{1}\left\{x=k/n\right\} \nonumber \\&+ \frac{x}{n} \cdot \mathbbm{1}\left\{x<k/n\right\} - \frac{k}{n}\cdot x.
	\end{align}
	
	When the median is a phantom value, i.e. $x=\frac{{k}}{n}$, inequalities \ref{eq:m2:2} imply that
	
	\begin{align}
	\absolute{\bar{v}-x} &\leq \max\left\{ x\left(1-\frac{{k}}{n}\right) , \frac{{k}}{n} \left(1-x \right) \right\}   = \frac{k}{n}\left(1-\frac{k}{n}\right) \label{eq:m2:upperbound_x_phantom},
	\end{align} 
	
	which is maximized to for $k=n/2$ to a value no greater than $1/4$. When $x$ is a voter's report, i.e. $\frac{{k}-1}{n} <x<\frac{{k}}{n}$, inequalities \ref{eq:m2:2} imply that
	
	\begin{align}
	\absolute{\bar{v}-x} &\leq \max\left\{ x\left(1-\frac{{k}}{n}\right) , \frac{{k-1}}{n} \left(1-x \right) \right\} \nonumber \\
	& < \max \left\{ \frac{k}{n}\left(1-\frac{k}{n}\right),\frac{{k-1}}{n} \left(1-\frac{k-1}{x} \right) \right\}. \label{eq:m2:upperbound_x_report}
	\end{align} 
	
    Equations~\ref{eq:m2:upperbound_x_phantom} and~\ref{eq:m2:upperbound_x_report} are both upper bounded by $1/4$. The theorem follows. 
    \end{proof}

    	\subsection{Three Projects }\label{ssec:three_projects}
	
	In this subsection we provide a $(2/3+\epsilon)$-approximate truthful
        mechanism for some $\epsilon \leq 10^{-5}$. This mechanism belongs to the family of moving  phantom
        mechanisms, and it is also proportional. In the following, we
        describe the mechanism, and then we prove the approximation
        guarantee. Later, in Theorem
        \ref{thm:moving_phantoms_lower_bound}, we show that $2/3$ is
        the best possible guarantee among the class of moving phantom
        mechanisms.

        \subsubsection{The {\mechanismname} mechanism}

            The {\mechanismname} mechanism uses the phantom system $\mathcal{Y}^{\text{PU}}=\{y_k(t): k \in \zerotok{n}\}$, for which 
            
            \begin{align}\label{eq:NewMechanism:1}
                y_k(t)=\begin{cases}
                0 & \frac{k}{n} < \frac{1}{2} \\
                {\frac {4tk}{n}}-2\,t & \frac{k}{n} \geq \frac{1}{2} \\
                \end{cases}
            \end{align}
            for $t<1/2$, while
            
            \begin{align}\label{eq:NewMechanism:2}
                y_k(t)=\begin{cases}
                    \frac{k(2\,t-1)}{n} &  \frac{k}{n} < \frac{1}{2} \\
                    \frac{k(3-2\,t)}{n} -2 + 2\,t &  \frac{k}{n} \geq \frac{1}{2} \\
                \end{cases}
            \end{align}
            
            for $t \geq 1/2$.  This mechanism belongs to the family of
            moving phantom mechanisms\footnote{ Note that, as presented here, this mechanism does not entirely fit the Definition
              \ref{def:moving_phantoms} since $y_{k}(t)<1$, for all $k
              \in \zerotok{n-1}$. This can fixed easily however, with
              an alternative definition, where all phantom
              functions are shifted slightly to the left and a third
              set of linear functions is added, such that and
              $y_k(1)=1$ for all $k \in \zerotok{n}$. See Section~\ref{appdx:Inclusion} in the Appendix for a detailed explanation.}: each $y_k(t)$
            is a continuous, weakly increasing function, and $y_{k}(t)
            \geq y_{k-1}$(t) for $k \in [n-1]_0$ and any $t \in
            [0,1]$. To distinguish between the two type of phantoms, we call a phantom with index $k < n/2$
            a \emph{black} phantom, and a phantom with index $k \geq n/2$, a \emph{red} phantom (irrespectively of the value o $t$).
            
                    \begin{figure}[htb]
                        \centering
                        \begin{subfigure}[b]{0.4\linewidth}
                            \centering
                            \begin{tikzpicture}[xscale=5]
                            
                            \tikzstyle{vote} = [black, thick]
                            \tikzstyle{phantom} = [red, thick,dashed]
                            
                            \begin{scope}
                            \draw[thick,{Bar[scale=1/2]}-{Bar[scale=1/2]}] (0,-0.05) -- (1,-.05);
                            \node (y0) at (0,-0.35) {0};
                            \node (y1) at (1,-0.35) {1};
                            
                                \fill[fill=black!10] (0,0+0.6) rectangle (1,0.1);
                                \draw[vote] (0.375,0.6)--(0.375,0.1);
                                \draw[vote] (0.125,0.6)--(0.125,0.1);
                                \draw[vote] (0.6250,0.6)--(0.6250,0.1);
                                \draw[vote] (0.4375,0.6)--(0.4375,0.1);
                                \draw[vote] (0.375,0.6)--(0.375,0.1);
                                \node (p1) at (-0.1,0.3) {$j_1$};
                                \draw (0.375-0.02,0.6) rectangle (0.375+0.02,0.1);
                                
                                \fill[fill=black!10] (0,0+1.2) rectangle (1,0.7);
                                \draw[vote] (0.375,1.2)--(0.375,0.7);
                                \draw[vote] (0.5,1.2)--(0.5,0.7);
                                \draw[vote] (0.0625,1.2)--(0.0625,0.7);
                                \draw[vote] (0.5625,1.2)--(0.5625,0.7);
                                \draw[vote] (0.375,1.2)--(0.375,0.7);
                                \draw (0.375-0.02,1.2) rectangle (0.375+0.02,0.7);

                                \node (p2) at (-0.1,0.9) {$j_2$};
                                
                                \fill[fill=black!10] (0,0+1.8) rectangle (1,1.3);
                                \draw[vote] (0.25,1.8)--(0.25,1.3);
                                \draw[vote] (0.375,1.8)--(0.375,1.3);
                                \draw[vote] (0.3125,1.8)--(0.3125,1.3);
                                \draw[vote] (0,1.8)--(0,1.3);
                                \draw[vote] (0.25,1.8)--(0.25,1.3);
                                \draw (0.25-0.02,1.8) rectangle (0.25+0.02,1.3);
                                
                                \node (p3) at (-0.1,1.5) {$j_3$};
                                
                                \draw[phantom,black] (0,1.8)--(0,0.1);
                                \draw[phantom,black] (0,1.8)--(0,0.1);
                                \draw[phantom,black] (0,1.8)--(0,0.1);
                                \draw[phantom] (0.15,1.8)--(0.15,0.1);
                                \draw[phantom] (0.45,1.8)--(0.45,0.1);
                                \draw[phantom] (0.75,1.8)--(0.75,0.1);
                    
                            \end{scope}
                        \end{tikzpicture}
                        \caption{}
                        \label{fig:examples:1}
                        \end{subfigure}\hfill
                        \begin{subfigure}[b]{0.4\linewidth}
                            \centering
                            \begin{tikzpicture}[xscale=5]
                            \tikzstyle{vote} = [black, thick]
                            \tikzstyle{phantom} = [red, thick,dashed]
                            
                            \begin{scope}
                            \draw[thick,{Bar[scale=1/2]}-{Bar[scale=1/2]}] (0,-0.05) -- (1,-0.05);
                            \node (y0) at (0,-0.35) {0};
                            \node (y1) at (1,-0.35) {1};
                            
                            \fill[fill=black!10] (0,0+0.6) rectangle (1,0.1);
                            \draw[vote] (1,0.6)--(1,0.1);
                            \draw[vote] (0.5,0.6)--(0.5,0.1);
                            \draw[vote] (0.3333,0.6)--(0.3333,0.1);
                            \draw[vote] (0,0.6)--(0,0.1);
                            \draw[vote] (0.375,0.6)--(0.375,0.1);
                            \node (p1) at (-0.1,0.3) {$j_1$};
                            \draw (0.375-0.02,0.6) rectangle (0.375+0.02,0.1);
                            
                            \fill[fill=black!10] (0,0+1.2) rectangle (1,0.7);
                            \draw[vote] (0.0,1.2)--(0.0,0.7);
                            \draw[vote] (0.5,1.2)--(0.5,0.7);
                            \draw[vote] (0.5556,1.2)--(0.5556,0.7);
                            \draw[vote] (0.6667,1.2)--(0.6667,0.7);
                            \draw[vote] (0.375,1.2)--(0.375,0.7);
                            \draw (0.4125-0.02,1.2) rectangle (0.4125+0.02,0.7);

                            \node (p2) at (-0.1,0.9) {$j_2$};
                            
                            \fill[fill=black!10] (0,0+1.8) rectangle (1,1.3);
                            \draw[vote] (0.0,1.8)--(0.0,1.3);
                            \draw[vote] (0.0,1.8)--(0.0,1.3);
                            \draw[vote] (0.1111,1.8)--(0.1111,1.3);
                            \draw[vote] (0.25,1.8)--(0.25,1.3);
                            \draw[vote] (0.3333,1.8)--(0.3333,1.3);
                            \draw (0.2125-0.02,1.8) rectangle (0.2125+0.02,1.3);
                            
                            \node (p3) at (-0.1,1.5) {$j_3$};
                            
                            \draw[phantom,black] (0,1.8)--(0,0.1);
                            \draw[phantom,black] (0.1062,1.8)--(0.1062,0.1);
                            \draw[phantom,black] (0.2125,1.8)--(0.2125,0.1);
                            \draw[phantom] (0.4125,1.8)--(0.4125,0.1);
                            \draw[phantom] (0.7063,1.8)--(0.7063,0.1);
                            \draw[phantom] (1,1.8)--(1,0.1);
                            
                            \end{scope}
                            \end{tikzpicture}
                            \caption{}
                            \label{fig:examples:2}
                        \end{subfigure}%
                        \caption{Examples of the {\mechanismname} mechanism, with $5$ voters. The dashed lines correspond to the phantom values, the small rectangles correspond to the medians, and the thick lines correspond to the voters' reports.
                        In the first example $\v_{1}=\v_{2}=(3/8,3/8,1/4)$, $\v_{3}=(1/8,1/2,3/8)$, $\v_{4}=(7/16,9/16,0)$ and $\v_5=(5/8,1/16,5/16)$ and $t^*=3/8$, while
                        the final outcome is $(3/8,3/8,1/4)$.
                        In the second example, $\v_1=(1,0,0)$, $\v_2=(1/2,1/2,0)$, $\v_3=(0,2/3,1/3)$, $\v_4=(1/3,5/9,1/9)$, $\v_5=(3/8,3/8,1/4)$
                        and $t^*=49/64$.}
                        \label{fig:examples}
                    \end{figure}
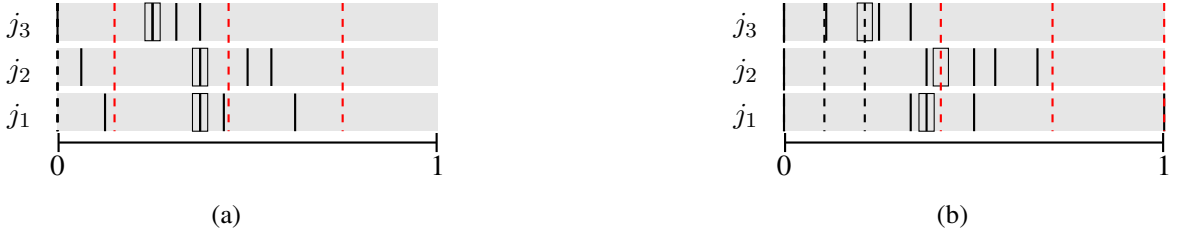

                    This mechanism can been seen as a combination of
                    two different mechanisms: For $t<1/2$, the
                    mechanism uses $n/2$ phantom values equal to $0$,
                    and the rest are uniformly located in
                    $[0,y_n(t)]$. For $t\geq 1/2$, the mechanism
                    assigns half of the phantoms uniformly in
                    $[0,y_{\floor{n/2}}(t)]$, while the rest are
                    uniformly distributed in $[y_{\ceil{n/2}},1]$.
                    See the examples of Figure \ref{fig:examples}, for an illustration.

                    We emphasize here that the {\mechanismname}
                    mechanism admits polynomial time-computation using
                    a binary search algorithm, since
                    $\mathcal{Y}^{\text{PU}}$ is a \emph{piecewise
                      linear phantom system} (see Theorem 4.7 from
                    \cite{freeman2021journal}).

                    We continue by showing that this mechanism is
                    proportional. Note that this does not necessarily
                    need to hold to show the desired approximation
                    guarantee, but it is a nice extra feature of our
                    mechanism.
            
            \begin{theorem}
            	The {\Mechanismname} mechanism is proportional.
            \end{theorem}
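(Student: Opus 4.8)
The plan is to exhibit one feasible value of the time parameter $t^*$ at which the phantom system $\mathcal{Y}^{\text{PU}}$ degenerates into the uniform configuration $\{k/n : k \in \zerotok{n}\}$, and then to verify that at this configuration the coordinate-wise medians already sum to $1$ and match $\bar{\V}$ entry by entry. Since \citet{freeman2021journal} guarantee that the outcome of a moving phantom mechanism is independent of which feasible $t^*$ is chosen, this is enough; in particular no monotonicity analysis of the feasibility set~(\ref{eq:feasibletstar}) is needed.

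Concretely, I would fix a profile $\V$ consisting solely of single-minded voters, and for each project $j$ let $n_j$ be the number of voters with $v_{i,j}=1$, so that $\sum_{j} n_j = n$ and $\bar{\V}_j = n_j/n$. First I would evaluate the phantoms at $t=1$: since $1\ge 1/2$ we use~(\ref{eq:NewMechanism:2}), and both the black branch ($\frac{k(2t-1)}{n}$) and the red branch ($\frac{k(3-2t)}{n}-2+2t$) simplify at $t=1$ to $y_k(1)=k/n$, so the phantom multiset is exactly $\{k/n : k\in\zerotok{n}\}$. Then, for each $j$, I would read off $\median{\V_{i\in[n],j},(y_k(1))_{k\in\zerotok{n}}}$ from the sorted array of $2n+1$ entries: it contains $n-n_j$ copies of $0$ and $n_j$ copies of $1$ from the voters, plus one copy of each $\ell/n$, $\ell=0,\dots,n$, from the phantoms. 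A slot count then shows that value $0$ fills the first $n-n_j+1$ slots, the values $1/n,\dots,(n-1)/n$ fill the next $n-1$ slots, and value $1$ fills the last $n_j+1$ slots; hence slot $n+1$ holds $n_j/n$ in every case, including the degenerate cases $n_j=0$ and $n_j=n$. Summing over $j$ gives $\sum_j n_j/n = 1$, so $t=1$ is feasible in~(\ref{eq:feasibletstar}), and substituting it into~(\ref{eq:def:moving_phantoms}) yields $f^{\mathcal{Y}^{\text{PU}}}(\V)=(n_1/n,\dots,n_m/n)=\bar{\V}$.

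The only points needing care are the slot bookkeeping in the median count (and the two boundary cases $n_j\in\{0,n\}$), and the technicality flagged in the footnote: strictly speaking $y_k(1)=k/n<1$ for $k<n$, so the mechanism as written in~(\ref{eq:NewMechanism:1})--(\ref{eq:NewMechanism:2}) does not literally meet the phantom-system requirement $y_k(1)=1$. I would handle this either by running the argument verbatim on the mechanism exactly as stated (for which $t=1$ is still an admissible parameter), or, for the repaired version of Section~\ref{appdx:Inclusion}, by observing that the same uniform configuration $\{k/n:k\in\zerotok{n}\}$ is attained at the parameter value where the added third family of linear pieces begins, and repeating the identical counting argument there.
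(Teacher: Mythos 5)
Your proof is correct and takes essentially the same approach as the paper: evaluate the phantoms at $t=1$ to obtain the uniform configuration $\{k/n\}$, then observe the median in each project equals the fraction of single-minded voters supporting it. Your version is merely more explicit about the slot-count verification, the boundary cases, and the feasibility of $t=1$, all of which the paper leaves implicit.
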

            
            \begin{proof}
            	Consider any preference profile which consists exclusively of single-minded voters. Note that by using $t=1$, the phantom with index $k$ has the value $k/n$, for any $k \in \zerotok{n}$. Let that $a_j \in \zerotok{n}$ be the number of $1$-valued proposals  on project $j$. Consequently, $n-a_j$ is the number of $0$-valued proposals. Then the median in each project is exactly the phantom value $a_j/n$, i.e. the proportional allocation. 
            \end{proof}

    	\subsubsection{Upper bound}\label{ssec:upperbound3}

        \paragraph{Overview}
        The analysis for the upper bound is substantially more involved than the analysis for the case of two projects. Here we present an outline of the proof. To help the analysis, we assume at this point that no zero values exists in the aggregated division. We analyze the case where zeros exists in the outcome in Section~\ref{appdx:zeros} in the Appendix.
        
        We first provide a characterization of the worst-case
        preference profiles (i.e. profiles that may yield the maximum
        loss) in Theorem \ref{thm:WorstCaseInstances}. This
        characterization states that essentially all worst-case
        preference profiles belong to a specific family, which we call
        \emph{\profilename} profiles (see Definition
        \ref{def:worst_case_profiles}). The family of {\profilename}
        profiles depends crucially on the moving phantom mechanism used. Given a
        moving phantom mechanism, Lemma \ref{lem:NecessaryConditions}
        characterizes further the family of {\profilename} for that
        mechanism.
        
        We combine Theorem~\ref{thm:WorstCaseInstances} and Lemma~\ref{lem:NecessaryConditions} to build a Non-Linear Program
        (NLP; see Figure~\ref{fig:NLP}) which explores the space
        created by the worst-case instances. Finally we present the
        optimal solution of the NLP in Theorem~\ref{thm:PUmechanism_upper_bound}.  
        
    	\paragraph{Characterization of Worst-Case
          Instances}\label{par:worstcaseinstances}
    	We concentrate on a family of preference profiles which are
        maximal (with respect to the loss) in a local sense: A
        preference profile $\V$ is \emph{locally maximal} if, for all
        voters $i \in [n]$, it holds that $\ell(\V) \geq
        \ell(\V_{-i},\v'_i)$ for any division $\v'_i$. In other words,
        in such profiles, any single change in the voting divisions
        cannot increase the $\ell_1$-loss. Inevitably, any profile
        which may yield the maximum loss belongs to this family, and
        we can focus our analysis on such profiles. Our
        characterization shows that the class of locally maximal
        preference profiles and the class of {\profilename} profiles
        are equivalent with respect to $\ell_1$-loss, for any moving phantom mechanism.
    
    	\begin{definition}[{\profilename} profiles]\label{def:worst_case_profiles}
    		For a given moving phantom mechanism $f$, a preference profile $\V$ is called a \emph{\profilename}\, profile if every voter $i \in [n]$ belongs to one of the following classes:
    		
    		\begin{enumerate}
    			\item \emph{fully-satisfied} voters, where voter $i$ proposes a division equal to the outcome of the mechanism, i.e. $f(\V)=\v_i$, 
    			\item \emph{double-minded} voters,  where voter $i$ agrees with the outcome in one project, i.e. $v_{i,j}=f_j$ for some $j \in [3]$, while $v_{i,j'}=1-f_j$ for some different project $j'$, and
    			\item \emph{single-minded} voters, where $v_{i,j}=1$ for some project $j \in [3]$.
    		\end{enumerate}
    		   
    	\end{definition}
    	
    	\noindent To build intuition, we provide the following example:
    
    	\begin{example}[{\profilename} profile]
    		Consider a moving phantom mechanism $f$, and the preference profile $\V$ with $5$ voters:
    		$\v_1=(1,0,0)$ and $\v_2=(0,0,1)$, which are single-minded voters, $\v_3=(1/2,1/2,0)$, $\v_4=(0,1/4,3/4)$ and $\v_5=(1/2,1/4,1/4)$. Then, if $f(\V)=\v_5$, the preference profile $\V$ is a {\profilename} profile for mechanism $f$. Voter $5$ is a fully-satisfied voter, while voters $3$ and $4$ are double-minded voters.
    	\end{example}
    	
    	In Theorem~\ref{thm:WorstCaseInstances} which follows, we show that for any locally maximal preference profile $\V$, there exists a {\profilename} profile $\hat{\V}$ (not necessarily different than $\V$) for which $\ell(\hat{\V}) \geq  \ell(\V)$. Therefore, we can search for the maximum $\ell_1$-loss by focusing only on the profiles described in Definition~\ref{def:worst_case_profiles}. The following lemma is an important stepping stone for the proof of Theorem~\ref{thm:WorstCaseInstances}. 
    	
    	\begin{lemma}\label{lem:WorstCaseInstances:helper}
    	Let $f$ be a moving phantom mechanism for $m=3$, $\V$ a preference profile and $i \in [n]$, a voter which is neither single-minded, double-minded nor fully-satisfied. Let $\v_i$ be voter's $i$ proposal. Then there exists a division $\v'_i$ such that $\ell(\V_{-i},\v'_i) \geq \ell(\V)$. Furthermore, when $\ell(\V_{-i},\v'_i) = \ell(\V)$  the division $\v'_i$ is double-minded, single-minded or fully-satisfied, and $f(\V)=f(\V_{-i},\v'_i)$.
    	\end{lemma}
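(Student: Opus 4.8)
The plan is to confine voter $i$'s alternative proposals to a polytope on which the mechanism's output is frozen at $f(\V)$, and then use convexity of the loss to push the proposal to a vertex, which will turn out to be of one of the three special types. Classify the projects by the position of $\v_i$ relative to $f(\V)$: let $L=\{j\in[3]:v_{i,j}<f_j(\V)\}$, $H=\{j\in[3]:v_{i,j}>f_j(\V)\}$ and $E=\{j\in[3]:v_{i,j}=f_j(\V)\}$. Since $i$ is not fully-satisfied, $\v_i\neq f(\V)$, and because $\sum_j(v_{i,j}-f_j(\V))=0$ with not all summands zero, both $L$ and $H$ are nonempty; hence $|E|\le 1$ and, up to relabelling, $(|L|,|H|,|E|)\in\{(1,1,1),(2,1,0),(1,2,0)\}$. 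Define
$$S=\Bigl\{\v'\in\mathcal{D}(3):\ v'_j\le f_j(\V)\text{ for }j\in L,\ v'_j\ge f_j(\V)\text{ for }j\in H,\ v'_j=f_j(\V)\text{ for }j\in E\Bigr\},$$
a nonempty compact polytope with $\v_i\in S$.

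The main step — and the one I expect to be the crux — is to show that $f(\V_{-i},\v')=f(\V)$ for every $\v'\in S$. I would argue through the sorted-array description of the medians: fix a feasible $t^*$ for $\V$, so that on each project $j$ the value $f_j(\V)$ occupies slot $n+1$ of the sorted array formed by the $n$ reports and the phantoms $y_0(t^*),\dots,y_n(t^*)$. Passing from $\V$ to $(\V_{-i},\v')$ replaces, on each project $j$, only the single entry $v_{i,j}$ by $v'_j$, and by the definition of $S$ both of these lie weakly on the same side of $f_j(\V)$. A short counting argument then shows the slot-$(n+1)$ value is unchanged: for $j\in L$, deleting an entry $<f_j(\V)$ and inserting one that is $\le f_j(\V)$ cannot raise the number of entries strictly below $f_j(\V)$ above $n$, and leaves the number strictly above $f_j(\V)$ untouched, so slot $n+1$ still holds $f_j(\V)$; the case $j\in H$ is symmetric and for $j\in E$ nothing moves. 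Hence the project-medians at $t^*$ still sum to $1$, so $t^*$ remains a feasible value for $(\V_{-i},\v')$, and since a moving phantom mechanism's outcome does not depend on which feasible $t^*$ is used, $f(\V_{-i},\v')=f(\V)$.

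Given this invariance, on $S$ the loss equals $\ell(\V_{-i},\v')=\sum_{j\in[3]}\bigl|f_j(\V)-\bar{\V}_j+\tfrac1n v_{i,j}-\tfrac1n v'_j\bigr|$, a sum of absolute values of affine functions of $\v'$, hence convex on the polytope $S$; it therefore attains its maximum over $S$ at some vertex $\v'_i$, and since $\v_i\in S$ we get $\ell(\V_{-i},\v'_i)\ge\ell(\V_{-i},\v_i)=\ell(\V)$. It then remains to enumerate the vertices of $S$ in the three cases. If $(|L|,|H|,|E|)=(1,1,1)$ with $E=\{c\}$, then $S$ is a segment with endpoints $f(\V)$ and $(0,1-f_c(\V),f_c(\V))$, i.e.\ fully-satisfied and double-minded. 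If $(|L|,|H|,|E|)=(2,1,0)$, then $S$ is a rectangle whose vertices are $f(\V)$, the single-minded proposal placing all mass on the unique project in $H$, and two double-minded proposals (one coordinate equal to $f_j(\V)$, another equal to $1-f_j(\V)$, the third $0$). If $(|L|,|H|,|E|)=(1,2,0)$, then $S$ is a triangle with vertices $f(\V)$ and two such double-minded proposals. In every case each vertex of $S$ is single-minded, double-minded, or equal to $f(\V)=f(\V_{-i},\v'_i)$, so the constructed $\v'_i$ is always of one of the three required types and satisfies $f(\V_{-i},\v'_i)=f(\V)$; this in particular covers the case $\ell(\V_{-i},\v'_i)=\ell(\V)$, which is what the ``furthermore'' claim asks for.
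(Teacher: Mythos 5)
Your proof is correct, and it takes a genuinely different route from the paper's. The paper proves an auxiliary claim that, given the three divisions $f(\V)$, $\v_i$, $\bar\V$, selects a pair of projects where these are ordered in one of two prescribed ways, then perturbs $\v_i$ along those two coordinates and checks in several sub-cases (e.g.\ $v_{i,2}>0$ vs.\ $v_{i,2}=0$) that the loss does not decrease while voter $i$ can be pushed onto one of the three special types. Your argument instead freezes the mechanism's output on the entire face $S$ of $\mathcal{D}(3)$ cut out by the sign pattern of $\v_i-f(\V)$: your slot-counting argument that each per-project median stays at $f_j(\V)$ when a report is replaced by another on the same weak side is correct, so $t^*$ remains feasible, and the well-definedness of moving phantom mechanisms (independence of the chosen feasible $t^*$, which the paper establishes and you should cite explicitly) then gives $f(\V_{-i},\v')=f(\V)$ on all of $S$. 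The loss restricted to $S$ is therefore a sum of absolute values of affine functions of $\v'_i$ alone, hence convex and maximized at a vertex, and your enumeration of the vertices of the segment/rectangle/triangle in the three cases is exactly right and always yields a fully-satisfied, double-minded, or single-minded division. This is cleaner than the paper's perturbation-and-case-analysis argument and gives the slightly stronger conclusion that the constructed $\v'_i$ is \emph{always} of one of the three types with $f$ preserved, which subsumes the ``furthermore'' part rather than handling it as a separate case.
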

    
        	\begin{proof}
		Let $\bar{v}_j = \sum_{i=1}^n v_{i,j}$ for $j \in [3]$ and $\f=f(\V)$.
		We will prove the lemma by constructing the division $\v'_i$. 
		For that, we alter the proposals only on two projects, and we keep the proposal for the third project invariant, in such a way that $\v'_i$ is a valid division. Our first attempt is to strictly increase the loss. When we fail to do that, we create $\v'_i$ in such a way that the loss is not decreasing (comparing $\V$ to $(\V_{-i},\v'_i)$). 
		
		 The following claim, allows us to focus our analysis only on two cases.

		\begin{claim}\label{clm:GuaranteedTwoTypes} 
			Let that $\x,\y,\z$ be valid divisions over $m$ projects, and let that $\x \neq \y$. Then there exists a pair of projects such that either
			i) $x_j \leq \min\{y_j,z_j\}$ and $x_{j'} \geq  \max\{ y_{j'},z_{j'} \}$ or ii) $ y_j \leq x_j \leq z_j $ and $ z_{j'} \leq x_{j'} \leq y_{j'} $.
		\end{claim}
		
		\begin{proof}
			We firstly notice that there exists two projects, say $1$ and $2$, such that $x_1 < y_1$ and $y_2 < x_2$, since $\x \neq \y$. There exist four possible relations between $x_1,x_2$ and $z_1,z_2$: If $x_1 \leq z_1 $ and $ z_2 \leq x_2$ then condition (i) is satisfied. Similarly, if $z_1 \leq x_1$ and $x_2 \leq z_2$, condition (ii) is satisfied. If $x_1 \leq z_1 $ and $ x_2 \leq z_2$, we notice that there exists a different project, say project $3$, such that $z_3 \leq x_3$; otherwise $\sum_{j \in [m]} z_j>1$. We have now two cases, according to the relation between $y_3$ and $x_3$: when $y_3 \leq x_3$, condition (i) holds between projects $1$ and $3$; when $x_3 \leq y_3$, condition (ii)  holds between projects $2$ and $3$. Similar arguments holds for the case $z_1 \leq x_1 $ and $ z_2 \leq x_2$.
		\end{proof}
		
		By Claim \ref{clm:GuaranteedTwoTypes}, we can assume without loss of generality that either i) $f_1 \leq \min\{v_{i,1},\bar{v}_1\}$ and $f_{2} \geq  \max\{ v_{i,2},\bar{v}_{2} \}$ or ii) $ v_{i,1}\leq f_1 \leq v_{i,1} $ and $ \bar{v}_{2} \leq f_{2} \leq \bar{v}_{2}$.
		
		\textbf{Case (i)}: When $v_{i,2}>0$, we can increase the loss as follows: we move $v_{i,1}$ to  $v'_{i,1}=v_{i,1}+\epsilon$ and  $v_{i,2}$ to  $v'_{i,2}=v_{i,2}-\epsilon$,  for some $0<\epsilon \leq \min\{ v_{i,2},1-v_{i,1}\}$. This increases $\bar{v}_1$ to $\bar{v}'_1 = \bar{v}_j + \epsilon/n$ and decreases $\bar{v}_2$ to $\bar{v}'_2=\bar{v}_2 - \epsilon/n$. Note that these moves do not affect the outcome of $f$, since no voters' reports move from the lower to the upper slots or vice versa. Also, $\v'_{i,3}=\v_{i,3}$, hence $\bar{v}'_3=\bar{v}_3$. Thus $\ell(\V_{-i,}\v'_i)> \ell(\V) $.
		See also Figure~\ref{fig:increasing_loss_move}.
		
		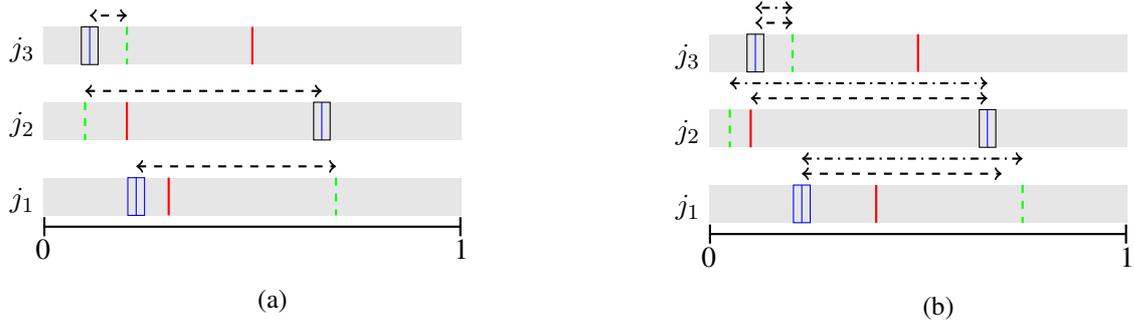
\begin{figure}
    \centering
    \begin{subfigure}{0.45\textwidth}
        \begin{tikzpicture}[xscale=5.5]
            \tikzstyle{voter} = [red, thick]
            \tikzstyle{mean} = [green, thick,dashed]
            \tikzstyle{median} = [blue]
            
            \begin{scope}
            \draw[thick,{Bar[scale=1/2]}-{Bar[scale=1/2]}] (0,-0.05) -- (1,-.05);
            \node (y0) at (0,-0.35) {0};
            \node (y1) at (1,-0.35) {1};
            
            \fill[fill=black!10] (0,0+0.6) rectangle (1,0.1);
            \draw[voter] (0.3,0.6)--(0.3,0.1);
            \draw[mean] (0.7,0.6)--(0.7,0.1);
            \draw[median] (0.2222,0.6)--(0.2222,0.1);
            \draw[median] (0.2222-0.02,0.6) rectangle (0.2222+0.02,0.1);
            \node (p1) at (-0.05,0.3) {$j_1$};
            \draw[thick,<->,dashed] (0.2222,0.75) -- (0.7,0.75);

            \fill[fill=black!10] (0,0+1.6) rectangle (1,1.1);
            \draw[voter] (0.2,1.6)--(0.2,1.1);
            \draw[mean] (0.1,1.6)--(0.1,1.1);
            \draw[median] (0.666,1.6)--(0.666,1.1);
            \draw (0.6666-0.02,1.6) rectangle (0.666+0.02,1.1);
            \node (p2) at (-0.05,1.3) {$j_2$};
            
            \draw[thick,<->,dashed] (0.1,1.75) -- (0.666,1.75);
            
            \fill[fill=black!10] (0,0+2.6) rectangle (1,2.1);
            \draw[voter] (0.5,2.6)--(0.5,2.1);
            \draw[mean] (0.2,2.6)--(0.2,2.1);
            \draw[median] (0.1111,2.6)--(0.1111,2.1);
            \draw (0.1111-0.02,2.6) rectangle (0.1111+0.02,2.1);
            \node (p3) at (-0.05,2.3) {$j_3$};
            
            \draw[thick,<->,dashed] (0.2,2.75) -- (0.1111,2.75);
            
            \end{scope}
            \end{tikzpicture}
            \caption{}
            \label{fig:increasing_loss_move:1}
    \end{subfigure}\hfill
    \begin{subfigure}{0.45\textwidth}
        \begin{tikzpicture}[xscale=5.5]
            \tikzstyle{voter} = [red, thick]
            \tikzstyle{mean} = [green, thick,dashed]
            \tikzstyle{median} = [blue]
            
            \begin{scope}
            \draw[thick,{Bar[scale=1/2]}-{Bar[scale=1/2]}] (0,-0.05) -- (1,-.05);
            \node (y0) at (0,-0.35) {0};
            \node (y1) at (1,-0.35) {1};
            
            \fill[fill=black!10] (0,0+0.6) rectangle (1,0.1);
            \draw[voter] (0.4,0.6)--(0.4,0.1);
            \draw[mean] (0.75,0.6)--(0.75,0.1);
            \draw[median] (0.2222,0.6)--(0.2222,0.1);
            \draw[median] (0.2222-0.02,0.6) rectangle (0.2222+0.02,0.1);
            \node (p1) at (-0.05,0.3) {$j_1$};
            \draw[thick,<->,dashed] (0.2222,0.75) -- (0.7,0.75);
            \draw[thick,<->,dashdotted] (0.2222,0.95) -- (0.75,0.95);

            \fill[fill=black!10] (0,0+1.6) rectangle (1,1.1);
            \draw[voter] (0.1,1.6)--(0.1,1.1);
            \draw[mean] (0.05,1.6)--(0.05,1.1);
            \draw[median] (0.666,1.6)--(0.666,1.1);
            \draw (0.6666-0.02,1.6) rectangle (0.666+0.02,1.1);
            \node (p2) at (-0.05,1.3) {$j_2$};
            
            \draw[thick,<->,dashed] (0.1,1.75) -- (0.666,1.75);
            \draw[thick,<->,dashdotted] (0.05,1.95) -- (0.666,1.95);
            
            \fill[fill=black!10] (0,0+2.6) rectangle (1,2.1);
            \draw[voter] (0.5,2.6)--(0.5,2.1);
            \draw[mean] (0.2,2.6)--(0.2,2.1);
            \draw[median] (0.1111,2.6)--(0.1111,2.1);
            \draw (0.1111-0.02,2.6) rectangle (0.1111+0.02,2.1);
            \node (p3) at (-0.05,2.3) {$j_3$};
            
            \draw[thick,<->,dashed] (0.2,2.75) -- (0.1111,2.75);
            \draw[thick,<->,dashdotted] (0.2,2.95) -- (0.1111,2.95);
            
            \end{scope}
            \end{tikzpicture}
            \caption{}
            \label{fig:increasing_loss_move:2}
    \end{subfigure}
    \caption[Proof of Lemma~\ref{lem:WorstCaseInstances:helper}: loss increasing move]{An example (Figure ~\ref{fig:increasing_loss_move:1}) where the loss can be increased by a single change in voter's $i$ proposed division. Note that voter $i$ is neither fully-satisfied, single-minded nor double-minded. Voter's $i$ proposals are depicted with solid vertical lines, the mean with dashed vertical lines and the outcome of the mechanism is depicted with vertical solid lines inside a rectangle. By moving the proposals (Figure~\ref{fig:increasing_loss_move:2}) of voter $i$ away for the median, the loss strictly increases. }
    \label{fig:increasing_loss_move}
\end{figure}

		If $v_{i,2}=0$, $i$ can be transformed to a single or double-minded voter, without decreasing the loss.
		Note that $v_{i,1}=1-v_{i,3}$. When $v_{i,3}\leq f_3$, $v_{i,3}$ moves to $v'_{i,3}=0$ and $v_{i,1}$ moves to $v'_{i,1}=1$ to create a single-minded division. When $v_{i,3} > f_3$, we can move $v_{i,3}$ to $v'_{i,3}=f_3$ and $v_{i,1}$ to $v'_{i,1}=1-f_3$ (note that $1-v_{i,3} < 1-f_3$) to create a double-minded division $\v'_i$.
		In any case, $v'_{i,1}=v_{i,1}+\epsilon$ and $v'_{i,3}=v_{i,3}-\epsilon$, where $\epsilon= v_{i,3}\cdot\mathbbm{1}\left\{ v_{i,3}\leq f_3 \right\} + (v_{i,3}-f_3)\cdot\mathbbm{1}\left\{ v_{i,3}> f_3 \right\}$.
		Also, $f(\V_{-i,},\v'_i)=f(\V)$ and $\bar{v}'_{2}=\bar{v}_{2}$. Thus,

		\begin{align}
		\ell(\V_{-i,}\v'_i)	&= \bar{v}_1 + \epsilon -f_1 + f_2-\bar{v}_2 \nonumber  + \absolute{\bar{v}_3  -  \epsilon - f_3} \nonumber \\
		& \geq \ell(\V). \nonumber
		\end{align}

		\noindent The inequality holds due to $\absolute{x}\geq x$ for $x \in \mathbb{R}$. See also Figure~\ref{fig:preserving_loss_move}.
		
		\begin{figure}[hb]
    \centering
    \begin{subfigure}{0.45\textwidth}
        \begin{tikzpicture}[xscale=5.5]
            \tikzstyle{voter} = [red, thick]
            \tikzstyle{mean} = [green, thick,dashed]
            \tikzstyle{median} = [blue]
            
            \begin{scope}
            \draw[thick,{Bar[scale=1/2]}-{Bar[scale=1/2]}] (0,-0.05) -- (1,-.05);
            \node (y0) at (0,-0.35) {0};
            \node (y1) at (1,-0.35) {1};
            
            \fill[fill=black!10] (0,0+0.6) rectangle (1,0.1);
            \draw[voter] (0.5,0.6)--(0.5,0.1);
            \draw[mean] (0.7,0.6)--(0.7,0.1);
            \draw[median] (0.2222,0.6)--(0.2222,0.1);
            \draw[median] (0.2222-0.02,0.6) rectangle (0.2222+0.02,0.1);
            \node (p1) at (-0.05,0.3) {$j_1$};
            \draw[thick,<->,dashed] (0.2222,0.75) -- (0.7,0.75);

            \fill[fill=black!10] (0,0+1.6) rectangle (1,1.1);
            \draw[voter] (0.0,1.6)--(0.0,1.1);
            \draw[mean] (0.1,1.6)--(0.1,1.1);
            \draw[median] (0.666,1.6)--(0.666,1.1);
            \draw (0.6666-0.02,1.6) rectangle (0.666+0.02,1.1);
            \node (p2) at (-0.05,1.3) {$j_2$};
            
            \draw[thick,<->,dashed] (0.1,1.75) -- (0.666,1.75);
            
            \fill[fill=black!10] (0,0+2.6) rectangle (1,2.1);
            \draw[voter] (0.5,2.6)--(0.5,2.1);
            \draw[mean] (0.2,2.6)--(0.2,2.1);
            \draw[median] (0.1111,2.6)--(0.1111,2.1);
            \draw (0.1111-0.02,2.6) rectangle (0.1111+0.02,2.1);
            \node (p3) at (-0.05,2.3) {$j_3$};
            
            \draw[thick,<->,dashed] (0.2,2.75) -- (0.1111,2.75);
            
            \end{scope}
            \end{tikzpicture}
            \caption{}
            \label{fig:preserving_loss_move:1}
    \end{subfigure}\hfill
    \begin{subfigure}{0.45\textwidth}
        \begin{tikzpicture}[xscale=5.5]
            \tikzstyle{voter} = [red, thick]
            \tikzstyle{mean} = [green, thick,dashed]
            \tikzstyle{median} = [blue]
            
            \begin{scope}
            \draw[thick,{Bar[scale=1/2]}-{Bar[scale=1/2]}] (0,-0.05) -- (1,-.05);
            \node (y0) at (0,-0.35) {0};
            \node (y1) at (1,-0.35) {1};
            
            \fill[fill=black!10] (0,0+0.6) rectangle (1,0.1);
            \draw[voter] (0.2222,0.6)--(0.2222,0.1);
            \draw[mean] (0.65,0.6)--(0.65,0.1);
            \draw[median] (0.2222,0.6)--(0.2222,0.1);
            \draw[median] (0.2222-0.02,0.6) rectangle (0.2222+0.02,0.1);
            \node (p1) at (-0.05,0.3) {$j_1$};
            \draw[thick,<->,dashed] (0.2222,0.75) -- (0.7,0.75);
            \draw[thick,<->,dashdotted] (0.2222,0.95) -- (0.65,0.95);

            \fill[fill=black!10] (0,0+1.6) rectangle (1,1.1);
            \draw[voter] (0.0,1.6)--(0.0,1.1);
            \draw[mean] (0.1,1.6)--(0.1,1.1);
            \draw[median] (0.666,1.6)--(0.666,1.1);
            \draw (0.6666-0.02,1.6) rectangle (0.666+0.02,1.1);
            \node (p2) at (-0.05,1.3) {$j_2$};
            
            \draw[thick,<->,dashed] (0.1,1.75) -- (0.666,1.75);
            \draw[thick,<->,dashdotted] (0.1,1.95) -- (0.666,1.95);
            
            \fill[fill=black!10] (0,0+2.6) rectangle (1,2.1);
            \draw[voter] (0.8888,2.6)--(0.8888,2.1);
            \draw[mean] (0.3,2.6)--(0.3,2.1);
            \draw[median] (0.1111,2.6)--(0.1111,2.1);
            \draw (0.1111-0.02,2.6) rectangle (0.1111+0.02,2.1);
            \node (p3) at (-0.05,2.3) {$j_3$};
            
            \draw[thick,<->,dashed] (0.2,2.75) -- (0.1111,2.75);
            \draw[thick,<->,dashdotted] (0.3,2.95) -- (0.1111,2.95);
            
            \end{scope}
            \end{tikzpicture}
            \caption{}
            \label{fig:preserving_loss_move:2}
    \end{subfigure}
    \caption[Proof of Lemma~\ref{lem:WorstCaseInstances:helper}: loss preserving move]{An example where the loss cannot be increased by a single change in voter's $i$ proposed division, without changing the outcome of the mechanism. Note that voter $i$ is neither fully-satisfied, single-minded nor double-minded. The voters proposals are depicted with solid vertical lines, the mean with dashed vertical lines and the outcome of the mechanism is depicted with vertical solid lines inside a rectangle. We transform the voter to a double minded voter, while the loss is preserved. }
    \label{fig:preserving_loss_move}
\end{figure}
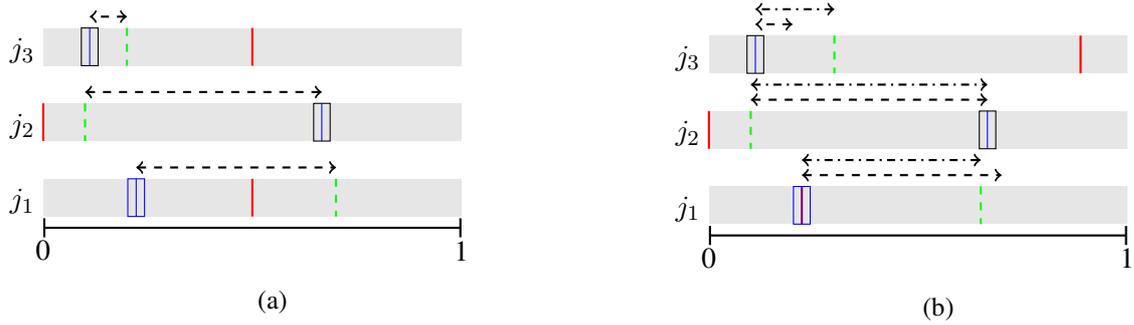
        
		\textbf{Case (ii)}: Recall that $ \bar{v}_1 \leq f_{1} \leq v_{i,1}$ and  and $v_{i,2} \leq f_{2} \leq \bar{v}_{2}$.
		If there exists some $0<\epsilon \leq \min\{v_{i,1}-f_1 ,f_{2} - v_{i,2} \}$ we can strictly increase the loss by $2\epsilon/n$ by moving $v_{i,1}$ to $v'_{i,j}-\epsilon$ and $v'_{i,2}=v_{i,2}+\epsilon$.
		When no such $\epsilon$ exists, either $f_1=v_{i,1}$ or $f_2=v_{i,2}$ (note that this cannot happen for both projects; this would lead to $\v_i=\f$).
		Assume that $v_{i,1}=f_1$. We firstly note that for  project $3$, $v_{i,3}\geq f_{3}$; otherwise, $\sum_{j \in [3]}{v_{i,j}}<1$. We also note that $v_{i,2}< f_{2}$ and $v_{i,3}>f_{3}$, otherwise $\v_i=\f$.
		We will transform $\v_i$ to a fully-satisfied voter, proposing $\v'_i$, without decreasing the loss. 
	    This is done by increasing $v_{i,2}$ to $v'_{i,2}=f_2$ and decreasing $v'_{i,3}$ to $v'_{i,3}=f_3$. Hence $\bar{v}'_2= \bar{v}_2 + \frac{f_2-v_{i,2}}{n}$ and $\bar{v}'_3= \bar{v}_3 - \frac{v_{i,3}-f_3}{n}=\bar{v}_3 - \frac{f_{3}-v_{i,2}}{n}$ (recall that $v_{i,1}=f1$). Note that $f(\V_{-i,},\v'_i)=f(\V)$ and $\bar{v}'_1=\bar{v}_1$. Let $\epsilon=\bar{v}_2 + \frac{f_2-v_{i,2}}{n}$ and 
	  
	    \begin{align}
		\ell(\V_{-i,}\v'_i)&= \bar{v}_1 - f_1 + f_2 + \epsilon -\bar{v}_2 \nonumber  + \absolute{\bar{v}_3  -  \epsilon - f_3} \nonumber \\
		& \geq \ell(\V). \nonumber
		\end{align}

	    \noindent A symmetric argument holds when $v_{i,2}=f_{2}$.
\end{proof}
    
    	\begin{theorem}\label{thm:WorstCaseInstances}
    		Let $f$ be a moving phantom mechanism for $m=3$ and let $\V$ be a locally maximal preference profile, i.e. $\ell(\V) \geq \ell(\V_{-i},\v'_i)$, for any $i \in [n]$ and any division $\v'_i$. Then, there exists a {\profilename} profile $\hat{\V}$ such that $\ell(\hat{\V}) \geq \ell(\V)$.
    	\end{theorem}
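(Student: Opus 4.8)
The plan is to prove the statement by first reducing to a profile that maximises the $\ell_1$-loss over the whole domain, and then repeatedly invoking Lemma~\ref{lem:WorstCaseInstances:helper} to turn ``bad'' voters (those that are neither single-minded, double-minded nor fully-satisfied) into voters of one of the three allowed types, one at a time. For the reduction, I would note that it suffices to prove the claim for a profile $\V^\star$ that maximises $\ell$ over all of $\mathcal{D}(3)^n$: such a $\V^\star$ exists since $\mathcal{D}(3)^n$ is compact and $\ell(\mathbf{W})=d(f(\mathbf{W}),\bar{\mathbf{W}})$ is continuous in $\mathbf{W}$ (a moving phantom mechanism is continuous, and the proportional division depends continuously on the profile), it satisfies $\ell(\V^\star)\ge\ell(\V)$, and it is in particular locally maximal; hence a \profilename{} profile $\hat{\V}$ with $\ell(\hat{\V})\ge\ell(\V^\star)$ also has $\ell(\hat{\V})\ge\ell(\V)$, and I may assume from now on that $\V$ itself is a global maximiser.

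I would then run the following cleanup loop, maintaining the invariant that the current profile $\mathbf{W}$ (initialised to $\V$) is a global maximiser of $\ell$. If $\mathbf{W}$ is a \profilename{} profile, stop. Otherwise, by Definition~\ref{def:worst_case_profiles}, some voter $i$ is bad, so Lemma~\ref{lem:WorstCaseInstances:helper} provides a division $\v'_i$ with $\ell(\mathbf{W}_{-i},\v'_i)\ge\ell(\mathbf{W})$. Since $\mathbf{W}$ is a global maximiser, this must be an equality, so the ``furthermore'' clause of the lemma gives that $\v'_i$ is single-minded, double-minded or fully-satisfied and that $f(\mathbf{W}_{-i},\v'_i)=f(\mathbf{W})$. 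Replace $\mathbf{W}$ by $(\mathbf{W}_{-i},\v'_i)$; it is again a global maximiser and has the same outcome $\f$. The key observation is that the number of good voters has strictly increased: voter $i$ has just become good, while every other voter keeps its report, and whether a voter is single-minded, double-minded or fully-satisfied is determined solely by that voter's report together with $\f$, neither of which has changed.

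Since the number of good voters is an integer between $0$ and $n$ that grows by one per pass, the loop terminates after at most $n$ passes at a \profilename{} profile $\hat{\V}$ with $\ell(\hat{\V})=\ell(\V^\star)\ge\ell(\V)$, as required. All the substance sits in Lemma~\ref{lem:WorstCaseInstances:helper}, which I take as given; what remains is some bookkeeping --- checking that a global maximiser of $\ell$ exists (compactness of $\mathcal{D}(3)^n$ together with continuity) and that a cleanup step preserves \emph{both} the outcome \emph{and} global maximality, which is exactly what makes ``number of good voters'' a sound monotone progress measure and prevents the loop from cycling. I expect this bookkeeping, rather than any single hard estimate, to be the main thing to get right, the genuinely delicate case analysis having already been dispatched in the helper lemma.
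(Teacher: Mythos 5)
Your proof is correct, and at its core it is the same approach as the paper's: iterate Lemma~\ref{lem:WorstCaseInstances:helper} to clean up the ``bad'' voters one at a time, using the outcome-preservation clause of the lemma to keep the status of the other voters fixed and hence obtain a strictly decreasing measure (number of bad voters). The genuine difference is that you first pass to a \emph{global} maximiser $\V^\star$ of $\ell$ (justified by compactness of $\mathcal{D}(3)^n$ and continuity of $\ell$), and maintain global maximality as a loop invariant. This is an improvement in rigour: the paper works directly with the locally maximal $\V$, rules out case (a) of the lemma by appealing to local maximality of $\V$, and then asserts that all bad voters can be transformed one-by-one with outcome and loss invariant. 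But after the first transformation the profile has changed; to rule out case (a) at the next step one would need the \emph{new} profile to still be locally maximal, and a single-coordinate improvement from the new profile corresponds to a two-coordinate change from $\V$, so local maximality of $\V$ does not directly forbid it. Your global-maximiser reduction closes this gap cleanly, since global maximality trivially survives loss-preserving moves, forcing case (b) every time. The one background fact you lean on without verifying is that a moving phantom mechanism is a continuous map $\mathcal{D}(3)^n \to \mathcal{D}(3)$; this is true (and follows from the continuity of the phantom functions and the fact that the outcome does not depend on the particular choice of $t^*$), but it would be worth a one-line citation or remark, since the compactness argument rests on it.
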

    	
    	\begin{proof}
    		 Let $S$ denote the set of single-minded, double-minded or fully-satisfied voters (for mechanism $f$ and for profile $\V$) and let $\bar{S}=[n]\setminus S$. 
    		 
    		 If $\bar{S}=\emptyset$, $\V$ is a {\profilename} profile, hence $\hat{\V}=\V$ and the theorem holds trivially. Otherwise, let $ i \in \bar{S}$. By Lemma \ref{lem:WorstCaseInstances:helper}, we know that we can transform $\v_i$ to $\v'_i$ such that either (a) $\ell(\V_{-i},\v'_i) > \ell(\V)$ or (b) $i$ becomes a double-minded, single-minded or fully-satisfied voter, $f(\V)=f(\V_{-i},\v'_i)$ and $\ell(\V)=\ell(\V_{-i},\v'_i)$. When (a) holds, clearly profile $\V$ is not locally maximal. Hence, we can assume that (b) holds for all voters in $\bar{S}$ and we can create $\hat{\V}$ by transforming all voters in $\bar{S}$ to single-minded, double-minded or fully satisfied, one-by-one. By Lemma \ref{lem:WorstCaseInstances:helper}, both the outcome and the loss stay invariant in each transformation. Hence, $f(\hat{\V})=f(\V)$ and $\ell(\hat{\V})=\ell(\V)$. The theorem follows.
    		\end{proof}

                From now on, we turn our attention on {\profilename} profiles, and 
                in the following we define variables to describe them. A {\profilename} profile $\V$ can be presented 
                using $13$ independent variables:
	\begin{itemize}
		\item $\x=(x_1,x_2,x_3)$, the division of the fully satisfied voters,
		\item $a_1,a_2,a_3$, three integer variables counting the single-minded voter towards each project,
		\item $b_{1,2},b_{1,3},b_{2,1},b_{2,3},b_{3,1},b_{3,2}$, six integer variables counting the double-minded voters (e.g. $b_{2,1}$ counts the voters proposing $(1-x_2,x_2,0)$),
		\item and the total number of voters $n$.
	\end{itemize} 

	We also use $A=\sum_{j \in [3] } a_j$ and $B=\sum_{j,k \in [3],k\neq j } b_{k,j}$ to count the single-minded and the double-minded voters, respectively. Consequently, the number of fully satisfied voters is  $C=n-A-B$. These profiles can have at most $8$ distinct voters' reports: values $x_1$, $x_2$ and $x_3$, from fully-satisfied and double-minded voters, values $1-x_1$, $1-x_2$ and $1-x_3$ which we call \emph{complementary values} from the double-minded voters and, reports with values equal to $1$ and $0$. Note that, apart from values $0$ and $1$, in project $1$ we can find values $x_1$, $1-x_2$ and $1-x_3$, in project $2$ values $x_2$, $1-x_1$ and $1-x_3$ and finally in project $3$ the values $x_3$, $1-x_1$ and $1-x_2$.
	
	
    Recall that Definition~\ref{def:worst_case_profiles} demands that $f(\V)=\x$. To ensure this, we prove the following lemma. Note that we assume that $x_j>0$, for all $j \in [3]$ at this point.

 	\begin{lemma}\label{lem:NecessaryConditions}
 	    Let that $x_j>0$, for all $j \in [3]$. Let $z_j=a_j + \sum_{k \in [3]\setminus\{j\}}{b_{k,j}}$ and $q_j = \sum_{k \in [3]\setminus \{j\}} b_{j,k}$.
		For any moving phantom mechanism $f$, defined by the phantom system $\mathcal{Y}=\{y_k(t): k \in \zerotok{n} \}$, and any {\profilename} profile $\V$, then $f(\V)=\x$ if and only if 
		
		\begin{align}\label{eq:lem:NecessaryConditions}
		     y_{z_j}(t^*) \leq  x_j \leq y_{z_j+q_j+C}(t^*)
		\end{align}

		for any \[t^* \in \left\{ t: \sum_{j \in [m] } \median{ \V_{i \in [n],j},(y_k(t))_{k \in \zerotok{n}} } =1 \right\}.\]
	\end{lemma}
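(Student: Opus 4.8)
The plan is to reduce the lemma to a one-coordinate criterion for when a generalized median hits a prescribed value, and then to read off the two relevant phantom indices by sorting the (at most eight) distinct reports of a {\profilename} profile around $x_j$. First I would fix any $t^*$ in the feasibility set of Definition~\ref{def:moving_phantoms}; one such $t^*$ exists by~\cite{freeman2021journal}, and by that same work the medians $\median{\V_{i\in[n],j},(y_k(t))_{k\in\zerotok{n}}}$ coincide for all $t$ in that set. Since $f_j(\V)$ equals this median evaluated at $t^*$, proving $f(\V)=\x$ amounts to proving $\median{\V_{i\in[n],j},(y_k(t^*))_{k\in\zerotok{n}}}=x_j$ for every $j\in[3]$; and this $t^*$-invariance is exactly what lets us settle the equivalence for one feasible $t^*$ and deduce it for all of them, matching the quantifier in the statement.

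The combinatorial core is the elementary fact that the median (the $(n+1)$-st order statistic) of $2n+1$ reals equals a value $x$ if and only if at least $n+1$ of them are $\ge x$ and at least $n+1$ of them are $\le x$; this holds with arbitrary repetitions, one direction being immediate and the other following by locating slot $n+1$. I apply it to the $n$ reports in coordinate $j$ together with the phantoms $y_0(t^*)\le\cdots\le y_n(t^*)$, so I first tally the reports in coordinate $j$. Because $x_j>0$ and $x_1+x_2+x_3=1$ with all coordinates positive, each complementary value $1-x_k=x_j+x_\ell$ $(k\ne j)$ and the value $1$ are \emph{strictly} above $x_j$, while $0$ is strictly below it. Hence coordinate $j$ holds exactly $z_j=a_j+\sum_{k\ne j}b_{k,j}$ reports above $x_j$ (the $a_j$ single-minded $1$'s and the $b_{k,j}$ complementary values), exactly $q_j+C$ reports equal to $x_j$ (the $C$ fully-satisfied voters and the $q_j$ double-minded voters agreeing in project $j$), and the remaining $n-z_j-q_j-C$ reports equal to $0$; in particular $0\le z_j\le z_j+q_j+C\le n$, so both indices appearing in the lemma are legal.

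It then remains to rewrite the two counting conditions as phantom inequalities using monotonicity of $k\mapsto y_k(t^*)$. The reports already furnish $z_j+q_j+C$ values that are $\ge x_j$, so ``at least $n+1$ of the $2n+1$ combined values are $\ge x_j$'' is equivalent to ``at least $n+1-(z_j+q_j+C)$ phantoms are $\ge x_j$'', and since the $(n+1-(z_j+q_j+C))$-th largest phantom is $y_{z_j+q_j+C}(t^*)$, this is exactly $x_j\le y_{z_j+q_j+C}(t^*)$. Symmetrically, the reports furnish $n-z_j$ values $\le x_j$, so ``at least $n+1$ combined values are $\le x_j$'' amounts to ``at least $z_j+1$ phantoms are $\le x_j$'', i.e.\ to $y_{z_j}(t^*)\le x_j$ since the $(z_j+1)$-th smallest phantom is $y_{z_j}(t^*)$. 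Conjoining the two gives $y_{z_j}(t^*)\le x_j\le y_{z_j+q_j+C}(t^*)$, and conjoining over $j\in[3]$ yields the claimed equivalence.

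The step I expect to do the real work is the accounting of reports: one must verify that $z_j$ counts \emph{every} report strictly above $x_j$ and $q_j+C$ counts \emph{exactly} the reports equal to $x_j$, with nothing miscategorised among the eight report types. This is where the hypothesis $x_j>0$ for all $j$ is genuinely needed --- it is what keeps the complementary values and the $1$-reports strictly above $x_j$ and the $0$-reports strictly below it; were some $x_j=0$, the $0$-reports would merge with $x_j$ and the index $z_j+q_j+C$ would be wrong (and indeed the excerpt treats that case separately). A lighter, but necessary, point is that both the generalized-median fact and its translation must tolerate a report coinciding with a phantom value, which is handled throughout by using weak inequalities and counting with multiplicity rather than counting distinct values.
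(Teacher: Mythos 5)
Your proof is correct and rests on the same combinatorial core as the paper's: for each coordinate $j$, tally how many of the $2n+1$ inputs to the median (the $n$ reports plus the $n+1$ phantoms) lie above, at, and below $x_j$, using the hypothesis $x_j>0$ to pin down that the $z_j$ reports consisting of $1$'s and complementary values lie strictly above $x_j$, that the $q_j+C$ reports of fully-satisfied and agreeing double-minded voters sit exactly at $x_j$, and that the remaining $n-z_j-q_j-C$ reports equal $0$. The paper carries this out as four separate contradiction arguments (one per inequality, per direction); you instead invoke the standard characterization that the $(n+1)$-st order statistic of $2n+1$ values equals $x$ iff at least $n+1$ are $\ge x$ and at least $n+1$ are $\le x$, and translate each half into the corresponding phantom inequality in one pass. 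That buys a cleaner, single biconditional and automatically handles ties (as you note), but the underlying slot-counting is identical, so I would not call it a genuinely different route. Your handling of the ``for any $t^*$'' quantifier via the $t^*$-invariance of the medians (from Freeman et al.) is the right way to close the loop, and your remark that $0\le z_j\le z_j+q_j+C\le n$ guarantees the phantom indices are legal is a useful sanity check the paper leaves implicit.
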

	
	\begin{proof}
	     First note that for $x_j>0$ for all $j \in [3]$, all complementary values $1-x_1$, $1-x_2$ and $1-x_3$ are located in the upper slots. Assume otherwise, that there exists some complementary value, say $1-x_2$ such that $1-x_2 \leq x_1$. Then $1 \leq x_1 + x_2$, which is not possible when $x_3>0$. In addition, all $1$-valued voters' reports should be located in the upper slots, while all $0$-valued voter reports should be located in the lower slots. Note also that $z_j=a_j + \sum_{k \in [3]\setminus\{j\}}{b_{k,j}}$ counts exactly the voters' reports located in the upper slots.
	    
		(if direction) 
		Let $\V$ be a {\profilename} profile and let $f(\V)=\x$ for some $t^* \in [0,1]$.
		Assume, for the sake of contradiction that $y_{z_j}(t^*)>x_j$. This implies that the $n+1-z_j$ phantoms with indices $z_j,...,n$ are located in the upper slots (i.e. the $n$ higher slots). Since at least $z_j$ voters' reports are also be located in the upper slots there exists at least $n+1$ values for $n$ slots. A contradiction.
		Suppose now that that $y_{z_j+q_j+C}(t^*)<f_j(\V)$. This implies that $z_j+C+1$ phantom values (those with indices $0,...,z_j+C)$) are located in the lower slots (i.e. the $n$ lower slots). The voters' reports with value $0$ must be also be located in the lower slots, since $0<x_j$ for any $ \in [3]$.
        There are exactly $ A + B - q_j -z_j   = n-C- q_j -z_j$ such values.  Hence at least $n+1$ values should be located in the lower slots. A contradiction.
         		
        (only if direction ) Let that inequalities \ref{eq:lem:NecessaryConditions} hold and let $\V$ be  a {\profilename} profile. Assume for the sake of contradiction that there exists a project $j \in [3]$ such that $ f_j(\V) < x_j$. Hence, the $C$ values of the possibly fully satisfied voters, plus the $q_j$ values equal to $x_j$ should be located in the upper slots. The $1$-valued voters' reports, which count to $a_1$ and the complementary values, which count to $\sum_{k \in [3]\setminus\{j\}}{b_{k,j}}$ are also located in the upper slots. These count to $z_j+q_j+C$ values. Furthermore, since $x_j \leq y_{z_j+q_j+C}(t^*)$ , another $n-C-z_j-q_j+1$ phantom values should be located in the upper slots. A contradiction. 
        
        Similarly, assume for the sake of contradiction that there exists a profile $j \in [3]$ such that $f_j(\V)>x_j$. Then the $C+q_j$ values should be located in the lower slots, along with the $n-C-q_j-z_j$ voters' reports equal to $0$. Since $y_{z_j}(t^*) \leq x_j$ ,  $z_j+1$ phantom values are also located in the lower slots. These are in total $n+1 $ values, a contradiction. 
    \end{proof}

    \begin{figure}[ht]
    \begin{subfigure}{1\textwidth}
        \begin{tikzpicture}[xscale=5]
            \draw (0,0) -- (3,0);
            \draw (0,0.5) -- (3,0.5);
            \draw[decoration={brace,mirror,raise=10pt},decorate] (0,0) --node[below=10pt] {$n$} (1.5-0.0625,0);
            \draw[decoration={brace,mirror,raise=10pt},decorate] (1.5+0.0625,0) --node[below=10pt] {$n$} (3,0);
            \draw[decoration={brace,raise=10pt},decorate] (2.7,0.5) --node[above=10pt] {$a_1$} (3,0.5);
            \draw[decoration={brace,raise=10pt},decorate] (2.2,0.5) --node[above=10pt] {$b_{2,1}$} (2.4,0.5);
            \draw[decoration={brace,raise=10pt},decorate] (1.8,0.5) --node[above=10pt] {$b_{3,1}$} (2,0.5);
            \draw[decoration={brace,raise=10pt},decorate] (1.5-0.0625,0.5) --node[above=10pt] {$C+q_1$} (1.78,0.5);
            \draw[decoration={brace,raise=10pt},decorate] (0,0.5) --node[above=10pt] {$ n-C-q_1 -z_1$} (0.425,0.5);
            \filldraw[pattern=vertical lines] (0,0) rectangle (0.425,0.5);
            \fill[color=gray] (0.425,0) rectangle (1.5-0.0625,0.5);
            \filldraw[thick,double,pattern=horizontal lines, pattern color= green] (1.5-0.0625,0) rectangle (1.5+0.0625,0.5);
            \fill[pattern=horizontal lines, pattern color= green] (1.5+0.0625,0) rectangle (1.8,0.5);
            \fill[pattern=north east lines, pattern color=red] (1.8,0) rectangle (2,0.5);
            \fill[color=gray] (2,0) rectangle (2.2,0.5);
            \fill[pattern=north west lines, pattern color=blue] (2.2,0) rectangle (2.4,0.5);
            \fill[color=gray] (2.4,0) rectangle (2.7,0.5);
            \filldraw[pattern=vertical lines,pattern color=brown] (2.7,0) rectangle (3,0.5);
        \end{tikzpicture}%
        \caption{}
        \label{fig:ThreeTypeValid:a}
    \end{subfigure}
    \begin{subfigure}{0.99\textwidth}
        \begin{tikzpicture}[xscale=5]
            \draw (0,0) -- (3,0);
            \draw (0,0.5) -- (3,0.5);
            \draw[decoration={brace,mirror,raise=10pt},decorate] (0,0) --node[below=10pt] {$n$} (1.5-0.0625,0);
            \draw[decoration={brace,mirror,raise=10pt},decorate] (1.5+0.0625,0) --node[below=10pt] {$n$} (3,0);
            \draw[decoration={brace,raise=10pt},decorate] (2.7,0.5) --node[above=10pt] {$a_1$} (3,0.5);
            \draw[decoration={brace,raise=10pt},decorate] (2.2,0.5) --node[above=10pt] {$b_{2,1}$} (2.4,0.5);
            \draw[decoration={brace,raise=10pt},decorate] (1.8,0.5) --node[above=10pt] {$b_{3,1}$} (2,0.5);
            \draw[decoration={brace,raise=10pt},decorate] (1.5-0.3625,0.5) --node[above=10pt] {$C+q_1$} (1.5+0.0625,0.5);
            \draw[decoration={brace,raise=10pt},decorate] (0,0.5) --node[above=10pt] {$ n-C-q_1 -z_1$} (0.425,0.5);
            \fill[gray] (1.5+0.0625,0) rectangle (1.8,0.5);
            \filldraw[pattern=vertical lines] (0,0) rectangle (0.425,0.5);
            \filldraw[thick,double, pattern=horizontal lines, pattern color= green] (1.5-0.0625,0) rectangle (1.5+0.0625,0.5);
            \fill[color=gray] (0.425,0) rectangle (1.5-0.3625,0.5);
            \fill[pattern=horizontal lines, pattern color= green] (1.5-0.3625,0) rectangle (1.5-0.0625,0.5);
            \fill[pattern=north east lines, pattern color=red] (1.8,0) rectangle (2,0.5);
            \fill[color=gray] (2,0) rectangle (2.2,0.5);
            \fill[pattern=north west lines, pattern color=blue] (2.2,0) rectangle (2.4,0.5);
            \fill[color=gray] (2.4,0) rectangle (2.7,0.5);
            \filldraw[pattern=vertical lines,pattern color=brown] (2.7,0) rectangle (3,0.5);
        \end{tikzpicture}%
        \caption{}
        \label{fig:ThreeTypeValid:b}
    \end{subfigure}    
    \caption[Proof of Lemma \ref{lem:NecessaryConditions}: phantom values positioning]{Example for the positioning of phantom and voters reports in project $1$, for a given three-type profile. The $5$ patterned intervals represent the voters' reports. The solid, dark intervals represent the phantom values. The double-lined rectangle in the middle represents the median. Figure~\ref{fig:ThreeTypeValid:a} illustrates the case where the $C+b_{1,2}+b_{1,3}=C+q_1$ voters' reports with values $x_1$ are located in the upper $n+1$ slots. The $z_1=a_1 + b_{2,1} +b_{3,1}$ voters' reports with values $1$ and $1-x_2$ and $1-x_3$ must also be located in the top $n+1$ slots. Figure~\ref{fig:ThreeTypeValid:b} illustrates the other extreme, where $C+q_1$ values equal to the median are located in the lower $n+1$ slots.}
    \label{fig:ThreeTypeValid}
\end{figure}
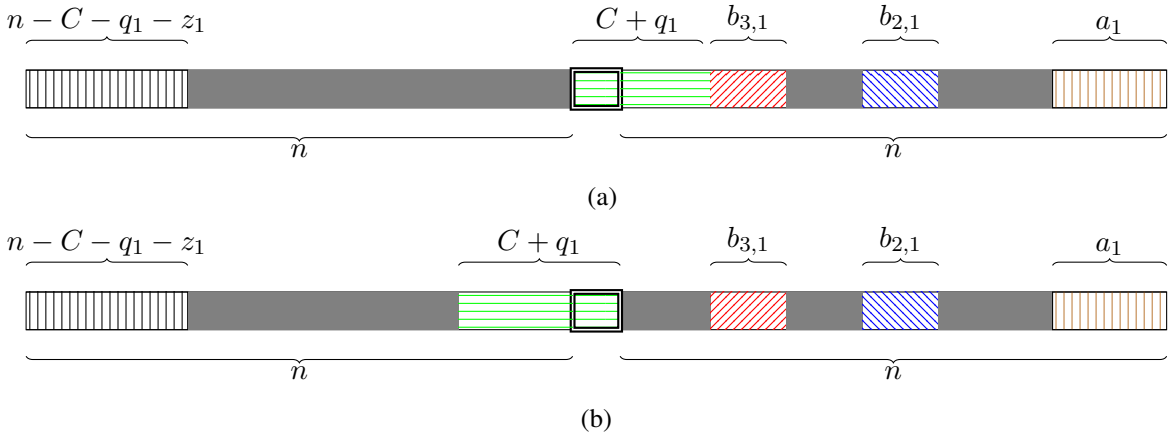
    
            We note that the only-if direction is not required for the
            proof of Theorem \ref{thm:PUmechanism_upper_bound}, but it
            is a nice feature that we include for the sake of
            completeness.

        \paragraph{A Non-Linear Program}
        
        We show that the {\mechanismname} mechanism is $(2/3+\epsilon)$-approximate using a Non-Linear Program. The feasible region of this program is defined by the class of {\profilename} profiles, and we search for the maximum $\ell_1$-loss among them. 
        For simplicity, we firstly normalize some of our variables with $n$. We introduce new variables $\hat{a}_j=a_j/n$ for $j \in [3]$ and $\hat{b}_{j,j'}=b_{j,j'}/n$ for $j,j' \in [3]$, $j \neq j$, and $\hat{C}=C/n$. We also use a relaxed version of the {\mechanismname} phantom mechanism: For every $x \in [0,1]$:
         
         \begin{equation}\label{eq:NewMechanismRelaxed}
        \hat{y}(x,t)=
        \begin{cases}
        0 & 0 \leq  t < \frac{1}{2} \textbf{ and } x \leq \frac{1}{2} \\
        {4tx} - 2t & 0< t < \frac{1}{2} \textbf{ and } x > \frac{1}{2} \\
        {x(3-2t)} -2 + 2t & \frac{1}{2} \leq  t \leq 1 \textbf{ and } x > \frac{1}{2} \\
        {x(2t-1)} & \frac{1}{2} \leq t \leq 1 \textbf{ and } x \leq \frac{1}{2}. \\
        \end{cases} \nonumber
        \end{equation}
         
         
        \noindent To help the presentation, we introduce also variables for the mean for each project $j \in [3]$:
        	\begin{align}
        	\bar{v}_j&=\hat{a}_j + \sum_{k\in[3]\setminus \{j\}} {(1-x_k)\hat{b}_{k,j}}  \nonumber  + x_j{\left(\hat{C}+\sum_{k \in [3] \setminus \{j\}} \hat{b}_{j,k}\right)}.
        	\end{align}

        	\begin{figure}[t]
        
        	\begin{align}
        	\text{maximize}& \sum_{j=1}^3\absolute{ \bar{v}_j -x_j} \label{MIQCP:objective} \\
        	\text{subject to}& && \nonumber \\ 
        	\sum_{j=1}^3 x_j& =1, && \nonumber  \\
        	\hat{A} &=\sum_{j=1}^3 \hat{a}_j, && \nonumber \\
        	\hat{B} &= \sum_{j,k \in [3],j\neq k} \hat{b}_{k,j}, && \nonumber \\
        	\hat{z}_j&= \hat{a}_j + \sum_{k \in [3]\setminus\{j\}}{\hat{b}_{k,j}}, && \forall j \in [3] \nonumber \\
        	\hat{q}_j&= \hat{a}_j + \sum_{k \in [3]\setminus\{j\}}{\hat{b}_{j,k}}, && \forall j \in [3] \nonumber \\
        	 x_j & \geq \hat{y}\left(\hat{z}_j, t^* \right), && \forall j \in [3]\label{ineqNew:lowerbound} \\
        	 x_j &\leq \hat{y}\left(\hat{C} + \hat{q}_j+\hat{z}_j,t^* \right),&&  \forall j \in [3] \label{ineqNew:upperbound} \\
             \hat{A}+\hat{B} & \leq 1, && \nonumber \\
        	x_j &\geq 0, \hat{a}_j \geq 0 ,&& \forall j \in [3] \nonumber \\
        	 \hat{b}_{k,j} &\geq 0 ,&& \forall j,k \in [3],j\neq k \nonumber \\
        	 0 &\leq t^* \leq 1. \nonumber
        	\end{align}
        	\caption{The Non-Linear Program used to upper bound the maximum $\ell_1$-loss for the {\mechanismname} mechanism.}
        	 \label{fig:NLP}
        	\end{figure}
        
                The Non-Linear Program is presented in Figure
                \ref{fig:NLP}. Inequalities \ref{ineqNew:lowerbound}
                and \ref{ineqNew:upperbound} ensure that we are
                searching over all {\profilename} profiles for the {\mechanismname} mechanism. Crucially,
                any profile which does not meet these two conditions
                cannot have $\x$ as the outcome (see Lemma
                \ref{lem:NecessaryConditions}). Finally, we let the
                program optimize over any  $t^* \in
                [0,1]$. Lemma \ref{lem:NecessaryConditions} ensures
                that any value $t^*$ that satisfies inequalities
                \ref{ineqNew:lowerbound} and \ref{ineqNew:upperbound}
                will return a valid outcome.
        	
        \paragraph{Maximum Loss Computation}
        
        As presented in Figure~\ref{fig:NLP}, the Non-Linear Program is quite complex.
        To compute an upper bound for its maximum value, we break this program into simpler, Quadratic programs, based on $3$ conditions; first, depending on whether $t^*\leq 1/2$ or
        not, second, according to the signs of the $\bar{v}_j -x_j$
        terms on the objective function (in order to remove the
        absolute values), and finally, according to the types of the
        phantoms enclosing the medians.
        
        To deal with the signs of the $\bar{v}_j -x_j$ terms, we define \emph{sign patterns}, as tuples in $\{+,-\}^3$. For example the sign pattern $(+,+,-)$ shows that $\bar{v}_1 \geq x_1$ and $\bar{v}_2 \geq x_2$, while $x_3 \geq \bar{v}_3$. We note that we cannot have the same sign is all projects, unless the loss is equal to $0$.
        Assume otherwise, that there exists a sign pattern with the same sign in all projects, say $(+,+,+)$. Then, $\sum_{j=1}^3 (\bar{v}_j - x_j) = \sum_{j=1}^3 \bar{v}_j- \sum_{j=1}^3 x_j=0.$ Hence, we only need to check the patterns $(+,-,-)$ and $(+,+,-)$. 
        
        We need also to address the discontinuities in function $\hat{y}$, with respect to the first argument. For this, we use the tuple $(b,r)$ to distinguish weather the median lies between two red, two black, or between a red an a black phantom. By noting that $\hat{z}_j>1/2$ implies that $\hat{C}+\hat{q}_j+\hat{z}_j>1/2$, we can safely assume that no median is upper bounded by a black phantom and lower bounded by a red phantom, and we define \emph{phantom patterns}, as tuples in  $\{(b,b),(b,r),(r,r)\}^{3}$. We build a quadratic program for each phantom pattern.
        
        In total, we end up with $2 \times 2 \times 27=108$ Quadratic Programs with Quadratic Constraints (QPQC). In Figure~\ref{fig:NLP:worst-case} we present in detail on of these programs. Specifically, we present the case where $t>1/2$ with sign pattern $(+,-,-)$ and phantom pattern $((r,r),(b,b),(b,b))$. This quadratic program corresponds to a case yielding highest upper bound for the $\ell_1$-loss.
        
        To prove Theorem~\ref{thm:PUmechanism_upper_bound} which follows, we first compute the maxima for $27$ programs corresponding to the case $t^*>1/2$ and the sign pattern $(+,-,-)$. For the remaining QPQCs, i.e. for $t\geq 1/2$ and the sign pattern $(+,+,-)$ and for $t^*<1/2$ for both sign patterns, we check whether any of these cases can yield loss greater than $2/3$. We present one example in Figure \ref{fig:NLP:++-}. All these programs are infeasible, i.e. no one of these cases yield $\ell_1$-loss greater than $2/3$ plus a computational error term which is upper bounded by $10^{-5}$.

        
        We solve these programs using the Gurobi optimization software~\cite{gurobi}. The solver models our programs as Mixed Integer Quadratic Programs\footnote{note that our QPQCs are non-convex.} and uses the \emph{spatial Branch and Bound Method} (see \cite{liberti2008introduction}) with various heuristics, to return a global maximum, when the program is feasible. The solver computes arithmetic solutions with $10^{-5}$ error tolerance.
        
        \begin{theorem}\label{thm:PUmechanism_upper_bound}
        The {\mechanismname} mechanism is $(2/3+\epsilon)$-approximate, for some $\epsilon \in [0,10^{-5}]$.
        \end{theorem}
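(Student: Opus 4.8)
The plan is to combine the structural results proved above with a finite, computer-assisted optimization. By Theorem~\ref{thm:WorstCaseInstances} it suffices to bound $\ell(\V)$ over all \profilename\ profiles $\V$ (under the running assumption that the outcome has no zero coordinate; the case with zeros is handled separately in the appendix). Each such profile is described by the $13$ variables $\x$, $(a_j)_j$, $(b_{j,k})_{j\neq k}$, $n$, and Lemma~\ref{lem:NecessaryConditions} translates the requirement $f(\V)=\x$ into the inequalities $y_{z_j}(t^*)\le x_j\le y_{z_j+q_j+C}(t^*)$ for some feasible $t^*$. Normalizing the counting variables by $n$ (which only enlarges the feasible region, hence is safe for an upper bound) and substituting the relaxed phantom function $\hat{y}(x,t)$ yields the Non-Linear Program of Figure~\ref{fig:NLP}, whose optimal value upper-bounds the maximum $\ell_1$-loss of the \mechanismname\ mechanism.

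First I would linearize the objective $\sum_{j=1}^{3}\absolute{\bar{v}_j-x_j}$ over each of the two admissible sign patterns $(+,-,-)$ and $(+,+,-)$ — the all-equal patterns give loss $0$ since $\sum_{j=1}^{3}(\bar{v}_j-x_j)=0$ — and separate the two branches of $\hat{y}$ by conditioning on $t^*<1/2$ versus $t^*\ge 1/2$. Next I would eliminate the remaining discontinuity of $\hat{y}$ in its first argument by fixing, per project, whether the median is enclosed by two black, two red, or a black-then-red phantom; since $\hat{z}_j>1/2$ forces $\hat{C}+\hat{q}_j+\hat{z}_j>1/2$, the black-below/red-above combination is vacuous, leaving $27$ phantom patterns. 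Carrying out all three case splits produces $2\cdot 2\cdot 27=108$ quadratic programs with quadratic constraints (one displayed in Figure~\ref{fig:NLP:worst-case}), each a relaxation covering the corresponding slice of the feasible region of the NLP.

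Then I would solve the $108$ QPQCs with a global (spatial branch-and-bound) solver. For $t^*\ge 1/2$ and sign pattern $(+,-,-)$ I would report the exact maxima of the $27$ programs; the largest of these equals $2/3$ up to the solver's $10^{-5}$ tolerance and is attained by the program of Figure~\ref{fig:NLP:worst-case}. For the remaining $81$ programs I would append the constraint ``$\sum_{j=1}^{3}\absolute{\bar{v}_j-x_j}\ge 2/3$'' and verify that each resulting program is infeasible (an example is Figure~\ref{fig:NLP:++-}), so none of those slices can exceed $2/3$. Combining, the maximum $\ell_1$-loss over all \profilename\ profiles is at most $2/3+\epsilon$ with $\epsilon\le 10^{-5}$, and by Theorem~\ref{thm:WorstCaseInstances} the same bound transfers to every preference profile, which is the claim.

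The main obstacle is making the case analysis simultaneously exhaustive and sound: one must certify that every \profilename\ profile — for every $t^*$ realizing feasibility in Lemma~\ref{lem:NecessaryConditions} — lands in at least one of the $108$ QPQCs, that each QPQC is a genuine \emph{relaxation} of its slice (so the solver computes an upper bound, not a lower bound), and that the global-optimality certificates and the $10^{-5}$ numerical tolerance are faithfully reflected in the stated $\epsilon$. A secondary point worth spelling out is the integrality of $a_j,b_{j,k},C$: relaxing it when normalizing by $n$ again only widens the feasible set, so it does not invalidate the upper bound.
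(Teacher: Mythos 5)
Your proposal mirrors the paper's own argument essentially verbatim: reduce to \profilename\ profiles via Theorem~\ref{thm:WorstCaseInstances}, encode feasibility via Lemma~\ref{lem:NecessaryConditions}, normalize the counts by $n$, split the resulting NLP by $t^*$-regime, sign pattern, and phantom pattern into $108$ QPQCs, solve the $27$ programs for $t^*\ge 1/2$ with sign pattern $(+,-,-)$, certify the remaining $81$ infeasible above $2/3$, and defer the zero-coordinate case to the appendix. The only slip is in naming the excluded phantom pattern: $\hat{z}_j>1/2\Rightarrow \hat{C}+\hat{q}_j+\hat{z}_j>1/2$ rules out the \emph{red-below/black-above} pair $(r,b)$, not ``black-below/red-above'' (which is the admissible $(b,r)$), though you correctly enumerate the three surviving types and arrive at $27$ combinations, so the count and the ensuing case analysis are unaffected.
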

        
        \begin{proof}
            Theorem~\ref{thm:WorstCaseInstances} states that the maximum loss for any moving phantom mechanism happens in a {\profilename} profile. The Non-Linear Program in Figure~\ref{fig:NLP} searches for the profile with maximum loss, over all {\profilename} profiles. The latter is guaranteed by Lemma \ref{lem:NecessaryConditions}. We first solve $27$ QPQCs, corresponding to the sign pattern $(+,-,-)$ and $t^*>1/2$. The maximum value is no higher than $2/3+\epsilon$, where $\epsilon$ is due to the error tolerance of the solver. Table~\ref{tab:results} shows analytically the upper bounds computed for each one of the $27$ QPQCs (excluding some symmetric cases).
            
            For the other $81$ QPQCs we check whether any of them yields loss at least $2/3$. For that, we add the constraint $\sum_{j=1}^3 s(j)( \bar{v}_j -x_j) \geq 2/3$, where $s(j)$ denotes the sign for project $j$ according to the sign pattern and we search for any feasible solution. Eventually, no feasible solution exists, i.e. there exists no other preference profile with loss at least $2/3$ plus the computational error imposed by the solver.   
            
            To complete our analysis we need to address the case where the outcome includes at least one $0$ value. We analyze this case in Section~\ref{appdx:zeros} in the Appendix and we show that the $\ell_1$-loss cannot be higher than $1/2+\epsilon$ in this case.
        \end{proof}

    
    \begin{table}[t]
    \centering
    \begin{tabular}{|c|c|c|c|c|}
        \hline
          \textbf{Phantoms} & \textbf{Status} & \textbf{ Loss (lower bound)} &  \textbf{ Loss (Upper bound)} & \textbf{Gap} \\ \hline
            $(b,b),(b,b),(b,b)$ & OPTIMAL & $0.333332$ & $0.333333$ & $\num{1.83e-06}$ \\ 
            $(b,b),(b,b),(b,r)$ & OPTIMAL & $0.357003$ & $0.357007$ & $\num{3.69e-06}$ \\ 
            $(b,b),(b,b),(r,r)$ & OPTIMAL & $0.357003$ & $0.357010$ & $\num{7.38e-06}$ \\ 
            $(b,b),(b,r),(b,r)$ & OPTIMAL & $0.499998$ & $0.500014$ & $\num{1.62e-05}$ \\ 
            $(b,b),(b,r),(r,r)$ & OPTIMAL & $0.499999$ & $0.500010$ & $\num{1.11e-05}$ \\ 
            $(b,b),(r,r),(b,b)$ & OPTIMAL & $0.357004$ & $0.357008$ & $\num{4.27e-06}$ \\ 
            $(b,b),(r,r),(r,r)$ & OPTIMAL & $0.000000$ & $0.000000$ & $\num{0.00e+00}$ \\ 
            $(b,r),(b,b),(b,b)$ & OPTIMAL & $0.666667$ & $0.666667$ & $\num{-2.22e-16}$ \\ 
            $(b,r),(b,b),(b,r)$ & OPTIMAL & $0.666666$ & $0.666668$ & $\num{1.92e-06}$ \\ 
            $(b,r),(b,b),(r,r)$ & OPTIMAL & $0.529134$ & $0.529141$ & $\num{7.08e-06}$ \\ 
            $(b,r),(b,r),(b,r)$ & OPTIMAL & $0.666667$ & $0.666672$ & $\num{5.37e-06}$ \\ 
            $(b,r),(b,r),(r,r)$ & OPTIMAL & $0.500000$ & $0.500006$ & $\num{5.67e-06}$ \\ 
            $(b,r),(r,r),(b,b)$ & OPTIMAL & $0.529134$ & $0.529139$ & $\num{5.40e-06}$ \\ 
            $(b,r),(r,r),(r,r)$ & OPTIMAL & $0.000000$ & $0.000000$ & $\num{0.00e+00}$ \\ 
            $(r,r),(b,b),(b,b)$ & OPTIMAL & $0.666667$ & $0.666669$ & $\num{1.86e-06}$ \\ 
            $(r,r),(b,b),(b,r)$ & OPTIMAL & $0.666666$ & $0.666667$ & $\num{5.70e-07}$ \\ 
            $(r,r),(b,b),(r,r)$ & OPTIMAL & $0.527863$ & $0.527866$ & $\num{3.25e-06}$ \\ 
            $(r,r),(b,r),(b,r)$ & OPTIMAL & $0.666666$ & $0.666667$ & $\num{1.12e-06}$ \\ 
            $(r,r),(b,r),(r,r)$ & OPTIMAL & $0.500000$ & $0.500004$ & $\num{3.87e-06}$ \\ 
            $(r,r),(r,r),(b,b)$ & OPTIMAL & $0.527864$ & $0.527867$ & $\num{2.55e-06}$ \\ 
            \hline
            \end{tabular}
        \caption{The bounds computed by the QPQCs, for $t>1/2$ and the sign pattern $(+,-,-)$. The programs without feasible solutions are not presented, as well as symmetric cases. The lower bound corresponds to the largest loss for a feasible solution computed by the solver. The upper bound corresponds to the smaller non-feasible lower bound computed be the solver. The last column shows the gap between them. Gaps smaller than $10^{-5}$ are insignificant due to the tolerance of the solver.}
        \label{tab:results}
        \end{table}

    \section{Lower Bounds}\label{sec:lowerbounds}
    
    In this section, we provide impossibility results for our proposed measure. Theorem~\ref{thm:truthful_lowerbound} shows that no truthful mechanism can be less than $1/2$-approximate. Theorem~\ref{thm:moving_phantoms_lower_bound} focuses on the class of moving phantom mechanisms and shows that no such mechanism can be less than $(1-1/m)$-approximate. Theorem~\ref{thm:IMlowerbound3} shows that the Independent Markets mechanism is $0.6862$-approximate. Finally, we present lower bounds for large $m$: A combined lower bound of $2-\frac{8}{m^{1/3}}$ for both the Independent Markets mechanism and the Piecewise Uniform mechanism, and a lower bound of $2-\frac{4}{m+1}$ for any mechanism which maximizes the social welfare.
    
    	\subsection{A Lower Bound  for any Truthful Mechanism}

        In the following, we show that truthfulness inevitably admits $\ell_1$-loss  at least $1/2$ in the worst case. We recall that the Uniform Phantom mechanism achieves this bound for $m=2$.
        
        \begin{theorem}\label{thm:truthful_lowerbound}
        	No truthful mechanism can achieve $\ell_1$-loss less than $1/2$.
        \end{theorem}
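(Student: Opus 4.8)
The plan is to exhibit a single family of preference profiles on which every truthful mechanism must incur $\ell_1$-loss at least $1/2$. By the characterization of truthful, anonymous, continuous one-dimensional mechanisms as generalized median rules (Moulin), and since for $m=2$ every such mechanism is determined by its phantoms, the natural candidate instances live already in the two-project world; more projects only make matters worse, so it suffices to work with $m=2$. I would therefore fix a large odd number of voters and consider two ``mirror'' profiles: in profile $\V$, half (rounded one way) of the voters report $(1,0)$ and the other half report $(1/2,1/2)$, so that $\bar{\V}=(3/4,1/4)$; in profile $\V'$, obtained by flipping the roles of the two projects, $\bar{\V'}=(1/4,3/4)$. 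Any mechanism outputs some $(x,1-x)$ on $\V$ and, if it is neutral, $(1-x,x)$ on $\V'$; the combined loss is $2\lvert x-3/4\rvert + 2\lvert(1-x)-1/4\rvert = 4\lvert x-3/4\rvert$, which is at least $1$ unless $x$ is very close to $3/4$, forcing the loss on at least one profile to be $\ge 1/2$.

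The subtlety is that the theorem claims the bound for \emph{all} truthful mechanisms, not merely neutral ones, so I cannot simply invoke symmetry. The fix is to use truthfulness directly to pin down the output. First I would observe that on the all-$(1,0)$-or-$(1/2,1/2)$ profile, as a voter deviates, truthfulness (single-peakedness in the one-dimensional coordinate) restricts how $x$ can move; concretely, starting from a profile where all $n$ voters report $(1/2,1/2)$ — on which any reasonable mechanism outputs $(1/2,1/2)$, and in any case the mean is $(1/2,1/2)$ so an $\alpha<1/2$ mechanism must output something within $\ell_1$-distance $<1/2$, i.e.\ within coordinate distance $<1/4$ of $1/2$ — I flip voters one at a time to $(1,0)$. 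After flipping $\lceil n/2\rceil$ of them we reach $\bar{\V}=(3/4+o(1),\cdot)$; I want to argue the output stayed below $1/2+1/4$ in the first coordinate, which would give loss $\ge (3/4) - (3/4) $... so instead I set up the two-sided squeeze: run the same flipping argument from the all-$(1,0)$ side (where the mean is $(1,0)$ and an $\alpha$-mechanism must output $\ge 1-\alpha/2$ in coordinate one) down to the same profile, and use monotonicity of the median-type response together with truthfulness to sandwich the output at the target profile, deriving a contradiction with $\alpha<1/2$.

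A cleaner route, which I expect is the one taken: consider the profile with $k$ voters at $(1,0)$ and $k$ voters at $(0,1)$ for the \emph{same} $k$, whose mean is exactly $(1/2,1/2)$ and where truthfulness plus a deviation argument forces the output to be exactly $(1/2,1/2)$ (this is the one-dimensional ``median of medians'' rigidity). From there, flip one voter from $(0,1)$ to $(1,0)$: the mean moves to $\bigl(\tfrac{k+1}{2k},\tfrac{k-1}{2k}\bigr)$, but truthfulness of the deviating voter (who prefers more mass on project~2) forbids the output from moving \emph{toward} her — combined with the fact that a voter on the other side who just switched prefers project~1, I get that the output can move by at most a bounded amount, while the mean moves by $\tfrac{1}{2k}$ per coordinate; iterating $\Theta(k)$ such flips drives the mean to $(3/4,1/4)$ while the output lags behind, and choosing the stopping point appropriately the $\ell_1$ gap reaches $1/2$.

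\begin{proof}[Proof sketch]
Work with $m=2$; the general case follows by embedding, since any $\alpha$-approximate mechanism for $m$ projects restricted to profiles supported on two coordinates is $\alpha$-approximate for $m=2$. Identify a division $(x,1-x)$ with its first coordinate $x\in[0,1]$, so $f$ becomes a truthful (hence, by single-peakedness, monotone in each coordinate) map into $[0,1]$. Consider, for large $k$, the profile $\V_0$ with $k$ voters at $1$ and $k$ voters at $0$; its mean is $1/2$. A short deviation argument shows $f(\V_0)=1/2$: if $f(\V_0)=x>1/2$, a voter currently reporting $0$ strictly prefers a report that pulls the outcome toward $0$, contradicting truthfulness, and symmetrically for $x<1/2$. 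Now for $t=0,1,\dots,k$ let $\V_t$ be obtained from $\V_0$ by switching $t$ of the $0$-voters to $1$; then $\bar{\V_t}=\tfrac{k+t}{2k}$. By monotonicity $f(\V_t)$ is weakly increasing in $t$, and by truthfulness applied to each switched voter (who reports $1$ but has peak $0$, hence cannot benefit from her report), together with the analogous constraint from an un-switched $0$-voter, one shows $f(\V_{t+1})-f(\V_t)\le \tfrac{1}{k}\cdot o(1)$-bounded increments are impossible to avoid only if $f$ stays pinned; more precisely one derives $f(\V_t)\le 1/2 + c/k$ for a constant $c$ as long as $t\le k/2+O(1)$. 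Taking $t=\lceil k/2\rceil$ gives $\bar{\V_t}=3/4+O(1/k)$ while $f(\V_t)\le 1/2+O(1/k)$, so $\ell(\V_t)=2\lvert f(\V_t)-\bar{\V_t}\rvert \ge 1/2 - O(1/k)$. Letting $k\to\infty$ yields the bound; a finite witness is obtained by the same construction for a suitable fixed $k$, using exact rational arithmetic in place of the $O(1/k)$ terms.
\end{proof}
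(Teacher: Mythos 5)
Your strategy has the right flavour — truthfulness constrains how the output can move as you perturb a profile — but the concrete mechanism you propose for extracting a quantitative bound has two genuine gaps, and the missing ingredient is precisely the one the paper's much shorter proof uses.

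\paragraph{Gap 1: the ``deviation argument'' for $f(\V_0)=1/2$ is not valid.}
You claim that if $f(\V_0)=x>1/2$ then a $0$-voter ``strictly prefers a report that pulls the outcome toward $0$, contradicting truthfulness.'' That is not what truthfulness says. Truthfulness only asserts that no misreport makes her \emph{better} off; it does not assert that her truthful report gives her her favorite outcome, nor that the outcome equals the median. Generalized median rules with phantoms bunched near $1$ (for instance, all phantoms equal to $1$) are truthful, anonymous and continuous, yet they output $1$ on $\V_0$, not $1/2$. You do correctly note earlier that an \emph{$\alpha$-approximate} mechanism with $\alpha<1/2$ is forced to output something within $1/4$ of $1/2$, which is the usable bound; but the formal sketch replaces that with the unjustified exact claim.

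\paragraph{Gap 2: the increments are not $O(1/k)$.}
The core of your iteration is the assertion $f(\V_{t+1})-f(\V_t)\le c/k$, but truthfulness gives only monotonicity, not a Lipschitz bound. A concrete counterexample (for $n=2k$ voters): take the generalized median rule with phantoms $y_0=\dots=y_k=1/2$ and $y_{k+1}=\dots=y_{2k}=3/4$. Then $f(\V_0)=1/2$ but $f(\V_1)=3/4$ — the output jumps by $1/4$ after a single flip, regardless of $k$. So the conclusion $f(\V_t)\le 1/2+c/k$ for $t\le k/2$ is false, and the iteration does not close.

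\paragraph{What is actually needed: the fixed-point property of truthful mechanisms.}
The paper's proof uses none of this machinery. It uses only two voters. Let $\V=(\v_1,\v_2)$ with $\v_1=(1,0,\dots,0)$, $\v_2=(0,1,0,\dots,0)$, set $\x=f(\V)$, and observe the following consequence of truthfulness: if voter $1$'s peak were $\x$, then reporting $\v_1$ already gives her disutility $0$, so reporting $\x$ truthfully cannot do worse, hence $f(\x,\v_2)=\x$. Symmetrically $f(\v_1,\x)=\x$. Computing the loss on these two modified profiles gives $1-x_2$ and $1-x_1$ respectively, whose maximum is at least $1/2$ because $x_1+x_2\le 1$. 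This range-fixing trick (changing a voter's report to the current output leaves the output unchanged) is the lever your sketch is implicitly reaching for with the ``median-of-medians rigidity'' phrase, but it never gets stated or used; without it, the iteration has no handle on the step sizes.

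\paragraph{A minor additional issue.}
The reduction ``work with $m=2$, the general case follows by embedding'' is not immediate: a mechanism for $m$ projects, fed a profile supported on two coordinates, need not return a division supported on those two coordinates, so it does not literally restrict to an $m=2$ mechanism. The paper avoids this by running the argument directly in $m$ dimensions and summing the loss over all $m$ coordinates. Your argument could be repaired by doing the same, but as written the reduction is stated too strongly.

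In short, the approach of tracking the output along a one-parameter family of profiles cannot be made rigorous from truthfulness alone without the fixed-point observation, and once you have that observation the whole iteration collapses to the paper's two-voter argument.
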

        
        \begin{proof}
        	Let $f$ be a truthful mechanism over $m$ projects. Consider a profile with $2$ voters $\V=(\v_1,\v_2)$, such that $\v_1=(1,0,...,0)$ and $\v_2=(0,1,0,...,0)$ and let that $f(\V)=(x_1,...,x_m)=\x$ for some $x \in \mathcal{D}(m)$. Consider also the profile $\V'=(\x,\v_2)$. Due to  truthfulness, then $f(\V')=\x$. Assume otherwise, that $f(\V)=\x'\neq \x$; when voter's $1$ peak is at $\x$, i.e. $\v_1^*=\x$, then the disutility for voter $1$ when proposing $\v_1$ is:

        	\begin{align}
        	    d(f(\V),\v_1^*)=d(\x,\x)=0
        	\end{align}
        	
        	while the disutility for voter $1$ when proposing $\v_1^*$ is
        	
        	\begin{align}
        	    d(f(\V'),\v_1^*)=d(\x',\x)>0,
        	\end{align}
        	
        	a contradiction. With a similar argument we can show that for $\V''=(\v_1,\x)$, $f(\V'')=\x$. Hence, the $\ell_1$-loss for these two preference profiles is:
        	 
        	\begin{align}
        	\ell(\V') &= \absolute{x_1-\frac{x_1}{2}} + \absolute{x_2-\frac{1+x_2}{2}} + \sum_{j=3}^m \absolute{x_j - \frac{x_j}{2}} = 1 - x_2\nonumber, \text{ and} \\
        	\ell(\V'')&= \absolute{x_1-\frac{1+x_1}{2}} + \absolute{x_2-\frac{x_2}{2}} + \sum_{j=3}^m \absolute{x_j - \frac{x_j}{2}} = 1 - x_1\nonumber.
        	\end{align}
        	
        	The optimal mechanism should minimize the quantity $\max\{1-x_1,1-x_2\}$, given that $x_1 + x_2 \leq 1$ and $x_1\geq 0$, $x_2 \geq 0$. Note that $x_1 \leq 1-x_2$, hence  $\max\{1-x_1,1-x_2\} \geq \max\{1-x_1,x_1\}$ which is minimized for $x_1=1/2$ to value at least $1/2$.
        \end{proof}

    	\subsection{A Lower Bound for any Moving Phantom Mechanism}\label{ssec:lowerboundMovingPhantom}
        
        In this subsection we present a preference profile where any phantom mechanism yields loss equal to $1-1/m$. We recall that the {\mechanismname} mechanism achieves this bound for $m=3$. We note that the  counter example construction we present in the proof holds for any even number of voters.
    
        \begin{theorem}\label{thm:moving_phantoms_lower_bound}
        	No moving phantom mechanism can achieve $\ell_1$-loss less than $1-1/m$, for any $m \geq 2$.
        \end{theorem}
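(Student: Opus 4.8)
The plan is to pin down a single instance on which every moving phantom mechanism is forced to return the uniform division, while the proportional division sits at $\ell_1$-distance exactly $1-1/m$ from it. Fix any even $n\ge 2$ and let $\V$ be the profile in which $n/2$ voters report $(1,0,\dots,0)$ and the other $n/2$ report $(\tfrac1m,\dots,\tfrac1m)$. Then $\bar\V=\bigl(\tfrac{m+1}{2m},\tfrac1{2m},\dots,\tfrac1{2m}\bigr)$, so if the outcome equals $(\tfrac1m,\dots,\tfrac1m)$ the loss is $\bigl|\tfrac1m-\tfrac{m+1}{2m}\bigr|+(m-1)\bigl|\tfrac1m-\tfrac1{2m}\bigr|=\tfrac{m-1}{2m}+\tfrac{m-1}{2m}=1-\tfrac1m$. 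Hence it suffices to prove that on $\V$ the outcome of an arbitrary moving phantom mechanism is this uniform division.

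Fix a moving phantom mechanism $f^{\mathcal Y}$ and a feasible time $t^*$ for $\V$ (one exists by the result of~\citet{freeman2021journal}); write $y_k:=y_k(t^*)$, so $y_0\le\dots\le y_n$ lie in $[0,1]$. On project $1$ the $n$ reports are $n/2$ copies of $1$ and $n/2$ copies of $\tfrac1m$; on each project $j\ge 2$ they are $n/2$ copies of $0$ and $n/2$ copies of $\tfrac1m$, so all projects $j\ge 2$ share a single median $m'$, and we write $m_1$ for the median of project $1$. Introducing $P_<:=|\{k:y_k<\tfrac1m\}|$ and $P_\le:=|\{k:y_k\le\tfrac1m\}|$ (so $P_<\le P_\le\le n+1$), I would first establish, by counting how many of the $2n+1$ sorted values (reports together with the $n+1$ phantoms) fall on each side of $\tfrac1m$, the identities
\[
m_1=\tfrac1m \iff \bigl(P_<\le n \text{ and } P_\le\ge \tfrac n2+1\bigr),\qquad
m'=\tfrac1m \iff \bigl(P_<\le \tfrac n2 \text{ and } P_\le\ge 1\bigr),
\]
and, outside these ranges, that $m_1<\tfrac1m$ precisely when $P_<=n+1$, $m_1>\tfrac1m$ precisely when $P_\le\le\tfrac n2$, $m'<\tfrac1m$ precisely when $P_<\ge\tfrac n2+1$, and $m'>\tfrac1m$ precisely when $P_\le=0$. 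Getting these right is where care is needed: one must track ties at $\tfrac1m$, and use that the value-$1$ reports sit strictly above $\tfrac1m$ on project $1$ while the value-$0$ reports sit strictly below $\tfrac1m$ on projects $j\ge2$.

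The core of the argument — and the main obstacle — is then a short case analysis on $(P_<,P_\le)$ showing that the feasibility identity $m_1+(m-1)m'=1$ can hold only when $m_1=m'=\tfrac1m$. If $P_\le\ge\tfrac n2+1$: either $P_<=n+1$, in which case both $m_1$ and $m'$ are strictly below $\tfrac1m$ and the sum is $<1$; or $P_<\le n$, in which case $m_1=\tfrac1m$ while $m'\le\tfrac1m$ (strictly unless $P_<\le\tfrac n2$), so the sum is $\le 1$ with equality only when $P_<\le\tfrac n2$, i.e. only in the case $m_1=m'=\tfrac1m$. If instead $P_\le\le\tfrac n2$, then $m_1>\tfrac1m$ and $m'\ge\tfrac1m$, so the sum exceeds $1$. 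Thus the only feasible possibility is $m_1=m'=\tfrac1m$, i.e. $f^{\mathcal Y}(\V)=(\tfrac1m,\dots,\tfrac1m)$, and $\ell(\V)=1-\tfrac1m$. Since $f^{\mathcal Y}$ was an arbitrary moving phantom mechanism, none can have worst-case $\ell_1$-loss below $1-\tfrac1m$ — and this instance shows the bound is attained exactly.
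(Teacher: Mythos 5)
Your proof is correct and uses essentially the same approach as the paper: the same family of profiles ($n/2$ voters at $(1,0,\dots,0)$ and $n/2$ at the uniform division), the same strategy of showing that any moving phantom mechanism is forced to output the uniform division, and the same final loss calculation. The only difference is cosmetic: the paper pins down the contradiction by tracking the single phantom $y_{n/2}$ relative to $1/m$, while you track the counts $P_<$ and $P_\le$ and run a case analysis, but these are two phrasings of the same counting argument.
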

        
        \begin{proof}
            Let $n \geq 2$ and even, and let $S=\{1,...,n/2\}$, $Q=\{n/2+1,...,n\}$ be two sets of voters.
            Let $f$ be a moving phantom mechanism defined over $m$ projects and consider the preference profile $\V=(\v_1,...,\v_{n/2},\v_{n/2+1},...,\v_{n})$. All voters $i \in S$ propose the divisions $\v_i=(1,0,...,0)$ while all voters $ k \in Q$  propose the division $\v_k=(1/m,...,1/m)$. Let  $y_i$, for $\in \{0,..,n\}$ denote the $i$-th phantom value and let that $(y_i)_{i \in \{0,...,n\}}$ induce a valid outcome for the moving phantom mechanism. Hence, the outcome of the mechanism is equal to:
        	
        	$$ f_1(\V)=\med\big(\; \overbrace{\tfrac{1}{m},...,\tfrac{1}{m}}^{n/2},y_0,...,y_n,\overbrace{1,...,1}^{n/2} \big)=x$$ while, for $j \in \{2,...,m\}$
        	$$ f_j(\V)=\med\big(\; \overbrace{0,...,0}^{n/2},y_0,...,y_n,\overbrace{\tfrac{1}{m},...,\tfrac{1}{m}}^{n/2} \big)=z.$$ 
        	
        	We will show that both $x \leq 1/m$ and $z \leq 1/m$, which implies that $x=z=1/m$ for the outcome to sum up to $1$, and thus being a valid outcome to moving phantom mechanism. Assume, for the sake of contradiction, that either $x > 1/m$ or $z > 1/m$.
            Starting from the case $z>1/m$, note that $z \leq x$. For the outcome to sum up to $1$, i.e. $x+(m-1)z=1$, it must be $x<1/m$, a contradiction.
            
            We continue with the case $x>1/m$. Then, $n/2$ voters' reports with value equal to $1/m$ should be located in the $n$ lower slots, in the computation of the median for the first project. Hence, at most $n/2$ phantoms can be located in the $n$ lower slots, and the phantom with index $n/2$ should be located in one of the $n+1$ slots higher than the median, i.e. $y_{n/2} \geq x >1/m$. Observe also that $z<1/m$, otherwise $x + (1-m)z >1$. This imply that the $n/2$ voters' reports with value equal to $1/m$ should be located in the $n$ upper slots, in the computation of the medians for the projects $2$ to $m$. Hence, at most $n/2$ phantom values should be located in the $n$ lower slots, and the phantom with index $n/2$ should be located in one of the $n+1$ slots lower than the median, i.e. $y_{n/2} \leq z < 1/m$. A contradiction.
        	
        	Eventually, for a valid outcome of the mechanism it must be that $x=z=1/m$ and the $\ell_1$-loss becomes:
        
        	\begin{align}
        	\ell(\V)&= \absolute{\frac{1}{2}-\frac{1}{m}} + (m-1)\absolute{\frac{1}{2\cdot m}} = 1-\frac{1}{m}.
        	\end{align}
        \end{proof}
        	

        	
        	
        	
        	
        

            \subsection{A Lower Bound for the {Independent Markets} Mechanism}\label{ssec:lowerboundIM}
        In this subsection, we present a class of instances where the \mname{Independent Markets} mechanism from~\cite{freeman2021journal} yields loss at least $0.6862$, for three project and a large enough number of voters $n$. The \mname{Independent Markets} mechanism utilizes the phantoms $(\min\{k \cdot t,1\})_{k \in \zerotok{n}}$.
        
        \begin{theorem}\label{thm:IMlowerbound3}
            The \mname{Independent Markets} mechanism is at least $0.6862$-approximate for three projects.
        \end{theorem}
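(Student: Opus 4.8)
The plan is to prove the bound by exhibiting, for every large enough $n$, an explicit preference profile on which the \mname{Independent Markets} mechanism incurs $\ell_1$-loss at least $0.6862$. Since we only need a lower bound, a single well-chosen family of profiles suffices, and Theorem~\ref{thm:WorstCaseInstances} tells us where to look: among {\profilename} profiles. So I would parametrize the candidate family exactly as in the Non-Linear Program of Figure~\ref{fig:NLP} — by the numbers $a_j$ of single-minded voters per project, the numbers $b_{j,k}$ of double-minded voters of each type, a block of $C$ fully-satisfied voters all proposing a common division $\x$, and $n$ — but now with the phantom functions fixed to the \mname{Independent Markets} phantoms $y_k(t)=\min\{kt,1\}$. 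Because the \mname{Independent Markets} mechanism is itself a moving phantom mechanism, Theorem~\ref{thm:moving_phantoms_lower_bound} already guarantees a loss of $1-1/3=2/3$; the content of the statement is that the specific clamping structure of these phantoms lets us push the loss past $2/3$ by roughly $0.02$.

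First I would fix a candidate regime and, for each project $j\in[3]$, express the median $f_j(\V)$ as a piecewise-linear function of $t$ and of the profile parameters. Concretely, in each coordinate one sorts the $n$ reports together with the $n+1$ phantoms and, via the slot-counting argument behind Lemma~\ref{lem:NecessaryConditions}, identifies which reports (the $1$-valued reports, the complementary values $1-x_k$, the values $x_j$, the $0$-valued reports) and which phantoms — the \emph{unclamped} ones with $kt<1$ versus the \emph{clamped} ones with $kt\ge 1$ — fall in the lower $n$ slots versus the upper $n$ slots. The clamping at $1$ is the one genuinely new feature relative to the Uniform Phantom analysis of Theorem~\ref{thm:Uniform2projects}: it makes $f_j(t)$ only \emph{piecewise} linear, with breakpoints wherever a phantom crosses $1$ or a report crosses a phantom, so those breakpoints have to be tracked carefully.

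Next I would impose feasibility, $\sum_{j\in[3]} f_j(t)=1$, solve for $t^*$ inside the assumed regime, and then check \emph{a posteriori} that the $t^*$ obtained really lies in that regime — the usual self-consistency check for moving-phantom computations. With $t^*$ fixed, $f(\V)$ is determined, the proportional division $\bar{\V}$ is an affine function of the parameters, and, after fixing the sign pattern of the three differences $f_j(\V)-\bar{v}_j$ to remove the absolute values, $\ell(\V)=\sum_{j\in[3]}\absolute{f_j(\V)-\bar{v}_j}$ becomes an explicit expression in the remaining parameters. Maximizing this expression over the parameter polytope (a short KKT computation, or simply reading off the numerical optimum) gives an optimum equal to the root of a low-degree polynomial; evaluating that root shows it exceeds $0.6862$. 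A clean way to present the argument is to display the single optimal profile — a convex combination of a few single-minded and double-minded types together with the fully-satisfied block — state the resulting $t^*$ and $f(\V)$, and verify $\ell(\V)\ge 0.6862$ by direct substitution, absorbing the finite-$n$ integrality gaps ($\lfloor\alpha n\rfloor$ versus $\alpha n$) by taking $n$ divisible by a suitable constant and letting $n\to\infty$.

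The main obstacle I anticipate is exactly the case analysis in the second step. Because the \mname{Independent Markets} phantoms bunch up at $1$, the correct ordering of the (clamped and unclamped) phantoms relative to the reports — hence the correct linear formula for each $f_j(t)$ — changes across several $t^*$-regimes, and one must make sure the regime used to derive $t^*$ is the one that actually contains it; getting this wrong is the easiest way for such an argument to fail silently. Everything else — the feasibility equation, the sign-pattern bookkeeping, and the final optimization — is routine once the correct regime has been pinned down.
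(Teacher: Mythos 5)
Your proposal is a strategy outline, not a proof. The theorem asserts a concrete numerical bound, $0.6862$, and the only thing that can certify it is an explicit witness profile together with the corresponding $t^*$, the verified outcome $f(\V)$, and the evaluated loss. You correctly describe the shape such a proof must take ("display the single optimal profile, state $t^*$ and $f(\V)$, verify $\ell(\V)\ge 0.6862$"), but you never produce the profile or the numbers, and without them the argument has no content: there is nothing to check. The whole difficulty of the theorem is locating the right point in the parameter polytope, and that step is exactly what is missing.

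For comparison, the paper's construction is considerably simpler than your plan anticipates, on two counts. First, the optimal-looking profile uses only two voter types, not three: $\lfloor n\rho\rfloor$ single-minded voters proposing $(1,0,0)$ and $\lceil n(1-\rho)\rceil$ fully-satisfied voters proposing $\x=(\sqrt{2}-1,\,1-\sqrt{2}/2,\,1-\sqrt{2}/2)$, with $\rho=2-\sqrt{2}$; no double-minded voters appear, so the free parameters you propose to optimize over collapse to essentially a single scalar $\rho$. Second, your flagged "main obstacle" — the clamping $y_k(t)=\min\{kt,1\}$ reaching $1$, with the resulting piecewise regime analysis — never arises: the feasible $t^*=\tfrac{\sqrt 2}{2n}$ gives $y_n(t^*)=\sqrt 2/2<1$, so every phantom stays strictly below $1$ and the Independent Markets phantoms behave exactly like an unclamped uniform spread on $[0,\sqrt 2/2]$. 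Verifying $f(\V)=\x$ is then a direct slot-count (no regime-consistency case split is needed), and the loss evaluates to $(3-2\sqrt 2)\bigl(1-\lceil n(1-\rho)\rceil/n\bigr)+\lfloor n\rho\rfloor/n$, which exceeds $0.6862$ once $n\ge 2\cdot 10^4$. Your appeal to Theorem~\ref{thm:WorstCaseInstances} to restrict the search is sound but unnecessary for a lower bound, since one exhibited instance suffices; and the rounding is handled by taking $n$ large, not by taking $n$ divisible by a constant and passing to a limit.

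To repair the proposal you would need to actually carry out the optimization you sketch (over the two-type subfamily, or the full three-type family and observe the double-minded counts vanish at the optimum), extract $\rho=2-\sqrt{2}$ and the associated $\x$ and $t^*$, and then perform the slot-count and arithmetic. As written, the gap between "here is a search strategy" and "here is a profile with loss $\ge 0.6862$" is precisely the theorem.
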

        
        \begin{proof}
            Let $f$ be the \mname{Independent Markets} mechanism and let $\rho=2-\sqrt{2} \approx 0.5858$. Consider a preference profile $\V$ with $n$ voters, where $\floor{n\rho}$ voters propose the division $(1,0,0)$ while $\ceil{n(1-\rho)}$ voters propose the division $\x=(\sqrt{2}-1,1-\sqrt{2}/2,1-\sqrt{2}/2)$. See also Figure~\ref{fig:matrix_IM3}.
            Let that $t=\frac{\sqrt{2}}{2n}$. Then, $ x_1 =n \rho t \geq \floor{n\rho}\,t$, i.e $\floor{n\rho}+1$ phantom values with indexes $0$ to $\floor{n\rho}$ are at most equal to $x_1$. Hence, there exists $n+1$ values (phantoms and voters' reports) at most equal to $x_1$, thus $f_1(\V)=x_1$.
            Similarly, $x_j= n(1-\rho)\,t \leq \ceil{n(1-\rho)}\,t$ for $j \in \{1,2\}$, i.e. the $n+1 - \floor{n\rho}$ phantom values with indices $\floor{n\rho}$ to $n$ are at least equal to $x_j$. Hence there exists $n+1$ values at least equal to $x_j$, thus $f_j(\V)=x_j$ for $j \in \{2,3\}$.
            
            The $\ell_1$-loss for the preference profile $\V$ is  $$\ell(\V)=\left(3-2\sqrt{2}\right)\left(1 - \frac{\ceil{n(1-\rho)}}{n}\right) +  \frac{\floor{n\rho}}{n}  \geq  0.6862.$$  The inequality holds for $n \geq 2\cdot 10^4$.  
        \end{proof}

\begin{figure}[htb]
    \centering
        \begin{tikzpicture}
            \tikzstyle{column 1}=[anchor=base west]
            \tikzstyle{column 2}=[anchor=base east]
            \tikzstyle{column 3}=[anchor=base]
            \matrix[matrix of math nodes, left delimiter=(, right delimiter=),style={nodes={anchor=base},text width=width("MMMMM"),align=center}] (m)
            {
            1 & 0 & 0  \\
            1 & 0 & 0 &\\
            \vdots & \vdots & \vdots \\
            1 & 0 & 0 &\\
            0.4142 & 0.2929 & 0.2929 \\
            \vdots & \vdots & \vdots \\
            0.4142 & 0.2929 & 0.2929 \\
            };
            
            \node[right=4pt of m-1-3] (right-1) {};
            \node[right=4pt of m-4-3] (right-4) {};
            \node[right=4pt of m-5-3] (right-5) {};
            \node[right=4pt of m-7-3] (right-7) {};

            \node[rectangle,right delimiter=\}] (del-right-1) at ($0.5*(right-1.west) +0.5*(right-4.west)$) {\tikz{\path (right-1.north east) rectangle (right-4.south west);}};
            
            \node[right=22pt] at (del-right-1.west) {$\floor{n\rho}$ voters};
            
            \node[rectangle,right delimiter=\}] (del-right-1) at ($0.5*(right-5.west) +0.5*(right-7.west)$) {\tikz{\path (right-5.north east) rectangle (right-7.south west);}};
            
            \node[right=22pt] at (del-right-1.west) {$\ceil{n(1-\rho)}$ voters};

        \end{tikzpicture}
    \caption{The preference profile which yields a loss at least $0.6862$ for the Independent Markets mechanism.}

    \label{fig:matrix_IM3}
\end{figure}
    \subsection{Lower Bounds for Many Projects}

In this section we provide two impossibility results for large $m$. These results show that the Independent Markets, the {\mechanismname} mechanisms and any utilitarian mechanism may yield loss that approximates $2$, the worst possible $\ell_1$-loss, as $m$ grows. For the two proportional mechanisms, Independent Markets and {\mechanismname}, we use the same construction to show that the $\ell_1$-loss can be as large as $2-\frac{8}{m^{1/3}}$, for large $m$. Then, we focus on the mechanisms maximizing the social welfare and we show an even higher lower bound, at $2-\frac{4}{m+1}$ for every $m \geq 3$.
 
\subsubsection{Proportional Mechanisms}

In this section we show that both the {\Mechanismname} and the \mname{Independent Markets} mechanisms yield an $\ell_1$-loss which is arbitrarily close to $2$, for large enough number of projects. An interesting open question regarding this is whether this holds for any proportional mechanism.

\begin{theorem}\label{thm:LowerBoundLarge_m}
Both the {\Mechanismname} mechanism and the \mname{Independent Markets} mechanism yields loss at least $2- \frac{8}{m^{1/3}} $ for $m \geq 8$.
\end{theorem}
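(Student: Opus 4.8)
The plan is to produce, for every $m\ge 8$, one preference profile on which \emph{both} mechanisms output the uniform division, while the proportional division is close to antipodal to it. Fix $K=\ceil{m^{1/3}}$ and let $n$ be a large multiple of $K+1$. Split the voters into $K+1$ equal groups of $n/(K+1)$: for each $i\in[K]$, group $i$ is single-minded on project $i$; the last group proposes the division equal to $\tfrac1{m-K}$ on each of the projects $K+1,\dots,m$ and $0$ on $1,\dots,K$. Call this profile $\V$. Its proportional division is $\bar\V_i=\tfrac1{K+1}$ for $i\le K$ and $\bar\V_j=\tfrac1{(K+1)(m-K)}$ for $j>K$; thus $\bar\V$ concentrates a $\tfrac K{K+1}\to1$ fraction of the budget on the $K=\Theta(m^{1/3})$ ``special'' coordinates, so the $\ell_1$-distance from $\bar\V$ to the uniform division will approach $2$.

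The heart of the argument is to establish $f(\V)=(\tfrac1m,\dots,\tfrac1m)$ for each phantom system. I would fix an explicit feasibility point $t^*$ --- for the Independent Markets mechanism $t^*=\tfrac{K+1}{mn}$, which makes the phantoms $y_k=\tfrac{k(K+1)}{mn}$ for $k\in\zerotok n$ (all below $1$ since $K<m-1$); for the {\mechanismname} mechanism $t^*=\tfrac12+\tfrac{K+1}{2m}$, which lies in the regime $t\ge\tfrac12$ and places the $n/2$ black phantoms uniformly in $[0,\tfrac{K+1}{2m}]$ with the same spacing $\tfrac{K+1}{mn}$ --- and then check coordinate-by-coordinate that the median at $t^*$ equals $\tfrac1m$. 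This is a slot-counting check in the sorted array of $2n+1$ values. In both systems there are exactly $n/(K+1)$ phantoms strictly below $\tfrac1m$, one phantom equal to $\tfrac1m$, and $Kn/(K+1)$ phantoms strictly above. On a special project $i$ the reports are $n/(K+1)$ ones and $Kn/(K+1)$ zeros, so the merged array has exactly $n$ values below $\tfrac1m$ and (with the phantom at $\tfrac1m$) at least $n+1$ values $\le\tfrac1m$, placing $\tfrac1m$ in slot $n+1$. On a non-special project $j$ the reports are $Kn/(K+1)$ zeros and $n/(K+1)$ copies of $\tfrac1{m-K}>\tfrac1m$, so those positive reports sit strictly above the slot-$(n+1)$ position and the same count gives median $\tfrac1m$. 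Since all $m$ medians equal $\tfrac1m$ they sum to $1$, so $t^*$ satisfies (\ref{eq:feasibletstar}); as the output of a moving phantom mechanism does not depend on which feasible $t^*$ is chosen, $f(\V)=(\tfrac1m,\dots,\tfrac1m)$. Integrality of the group sizes is absorbed by taking $n\to\infty$, as in Theorem~\ref{thm:IMlowerbound3}.

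With $f(\V)$ and $\bar\V$ in hand, $\ell(\V)=\sum_j\absolute{f_j(\V)-\bar\V_j}=K\!\left(\tfrac1{K+1}-\tfrac1m\right)+(m-K)\!\left(\tfrac1m-\tfrac1{(K+1)(m-K)}\right)=\tfrac{K-1}{K+1}+1-\tfrac{2K}m$; substituting $K=\ceil{m^{1/3}}$ and using $\tfrac{K-1}{K+1}\ge 1-\tfrac2{m^{1/3}}$ together with $\tfrac{2K}m\le 2m^{-2/3}+2m^{-1}\le 6m^{-1/3}$ for $m\ge 8$ yields $\ell(\V)\ge 2-8m^{-1/3}$. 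I expect the main obstacle to be the slot-counting of the previous paragraph, carried out uniformly for the two phantom systems: Independent Markets is essentially immediate, but for the {\mechanismname} mechanism one must verify that $t^*$ indeed lands in the $t\ge\tfrac12$ branch and, crucially, that its cluster of $n/2$ low phantoms --- which in general can drag medians down --- does not pull the special-project medians below $\tfrac1m$; the inequality $\tfrac1m<\tfrac1{m-K}$ (essentially $K<m$) is what keeps the construction consistent, and the resulting slack $\tfrac{2K}m$ contributed by the $K$ special coordinates is exactly what limits the rate to $\Theta(m^{-1/3})$ for this choice of $K$.
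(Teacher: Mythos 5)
Your proof is correct and uses a genuinely different construction than the paper's. The paper takes exactly $m$ voters, of whom $z=\floor{m-m^{2/3}}$ are single-minded (one per project) and $m-z$ are fully-satisfied at a carefully chosen non-uniform division $\x$ (with coordinates $\frac{1}{a+z}$ and $\frac{m-z}{a+z}$, $a=(m-z)^2$); the delicate step is proving the mechanism's output coincides with $\x$. You instead take $n$ voters in $K+1$ equal groups with $K=\ceil{m^{1/3}}$, and the output you force is the \emph{uniform} division $(1/m,\dots,1/m)$ — a cleaner target that makes the slot-counting more symmetric (for every project, exactly $n$ values strictly below, one at, and $n$ strictly above $1/m$). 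Your choice of feasibility parameter $t^*=\frac{K+1}{mn}$ (Independent Markets) and $t^*=\frac12+\frac{K+1}{2m}$ (Piecewise Uniform) produces identical low-phantom spacing $\frac{K+1}{mn}$ in both systems, which is what lets a single counting argument cover both mechanisms; you correctly note that $K\ge 2$ (true for $m\ge 8$) ensures $n/(K+1)<n/2$ so the phantom at $1/m$ is a black phantom and the red phantoms all lie above $\frac{K+1}{2m}>\frac1m$. The closed-form loss $\frac{K-1}{K+1}+1-\frac{2K}{m}$ and the elementary estimates ($K+1\ge m^{1/3}$, $K\le m^{1/3}+1$, hence $\frac{2K}{m}\le 6m^{-1/3}$ for $m\ge 1$) yield the claimed $2-8m^{-1/3}$. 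One small remark: the $n\to\infty$ step is unnecessary — any $n$ that is a multiple of $K+1$, e.g.\ $n=K+1$, already gives the exact value of $\ell(\V)$; unlike the paper's Theorem~\ref{thm:IMlowerbound3} there is no rounding error to absorb. The paper's approach buys a fixed voter count $n=m$ and ties the construction to the notion of fully-satisfied voters; yours buys a uniform output, one counting argument shared by both phantom systems, and independence from $n$.
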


\begin{proof}
We will construct a preference profile with $m$ projects and $m$ voters. In this profile a supermajority of the voters (denoted by the integer variable $z$) are single-minded, towards a unique project each. The rest $m-z$ are fully-satisfied. 

Let $z=\floor{m-m^{2/3}}$ and $a=(m-z)^2$ for some $m \geq 8$.
Let $\s^j$ be the division such that $\s^j_j=1$ for $j \in [m]$ and $\x$ be the division such that $x_{j}=\frac{1}{a+z}$ for $j \in [z]$ and $x_{j}=\frac{m-z}{a+z}$ for $j \in \{z+1,...,m\}$.
Consider an instance $\V$ such that $\v_i=\s^i$ for each $i \in [z]$ while for each $i \in \{z+1,...,m\}$, $\v_i=\x$.

We will first show that the {\mechanismname} mechanism returns the division $\x$ for any $m \geq 8$.
Let  that $t^*=\frac{1}{2}+\frac{1}{2(a+z)}$. Let that $m-z\leq m/2$. This is true for every $m>8$ and implies all phantom values with indices at most $m-z$ are black phantoms, i.e. $y(1,t^*)=\frac{1}{a+z}$. Hence, $f_j(\V)=\frac{1}{a+z}$ for projects $j \in [z]$. To see this, notice in these projects there exists only one voter's report with value strictly higher than $\frac{1}{a+z}$ (the report of the single-minded voter), an well as $m-1$ phantom values (those with indices $2$ to $m$). Since $\frac{1}{a+z}$ is the largest of the other $m+1$ values. Similarly, $y(m-z,t^*)=\frac{m-z}{a+z}$ and  $f_j(\V)=\frac{m-z}{a+z}$ for projects $j \in \{z+1,...,m\}$; there exists $z$ voters' reports with value $0$ and $m-z+1$ phantom values smaller than $y(m-k,t^*)$, while the smaller value of the rest is equal to $\frac{m-z}{a+z}$.
In a similar manner, by using $t^*=\frac{1}{a+z}$, the \mname{Independent Markets} mechanism returns the same outcome.

To compute the loss for $\V$ for the outcome $\x$, first notice that $\absolute{\bar{v}_j-f_j}= \frac{1}{m} + \frac{m-z}{m}\frac{1}{a+z} - \frac{1}{a+z} = \frac{a}{m(a+z)}$ for all $j \in [z]$ and $\absolute{\bar{v}_j-f_j}= \frac{m-z}{a+z} -\frac{m-z}{m} \cdot \frac{m-z}{a+z}=\frac{(m-z)z}{m(a+z)}$. 
(Note that $\frac{1}{m} + \frac{m-z}{m}\frac{1}{a+z} \geq \frac{1}{a+z} $ and $\frac{m-z}{a+z} \leq \frac{m-z}{m} \cdot \frac{m-z}{a+z} $). Hence the loss is equal to

\begin{align}
\ell(\V)&= z\cdot \frac{a}{m(a+z)} + (m-z)\cdot \frac{(m-z)z}{m(a+z)} \nonumber \\
& = 2\cdot\frac{a}{a+z} \cdot\frac{z}{m} \nonumber \\
& > 2 \left( \frac{m^{4/3}}{m^{4/3}+2m}\right)\left(\frac{m-m^{2/3}-1}{m}\right) \nonumber \\
& \geq 2 \left( 1- \frac{2}{m^{1/3}}\right)^2 \nonumber \\
& \geq  2- \frac{8}{m^{1/3}} 
\end{align}

For the first inequality we have used the following facts: $z > m-m^{2/3}-1$, $ a \geq m^{4/3}$ and $ a+z \leq m^{4/3}+2m$ for $m \geq 5$. The first one is due to $ x \geq \floor{x}>x-1$ for any $x \in \mathbb{R}$.
This implies alos that $a = (m - \floor{m-m^{2/3}})^2 \geq m^{4/3}$. Also, notice that $a+z = (m - \floor{m-m^{2/3}})^2 +  \floor{m-m^{2/3}} \leq (m^{2/3}+1)^2 + m-m^{2/3} +1 \leq m^{4/3}+2m$.
Second inequality is due to $\left( \frac{m^{4/3}}{m^{4/3}+2m}\right) \geq 1-\frac{2}{m^{1/3}}$ for $m \geq 0$ and 
$\left(\frac{m-m^{2/3}-1}{m}\right) \geq 1-\frac{2}{m^{1/3}}$ for $m \geq 1$. The last inequality is due to $\left(1-\frac{c}{x}\right)^2 \geq \left(1-\frac{2c}{x}\right)$ for $c\geq 0$ and $x>0$.
\end{proof}


\begin{figure}[htb]
    \centering
        \begin{tikzpicture}
            \tikzstyle{column 1}=[anchor=base west]
            \tikzstyle{column 2}=[anchor=base east]
            \tikzstyle{column 3}=[anchor=base]
            \matrix[matrix of math nodes, left delimiter=(, right delimiter=),style={nodes={anchor=base},text width=width("MM"),align=center}] (m)
            {
            1 & 0 & 0 &\cdots & 0 & 0 & \cdots & 0 \\
            0 & 1 & 0 &\cdots & 0 & 0 & \cdots & 0 \\
            0 & 0 & 1 &\cdots & 0 & 0 & \cdots & 0 \\
            \vdots & \vdots & \vdots &\ddots & \vdots & 0 & \cdots & 0\\
            0 & 0 & 0 &\cdots & 1 & 0 & \cdots & 0 \\
            \frac{1}{a+z} & \frac{1}{a+z} & \frac{1}{a+z} &\cdots & \frac{1}{a+z} & \frac{m-z}{a+z} & \cdots & \frac{m-z}{a+z} \\
            \vdots & \vdots & \vdots &\vdots & \vdots & \vdots & \cdots & \vdots\\
            \frac{1}{a+z} & \frac{1}{a+z} & \frac{1}{a+z} &\cdots & \frac{1}{a+z} & \frac{m-z}{a+z} & \cdots & \frac{m-z}{a+z} \\
            };
            \draw[dashed] ($0.5*(m-1-5.north east)+0.5*(m-1-6.north west)$) -- ($0.5*(m-8-6.south east)+0.5*(m-8-5.south west)$);
            \draw[dashed] ($0.5*(m-5-1.south west)+0.5*(m-6-1.north west)$) -- ($0.5*(m-5-8.south east)+0.5*(m-6-8.north east)$);
            
            \node[above=1pt of m-1-1] (top-1) {};
            \node[above=1pt of m-1-5] (top-5) {};
            \node[above=-1pt of m-1-6] (top-6) {};
            \node[above=-1pt of m-1-8] (top-8) {};
            
            \node[right=4pt of m-1-8] (right-1) {};
            \node[right=4pt of m-5-8] (right-5) {};
            \node[right=4pt of m-6-8] (right-6) {};
            \node[right=4pt of m-8-8] (right-8) {};
            
            \node[rectangle,above delimiter=\{] (del-top-1) at ($0.5*(top-1.south) +0.5*(top-5.south)$) {\tikz{\path (top-1.south west) rectangle (top-5.north east);}};
            \node[above=10pt] at (del-top-1.north) {$z$};
            \node[rectangle,above delimiter=\{] (del-top-2) at ($0.5*(top-6.south) +0.5*(top-8.south)$) {\tikz{\path (top-6.south west) rectangle (top-8.north east);}};
            \node[above=10pt] at (del-top-2.north) {$m-z$};

            \node[rectangle,right delimiter=\}] (del-right-1) at ($0.5*(right-1.west) +0.5*(right-5.west)$) {\tikz{\path (right-1.north east) rectangle (right-5.south west);}};
            \node[right=22pt] at (del-right-1.west) {$z$};
            \node[rectangle,right delimiter=\}] (del-right-2) at ($0.5*(right-6.west) +0.5*(right-8.west)$) {\tikz{\path (right-6.north east) rectangle (right-8.south west);}};
            \node[right=22pt] at (del-right-2.west) {$m-z$};

        \end{tikzpicture}
    \caption{The preference profile which yields a loss at least $2- \frac{8}{m^{1/3}}$ for both the {\mechanismname} and the Independent Markets mechanisms.}

    \label{fig:matrix_large_m}
\end{figure}

\subsubsection{Utilitarian mechanisms}

In this part, we focus on a family of budget aggregation mechanisms, which we call \emph{utilitarian} mechanisms. A utilitarian mechanism returns an aggregated division which maximizes the social welfare. This is done by minimizes the $\ell_1$ distance between the outcome and each voter's proposal, i.e. for a utilitarian mechanism $f$ and a preference profile $\V=(\v_1,...,\v_n)$, then

\begin{equation}
	f(\V) \in \argmin_{\x \in \mathcal{D}(m)} \sum_{i \in [n]} d( \v_i, \x ).
\end{equation}

Mechanisms that maximize the social welfare have been examined in the literature (see Section~\ref{sec:Intro}). One of them is a moving phantom mechanism and is proven to be the unique Pareto Optimal mechanism in this family, providing a dichotomy between Proportional the Pareto Optimal moving phantom mechanisms. Here we show that any utilitarian mechanism, inevitably yields very high loss. In contrast to the two proportional mechanisms, these mechanisms yields quite large loss even for small $m$, e.g. for three project, the $\ell_1$-loss is at least $1$.   

\begin{theorem}\label{thm:lower_bound_utilitarian}
	For any utilitarian mechanism, there exists an instance with $\ell_1$-loss equal to $2-\frac{4}{m+1}$.
\end{theorem}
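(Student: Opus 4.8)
The plan is to produce a single explicit profile on which the utilitarian objective has a \emph{unique} minimiser, a vertex of the simplex, so that the conclusion is insensitive to tie-breaking, and then to read off the loss by a direct computation. No auxiliary results beyond the definition of a utilitarian mechanism are needed.

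\emph{The instance.} Set $n=m+1$ and let all voters be single-minded: two voters report $\s^1=(1,0,\dots,0)$ and, for each project $j\in\{2,\dots,m\}$, exactly one voter reports $\s^j$. The proportional division of this profile is $\bar\V$ with $\bar v_1=\tfrac{2}{m+1}$ and $\bar v_j=\tfrac{1}{m+1}$ for $j\ge 2$.

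\emph{Key step: the utilitarian outcome is forced to be $\s^1$.} I would first record that for any division $\x\in\mathcal D(m)$ and any project $j$ we have $d(\s^j,\x)=|1-x_j|+\sum_{k\neq j}x_k=2(1-x_j)$, using $x_j\le 1$ and $\sum_k x_k=1$. Summing over the profile,
\[
\sum_{i\in[n]} d(\v_i,\x)\;=\;4(1-x_1)+2\sum_{j=2}^{m}(1-x_j)\;=\;2m-2x_1,
\]
where the last equality uses $\sum_{j=2}^m x_j=1-x_1$. This is a strictly decreasing function of $x_1$ and otherwise does not depend on $\x$, so over $\mathcal D(m)$ it is minimised exactly at $x_1=1$, i.e. at the single point $\s^1$. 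Hence every utilitarian mechanism $f$ must output $f(\V)=\s^1$. Then it only remains to evaluate
\[
\ell(\V)=d(\s^1,\bar\V)=\Bigl(1-\tfrac{2}{m+1}\Bigr)+(m-1)\cdot\tfrac{1}{m+1}=\tfrac{2(m-1)}{m+1}=2-\tfrac{4}{m+1},
\]
which for $m=3$ is exactly $1$.

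\emph{Where care is needed.} The only delicate point is the quantifier ``any'' in ``any utilitarian mechanism'': the bound must survive every tie-breaking rule. Here this costs nothing, precisely because the utilitarian objective collapses to the strictly monotone single-coordinate function $x_1\mapsto 2m-2x_1$, so $\argmin_{\x\in\mathcal D(m)}\sum_i d(\v_i,\x)$ is genuinely a singleton; I do not anticipate any further obstacle.
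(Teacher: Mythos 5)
Your proof is correct and follows essentially the same approach as the paper: same instance (one vertex repeated, one voter at each other vertex, $n=m+1$), same reduction of the utilitarian objective to a linear function of $x_1$ alone, and same final loss computation. Your explicit identity $d(\s^j,\x)=2(1-x_j)$ and the remark that the minimiser is unique (so tie-breaking is irrelevant) are slightly cleaner bookkeeping, but the argument is identical in substance.
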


\begin{proof}
Let $\s^j$ be the division such that $s^j_j=1$ for $j \in [m]$.
Consider a preference profile with $m$ projects and $m+1$ voters. Each voter $i \in [m+1]$ is single-minded, and $\v_i=\s^i$ when $i \mod m =0$. Hence, project $1$ its fully supported by two voters, while the other projects are supported by one voter each.

Let $f$ be a utilitarian mechanism and let $f(\V)=\x$.
Let $x_1=1-\epsilon$ and $\sum_{j=2}^m x_j=\epsilon$ for some $\epsilon \in [0,1]$. We call the quantity $\sum_{i \in [n]} d( \v_i, \x )$ the \emph{social cost} (SC). Recall that a utilitarian mechanism minimizes the social cost.

\begin{align}
	\text{SC} &= \sum_{i =1 }^{m+1} d( \v_i, \x )  \nonumber \\
	&=  \sum_{i=1 }^{m+1} \sum_{j=1}^{m} \absolute{v_{i,j} - x_j}  \nonumber \\
	& = 2(1-x_1) + 2\sum_{j=2}^m x_j +  \sum_{i =2 }^m \left(  2-2x_i \right) \label{eq:SWlb1} \nonumber \\
	&= 2\epsilon + 2(m-1) 
\end{align}

The social cost is minimized for $\epsilon=0$, i.e. the only possible outcome of a utilitarian mechanism for this preference profile is $x_1=1$ and $x_j$ for $j \in \{2,...,m\}$ for $j \in [m]$.

On the other hand, the proportional division assigns $\bar{\V}_1=\frac{2}{m+1} $ and $\bar{\V}_j=\frac{1}{n+1}$ for any $j \in \{2,...,m\}$. Eventually, the $\ell_1$ loss is:
\begin{align}
\ell(\V) &= \left(1-\frac{2}{m+1}\right) + \sum_{j=2}^m \frac{1}{m+1} \nonumber \\
&= 2-\frac{4}{m+1}.
\end{align} 
\end{proof}

\begin{figure}[htb]
    \centering
    \begin{tikzpicture}
    \matrix[matrix of math nodes, left delimiter=(, right delimiter=),style={nodes={anchor=base},text width=width("MM"),align=center}]
            {
            1 & 0 & 0 & 0 \\
            0 & 1 & 0 & 0 \\
            0 & 0 & 1 & 0 \\
            0 & 0 & 0 & 1 \\
            1 & 0 & 0 & 0 \\
            };
    \end{tikzpicture}
    \caption{An example of the construction used in Theorem \ref{thm:lower_bound_utilitarian},  with $5$ voters and $4$ projects.}
    \label{fig:matrix_utilitarian}
\end{figure}

        \section{Discussion}
    
    This paper proposes an approximation framework that rates budget
    aggregation mechanisms according to the worst-case distance from
    the proportional allocation, a natural fairness desideratum. We
    propose essentially optimal mechanisms within the class of moving phantom
    mechanisms for the cases of two and three projects. The most
    interesting open question is whether there exists any
    $(2-\epsilon)$-approximate mechanism, for some constant
    $\epsilon>0$, with an arbitrary number of projects. Our constructions in Section~\ref{sec:lowerbounds} show that our mechanism, and two mechanisms already explored in the literature cannot yield a bound asymptotically better than $2$. 
    
    
    Moving beyond moving phantom mechanisms, one can ask the questions: Is there any mechanism with worst-case $\ell_1$-loss smaller than $2/3$ for the case of three projects? Our lower bound for any truthful mechanism is just $1/2$. While this lower bound is very simple, and probably a more sophisticated construction may answer the question negatively, we should note that until now the existence of one such mechanism is still possible. Even more, one could ask for mechanisms with improved approximation guarantees (e.g. even below $1/2$ by using a relaxed version of truthfulness).
    
    Our approximation framework can be expanded for other desirable properties. For example, it is natural to study moving phantom mechanisms with good approximation guarantees with respect to the \emph{egalitarian} or the \emph{Nash} social welfare (see~\cite{aziz2021participatory}).
    We also note that similar notions of approximation can be defined for other participatory budgeting problems. Consider for example the case where the voters divide the budget using approval voting, and each voter's utility is the proportion of the budget given to the projects she approves (see~\cite{bogomolnaia2005} for this line of work). A well-known fairness notion under this model is the \emph{fair share}, which demands that each voter has a utility of at least $1/n$, where $n$ is the number of voters. For this example one can use as approximation the worst-case distance between the outcome of the mechanism and all the divisions which satisfy the fair share property. 
    

    \bibliographystyle{abbrvnat}
    \bibliography{references}
    \appendix
    \appendix
\section{The {\mechanismname} mechanism is a Moving Phantom mechanism}\label{appdx:Inclusion}

In this section we show that the {\mechanismname} mechanism is a moving phantom mechanism. For that, we will present an alternative phantom system $\mathcal{Y}^{\text{PU}'}$ which satisfies Definitions~\ref{def:moving_phantoms}. Then, we show that the {\mechanismname} mechanism simulates this new alternative definition.

\noindent The alternative phantom system is $\mathcal{Y}^{\text{PU}'}=\{ y'_k(t): k \in  \zerotok{n} \}$, for which 

            \begin{align}\label{eq:NewMechanism:G:1}
                y'_k(t)=\begin{cases}
                0 & \frac{k}{n} < \frac{1}{2} \\
                {\frac {4tk}{n\left( \frac{1}{2}-\epsilon\right)}}-2\,\frac{t}{\frac{1}{2}-\epsilon} & \frac{k}{n} >\geq \frac{1}{2} \\
                \end{cases}
            \end{align}
            for $t<1/2 - \epsilon$,
            
            \begin{align}\label{eq:NewMechanism:G:2}
                y'_k(t)=\begin{cases}
                    \frac{k(2\,t-1)}{n} + \frac{2k\epsilon}{n} + \frac{2k\epsilon}{n}  &  \frac{k}{n} < \frac{1}{2} \\
                    \frac{k(3-2\,t)}{n} -2 + 2\,t + 2\epsilon - \frac{2k\epsilon}{n} &  \frac{k}{n} \geq \frac{1}{2}, \\
                \end{cases}
            \end{align}
            
        \noindent    for $ 1/2 - \epsilon \leq t<1-\epsilon$, and
            
            \begin{align}
                y'_k(t)= \frac{k}{n}\left( \frac{1-t}{\epsilon}\right) + \frac{t-1}{\epsilon}+1,
            \end{align}
            
         \noindent   for $t>1-\epsilon$, for some $0<\epsilon<1/2$.

         Observe that this alternative phantom system satisfies Definition~\ref{def:moving_phantoms}: All phantom functions are continuous, $y_{k}(0)=0$ and $y_{k}(1)=1$ for all $k \in \zerotok{n}$. Also, $y_{k+1}(t) \geq y_k(t)$ for all $k \in \zerotok{n-1}$.
        
        We notice also that there exists no feasible solution for $t>1-\epsilon$, for this phantom system. For $t=1-\epsilon$ the phantoms of the mechanism described by the phantom system $\mathcal{Y}^{\text{PU}'}$ are exactly the phantoms used by the Uniform Phantom mechanism. In the following lemma we show that the sum of the medians of the Uniform Phantom mechanism is at least $1$. Hence, any phantom returned by the phantom system $\mathcal{Y}^{\text{PU}'}$ with $t>1-\epsilon$, would return an outcome that sums to a value strictly larger than $1$.
        
    \begin{lemma}\label{lem:UniformSumsGreaterThan1}
	    Let $\x$ be the outcome of the Uniform Phantom mechanism on an arbitrary preference profile for some $m\geq 3$. Then $\sum_{j \in [m]} x_j \geq 1$.
    \end{lemma}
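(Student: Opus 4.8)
The plan is to argue coordinatewise, via a double-counting of the voters who ``overshoot'' the output in a project. Write $x_j = f_j(\V) = \med\{v_{1,j},\dots,v_{n,j},(k/n)_{k\in\{0,\dots,n\}}\}$ for the $j$-th coordinate of the outcome; this is the value in slot $n+1$ of the sorted array of these $2n+1$ numbers. The first step is the following elementary counting bound: for every project $j$, the number of voters with $v_{i,j} > x_j$ is at most $n x_j$. Indeed, since $x_j$ occupies slot $n+1$, at most $n$ of the $2n+1$ values strictly exceed $x_j$; among the $n+1$ phantoms, those exceeding $x_j$ are exactly the $k/n$ with $k > n x_j$, of which there are $n-\floor{n x_j}$, so at most $\floor{n x_j}\le n x_j$ of the voters' reports can exceed $x_j$.

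With this in hand I would finish by contradiction. Suppose $\sum_{j\in[m]} x_j < 1$. Then every voter must overshoot the output in at least one project: if voter $i$ had $v_{i,j}\le x_j$ for all $j$, then $1=\sum_{j\in[m]} v_{i,j}\le\sum_{j\in[m]} x_j<1$, a contradiction. Counting the pairs $(i,j)$ with $v_{i,j}>x_j$ in two ways then gives
\[
n \;\le\; \sum_{j\in[m]}\bigl|\{i\in[n]:v_{i,j}>x_j\}\bigr| \;\le\; \sum_{j\in[m]} n x_j \;=\; n\sum_{j\in[m]} x_j \;<\; n,
\]
which is impossible. Hence $\sum_{j\in[m]} x_j \ge 1$, as claimed.

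The argument is short once one fixes on the right quantity, so the difficulty is really in the search: the natural first attempts do not work. Comparing $x_j$ coordinatewise with the mean $\bar{\V}_j$ fails, because the Uniform Phantom median can lie on either side of the mean; and comparing with the outcome of a genuine moving phantom mechanism (e.g.\ Independent Markets, which does sum to $1$) only helps when its feasible parameter happens to be small, which one does not control. The overshoot-counting bound above sidesteps both issues. The only minor points to handle carefully in the write-up are ties between a voter's report and a phantom value, and the boundary case $x_j=1$, both of which are harmless; note also that the proof in fact yields the bound for every $m\ge 2$, the case $m=2$ being tight.
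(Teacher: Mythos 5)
Your proof is correct and uses essentially the same counting argument as the paper: you bound the number of voters' reports in the upper slots per project by $\lfloor n x_j\rfloor$, and then observe that if $\sum_j x_j<1$ every voter must overshoot in some project, yielding the same contradiction. The paper phrases this dually (counting reports in the lower-plus-median slots and noting each voter contributes at most $m-1$ of them), but the underlying bookkeeping is identical; your double-counting framing is arguably a cleaner exposition of it.
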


    \begin{proof}
	Let $k_j$ be the largest index such that $\frac{k_j}{n}\leq x_j$.
	Assume for the sake of contradiction that $\sum_{j \in [m]} x_j<1$, i.e. $\sum_{j \in [m]} \frac{k_j}{n}<1$. In the slots $1$ to $n+1$, there exist exactly $k_j$ phantom values, for each project $j \in [m]$. As a result, there exist exactly $n-k_j$ voters' reports in the same slots. In total, $mn-\sum_{j \in [m]}{k_j}>n(m-1)$ voters' reports are located in the slots $1$ to $n+1$. Similarly, there exist exactly $k_j$ voters' reports in each project $j$ in the upper slots (slot $n+2$ to slot $2n+1$). Hence in total there exist exactly $\sum_{j \in [m]} k_j < n $ upper slots filled by phantom values, out of $mn$ slots. Since there are $mn$ slots in the upper phantoms, there should be at least $n(m-1)+1$ voters' reports, but we already know that at least $n(m-1)$ out of $nm$ voters' reports are located either in the lower slots or in the slots of the medians. A contradiction.
    \end{proof}

    We are ready now to show that the two phantom systems $\mathcal{Y}^{\text{PU}}$ and $\mathcal{Y}^{\text{PU}'}$ describe the same moving phantom mechanism.
    
    \begin{lemma}
        The phantom systems $\mathcal{Y}^{\text{PU}}$ and $\mathcal{Y}^{\text{PU}'}$ implement the same moving phantom mechanism. 
    \end{lemma}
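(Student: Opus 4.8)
The plan is to show that the mechanisms $f^{\mathcal{Y}^{\text{PU}}}$ and $f^{\mathcal{Y}^{\text{PU}'}}$ agree on every preference profile $\V$, by arguing that on the range of times that actually determines the outcome the two phantom systems trace out exactly the same one-parameter family of phantom vectors, up to a harmless reparametrization of time. Concretely I would establish three things, in this order: (a) an increasing piecewise-linear bijection $\phi\colon[0,1-\epsilon]\to[0,1]$ with $y'_k(t)=y_k(\phi(t))$ for all $k\in\zerotok{n}$ and $t\in[0,1-\epsilon]$; (b) that a feasible time for $\mathcal{Y}^{\text{PU}'}$ can always be taken in $[0,1-\epsilon]$, so that the extra ``wind-down'' phase $t>1-\epsilon$ is irrelevant; and (c) that on its feasible set each mechanism produces a single, well-defined vector of medians.

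For (a) the natural candidate is the $\phi$ that maps $[0,\tfrac{1}{2}-\epsilon]$ affinely onto $[0,\tfrac{1}{2}]$ and maps $[\tfrac{1}{2}-\epsilon,1-\epsilon]$ onto $[\tfrac{1}{2},1]$ via $\phi(t)=t+\epsilon$. One then checks $y'_k(t)=y_k(\phi(t))$ branch by branch --- treating the ``black'' phantoms $k/n<\tfrac{1}{2}$ and the ``red'' phantoms $k/n\ge\tfrac{1}{2}$ separately, the split criterion being identical in both systems --- by substituting $\phi$ into \eqref{eq:NewMechanism:1}--\eqref{eq:NewMechanism:2} and comparing with \eqref{eq:NewMechanism:G:1}--\eqref{eq:NewMechanism:G:2}. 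This yields $\{(y'_0(t),\dots,y'_n(t)):t\in[0,1-\epsilon]\}=\{(y_0(s),\dots,y_n(s)):s\in[0,1]\}$. I expect this to be the main obstacle: although it is ``only'' a substitution, it is where the $\epsilon$-dependent correction terms in \eqref{eq:NewMechanism:G:2} must be pinned down precisely, since they are exactly what makes both the identity $y'_k=y_k\circ\phi$ and the continuity of $y'_k$ at the seam $t=\tfrac{1}{2}-\epsilon$ hold.

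For (b), fix $\V$ and let $S(t)=\sum_{j\in[m]}\median{\V_{i\in[n],j},(y'_k(t))_{k\in\zerotok{n}}}$. Each $y'_k$ is weakly increasing, hence so is each median and so is $S$; at $t=0$ all phantoms are $0$ and $S(0)=0$, while at $t=1-\epsilon$ the phantom values are precisely $(k/n)_{k\in\zerotok{n}}$, i.e.\ those of the Uniform Phantom mechanism, so $S(1-\epsilon)\ge1$ by Lemma~\ref{lem:UniformSumsGreaterThan1}. By the intermediate value theorem some $\tau^*\in[0,1-\epsilon]$ satisfies $S(\tau^*)=1$. The same two facts applied to $\mathcal{Y}^{\text{PU}}$ (whose phantom vector at $s=1$ is also $(k/n)_{k}$, by \eqref{eq:NewMechanism:2}) show that it too admits a feasible time in $[0,1]$. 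For (c): since $S$ is continuous and weakly increasing, the feasible set $\{t:S(t)=1\}$ is a closed interval on which each median, being weakly increasing with constant sum, is constant; hence the output of each mechanism is a well-defined function of $\V$ (for $\mathcal{Y}^{\text{PU}'}$ this is also immediate from \citet{freeman2021journal}, since $\mathcal{Y}^{\text{PU}'}$ meets Definition~\ref{def:moving_phantoms}).

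Finally I would combine these: take the feasible $\tau^*\in[0,1-\epsilon]$ for $\mathcal{Y}^{\text{PU}'}$ from (b) and set $s^*=\phi(\tau^*)$. By (a), $y_k(s^*)=y'_k(\tau^*)$ for every $k$, so $\median{\V_{i\in[n],j},(y_k(s^*))_{k}}=\median{\V_{i\in[n],j},(y'_k(\tau^*))_{k}}$ for every $j\in[m]$; summing over $j$ shows that $s^*$ is feasible for $\mathcal{Y}^{\text{PU}}$, and therefore $f^{\mathcal{Y}^{\text{PU}}}_j(\V)=f^{\mathcal{Y}^{\text{PU}'}}_j(\V)$ for all $j$. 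Since $\V$ was arbitrary, the two phantom systems implement the same moving phantom mechanism.
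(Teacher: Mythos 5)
Your proposal is correct and follows essentially the same route as the paper: the paper likewise rules out feasible $t'>1-\epsilon$ via Lemma~\ref{lem:UniformSumsGreaterThan1}, and then uses exactly your two-piece reparametrization ($\phi(t')=t'/(1-2\epsilon)$ on $[0,\tfrac12-\epsilon]$ and $\phi(t')=t'+\epsilon$ on $[\tfrac12-\epsilon,1-\epsilon]$) to identify the phantom vectors, with your steps (b) and (c) making explicit what the paper leaves implicit. One heads-up for the branch-by-branch verification in your step (a): as printed, Eqs.~\eqref{eq:NewMechanism:G:1}--\eqref{eq:NewMechanism:G:2} appear to contain typos (a duplicated $2k\epsilon/n$ term and a mis-scaled first branch that breaks continuity at $t=\tfrac12-\epsilon$), so the identity $y'_k=y_k\circ\phi$ only closes against the evidently intended corrected formulas.
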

    
    \begin{proof}
    We use $y_k(t)$ to denote the functions from the phantom system $\mathcal{Y}^{\text{PU}}$ and $y'_k(t)$ to denote the functions for the phantom system $\mathcal{Y}^{\text{PU}'}$. Consider any preference profile $\V$ over $m$ projects. Let $f_j(\V)=\median{\V_{i \in [n],j},(y_k(t))_{k \in \zerotok{n}}}$ and $f'_j(\V)=\median{\V_{i \in [n],j},(y'_k(t'))_{k \in \zerotok{n}}}$, for suitable $t \in [0,1]$ and $t' \in [0,1]$ such that $\sum_{j \in m} f_j(\V)=1$ and $\sum_{j \in [m]} f'_j(\V)=1$. We will show that the {\mechanismname} implements the mechanism described by $\mathcal{Y}^{\text{PU}'}$.

    We consider the phantom system $\mathcal{Y}^{\text{PU}'}$. Assume that $t'>1-\epsilon$. By Lemma \ref{lem:UniformSumsGreaterThan1},  $\sum_{j \in [m]} f'_j(\V)>1$, i.e. no feasible outcome is possible with $t'>1-\epsilon$. Assume now that $1/2 -\epsilon < t' \leq 1-\epsilon$. Then, by using $t=t'+\epsilon$, the tuples $(y'_k(t'))_{k \in \zerotok{n}}$ and $(y_k(t))_{k \in \zerotok{n}}$ are equivalent, hence $f'(\V)=f(\V)$. Finally, assume that $f(\V)$ uses some $t\leq 1/2 - \epsilon$ and returns a feasible solution. Then, by using $t'= \frac{t}{1-2\epsilon}$ , the tuples $(y'_k(t'))_{k \in \zerotok{n}}$ and $(y_k(t))_{k \in \zerotok{n}}$ are equivalent, hence $f'(\V)=f(\V)$.
    
     
    \end{proof}
        

\section{Completeness of Theorem~\ref{thm:PUmechanism_upper_bound}}\label{appdx:zeros}

In this section we tackle the remaining case for the proof of Theorem~\ref{thm:PUmechanism_upper_bound}, where no zero values exists in the outcome. Our technique is similar to Section~\ref{ssec:upperbound3}.

Due to Theorem~\ref{thm:WorstCaseInstances} we can focus on {\profilename} profiles to upper bound the $\ell_1$-loss. Recall that the division $\x=(x_1,x_2,x_3)$ represents the outcome of the mechanism on a three-type profile. First, note that when the outcome includes two $0$ values, i.e. $\x=(1,0,0)$, any {\profilename} profile $\V$ contains only single-minded voters. Since the mechanism is proportional, the loss is $0$, and the mehanism is optimal for such profiles. We turn now our attention to the case where there exists only one $0$ value in the outcome, say $x_3=0$. For that, we need to check two possibilities. If $t > 1/2$, there can be only one phantom value equal to $0$, by the definition of the mechanism, and for $x_3=0$, at least $n$ voters' reports should be equal to $0$. Thus all voters propose $0$ for project $3$. This can be reduced to the case of $2$ projects. The {\mechanismname} mechanism can ensure a feasible solution by using the phantoms $(k/n)_{k \in \zerotok{n}}$, which we can enforce by setting $t^*=1$, i.e. the mechanism simulates the Uniform Phantom mechanism for this case. By Theorem~\ref{thm:Uniform2projects}, the $\ell_1$-loss in this case cannot be higher than $1/2$.

We are left with a single case: let that $x_1>0$, $x_2>0, x_3=0$ and $t \leq 1/2$. We will build a Non Linear Program to tackle this case. We first recall that $b_{1,2}$ and $b_{2,1}$ counts divisions $(x_1,x_2,0)$, $b_{1,3}$ counts divisions $(x_1,0,x_2)$, $b_{2,1}$ counts divisions $(x_1,x_2,0)$, $b_{2,3}$ counts divisions $(0,x_2,x_1)$, $b_{3,1}$ counts divisions $(1,0,0)$ and $b_{3,2}$ counts divisions $(0,1,0)$. Notice that for projects $j \in \{1,2\}$ the only possible voters' reports are $0,x_j$ and $1$. Hence, for some $t^*$ which enforces a valid outcome it must hold that $y(a_1 + b_{3,1},t^*) \leq x_1 \leq y(1-a_2-a_3-b_{2,3}-b_{3,2},t^*)$ and $y(a_2 + b_{3,2},t^*) \leq x_1 \leq y(1-a_2-a_3-b_{2,3}-b_{3,2},t^*)$. 

We know from the above paragraph that for project $3$, all complementary values $1-x_1$ and $1-x_2$ are positive, i.e. they are located in the upper slots. This is not always the case for projects $1$ and $2$, where we can only guarantee that $a_1$ and $a_2$ voters' reports (i.e. the $1$ valued reports) are located in the upper slots, respectively. On the other side, there exists $C+b_{2,3}+b_{3,2}+a_2+a_3$ zero values in project $1$ and $C+b_{1,3}+b_{3,1}+a_1+a_3$ from the double-minded, single-minded and the zeros of the fully-satisfied voters. In the following lemma, we use this information to impose sufficient and necessary conditions (similar to Lemma~\ref{lem:NecessaryConditions}) for the mechanism to return $\x$ as an outcome. 

    \begin{lemma}\label{lem:NC_t<1/2}
        Let that $x_1>0$, $x_2>0$, $x_3=0$. For any moving phantom mechanism $f$, defined by the phantom system $\mathcal{Y}=\{y_k(t): k \in \zerotok{n} \}$, and a {\profilename} profile $\V$ then $f(\V)=\x$ if and only if:  
        \begin{align}
            y_{a_1}(t^*) &\leq  x_1 \leq y_{ n-a_2 -a_3 +b_{2,3} + b_{3,2} }(t^*)  \label{eq:NC_t<1/2:1} \\
            y_{a_2}(t^*) &\leq  x_2 \leq y_{ n-a_1 -a_3 +b_{1,3} + b_{3,1} }(t^*)  \label{eq:NC_t<1/2:2}
        \end{align}
        
        for any \[t^* \in \left\{ t: \sum_{j \in [m] } \median{ \V_{i \in [n],j},(y_k(t))_{k \in \zerotok{n}} } =1 \right\}.\]
    \end{lemma}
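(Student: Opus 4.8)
The plan is to mimic the proof of Lemma~\ref{lem:NecessaryConditions}, adapting the slot-counting argument to the situation where $x_3=0$. The key structural facts to establish first are: (i) for project $3$, all complementary values $1-x_1$ and $1-x_2$ are positive (since $x_1,x_2>0$), so they sit in the upper slots, together with the $a_3$ reports equal to $1$; (ii) for projects $1$ and $2$, we can no longer claim that all complementary values are in the upper slots, but we can still claim that the $a_1$ (resp.\ $a_2$) reports equal to $1$ are in the upper slots, and that all the zero-valued reports in those projects are in the lower slots. Counting the zeros in project $1$: the $C$ zeros from fully-satisfied voters (as $x_3=0$, fully-satisfied voters propose $(x_1,x_2,0)$), the $a_2+a_3$ zeros from single-minded voters on projects $2$ and $3$, and the $b_{2,3}+b_{3,2}$ zeros from the relevant double-minded voters, giving $C+a_2+a_3+b_{2,3}+b_{3,2}=n-a_1-(n-a_2-a_3+b_{2,3}+b_{3,2})$ — wait, the clean way to phrase it is that the number of reports in project $1$ that are \emph{not} equal to $0$ and \emph{not} forced into the lower slots is exactly $n-a_2-a_3+b_{2,3}+b_{3,2}$ counted from the top, which is why the index $n-a_2-a_3+b_{2,3}+b_{3,2}$ appears as the upper phantom bound. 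I would set up this bookkeeping carefully for project $1$; project $2$ is symmetric.

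For the \textbf{if direction}, I assume \eqref{eq:NC_t<1/2:1}--\eqref{eq:NC_t<1/2:2} hold at some feasible $t^*$ and argue by contradiction exactly as in Lemma~\ref{lem:NecessaryConditions}. If $y_{a_1}(t^*)>x_1$, then the $n+1-a_1$ phantoms with indices $a_1,\dots,n$ all lie in the upper slots; but the $a_1$ reports equal to $1$ also lie in the upper slots, producing $n+1$ values in the $n$ upper slots — a contradiction. If $x_1>y_{n-a_2-a_3+b_{2,3}+b_{3,2}}(t^*)$, then the phantoms with indices $0,\dots,n-a_2-a_3+b_{2,3}+b_{3,2}$ all lie in the lower slots, and together with the $a_2+a_3+C+b_{2,3}+b_{3,2}$ zero-valued reports in project $1$ this again overflows the $n$ lower slots. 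The analogous two sub-cases handle project $2$, and since the sum of medians equals $1$ by the choice of $t^*$ and $x_1+x_2+x_3=1$, getting the right medians on projects $1$ and $2$ forces the median $0$ on project $3$. (Project $3$ needs a short separate remark: with $x_1,x_2>0$, at least $n$ reports in project $3$ are zero only if $t>1/2$, but here $t\le 1/2$, so instead one argues the median of project $3$ is pinned to $0$ because the other two are pinned to $x_1,x_2$ and feasibility is unique up to outcome.)

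For the \textbf{only-if direction}, assume $f(\V)=\x$ and derive the four inequalities, again by contradiction following the template of Lemma~\ref{lem:NecessaryConditions}: if $f_1(\V)<x_1$ were consistent with $x_1>y_{n-a_2-a_3+b_{2,3}+b_{3,2}}(t^*)$ failing, count the values forced above the median; if $f_1(\V)>x_1$, count the values forced below. The main obstacle I anticipate is the project-$1$/project-$2$ asymmetry with project $3$: unlike the $x_j>0$ case of Lemma~\ref{lem:NecessaryConditions}, here the complementary values in projects $1$ and $2$ are \emph{not} guaranteed to be in the upper slots (a value like $1-x_1$ could in principle be small, though $x_1<1$ so $1-x_1>0$ — actually $1-x_1 = x_2+x_3 = x_2$, which is positive; the real subtlety is whether $1-x_2=x_1$ sits above or below the median $x_1$, i.e.\ equality, so these reports straddle the median and must be counted with care). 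Getting the tie-handling for reports equal to the median exactly right — deciding whether a report at value $x_j$ counts toward the lower or upper slots — is the delicate part, and I would handle it the same way Lemma~\ref{lem:NecessaryConditions} does, by using the freedom in placing equal values and the constraint that $z_j$-type counts are exact. Once the counting conventions are fixed, each of the four contradiction arguments is a two-line slot count.
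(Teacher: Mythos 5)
Your overall strategy---adapting the slot-counting argument of Lemma~\ref{lem:NecessaryConditions} to the $x_3=0$ case and exploiting the fact that in projects $1$ and $2$ the only report values are $0$, $x_j$, and $1$---is the same as the paper's. But there is a concrete miscounting that derails the bookkeeping: you include the $C$ fully-satisfied voters in the zero count for project $1$. A fully-satisfied voter proposes $(x_1,x_2,0)$, so her project-$1$ entry is $x_1>0$, not $0$; her zero sits in project $3$, not project $1$. The correct count of zero-valued project-$1$ reports is $a_2+a_3+b_{2,3}+b_{3,2}$ (the $a_2$ and $a_3$ single-minded voters, the $b_{2,3}$ voters proposing $(0,x_2,x_1)$, and the $b_{3,2}$ voters proposing $(0,1,0)$).

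That error propagates: the identity you try to check, $C+a_2+a_3+b_{2,3}+b_{3,2}=n-a_1-(n-a_2-a_3+b_{2,3}+b_{3,2})$, is false, and the subsequent ``clean'' rephrasing is also off. If you instead derive the upper phantom index from the slot count itself, you get the number of non-zero project-$1$ reports, namely $n-a_2-a_3-b_{2,3}-b_{3,2}$ with \emph{minus} signs---this is the quantity the paragraph preceding the lemma writes as $y(1-a_2-a_3-b_{2,3}-b_{3,2},t^*)$ and what the paper's own proof uses in its counting---so do not reverse-engineer the displayed index. Likewise, the $1$-valued project-$1$ reports number $a_1+b_{3,1}$ (a $b_{3,1}$ voter proposes $(1,0,0)$ when $x_3=0$), so the ``if'' direction's lower-bound count fails unless you either carry $b_{3,1}$ through or argue up front that such voters can be absorbed into $a_1$ without loss of generality. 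Once these counts are fixed, the four contradiction arguments you sketch go through cleanly, and your worry about tie-handling at value $x_j$ is unnecessary: with only three distinct report values per project, each contradiction uses strict inequalities that fix all placements unambiguously, exactly as in Lemma~\ref{lem:NecessaryConditions}.
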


    \begin{proof}
        We focus on inequality \ref{eq:NC_t<1/2:1}. Similar arguments can be used to show inequality \ref{eq:NC_t<1/2:2}.

    (if direction)
    Let $\V$ be {\profilename} profile and $f(\V)=\x$ for some $t^* \in [0,1]$. Assume for the sake of contradiction that $y_{a_1}(t^*)>x_1$. This implies that the $n-a_1+1$ phantoms with indices $a_1,...,n$ are located in the upper slots. Since the $a_1$ voters' reports should be located in the upper slots, at least $n+1$ values should be located in the lower slots. A contradiction.
    Suppose now, that $y_{ n-a_2 -a_3 +b_{2,3} + b_{3,2}}(t^*)<f_1(\V)$. Then, $n -a_2 -a_3 +b_{2,3} + b_{3,2}+1$ phantoms are located in the lower slots. Also $a_3+a_2+b_{2,3}+b_{3,2}$ zero values should be located in the lower slots. These are are $ n +1$ values, while the lower slots are $n$. A contradiction. 
    
    (only if direction)
    Let that inequalities \ref{eq:NC_t<1/2:1} hold, and consider a {\profilename} profile $\V$, where $f_1(\V)<x_1$. Hence the $C$ reports for the fully satisfied voters, the  $b_{1,2}+b_{1,3}+b_{2,1}$ reports equal to $x_1$ for the double-minded voters and the $a_1+b_{3,1}$ $1$-valued reports, should be located in the upper slots. From inequality \ref{eq:NC_t<1/2:1}, there exists $a_2+a_3+b_{2,3}+b_{3,2}+1$ phantoms in the upper slots. These are $n+1$ values, which cannot fit in the $n$ upper slots. Similarly, let that $x_1<f_1(\V)$. Then $C+b_{1,2}+b_{1,3}+b_{2,1}+a_1+b_{3,1}+a_2+b_{2,3}+b_{3,2}$ voters' reports are located in the lower slots. From inequality \ref{eq:NC_t<1/2:1}, at  least $a_1+1$ voter reports are located in the lower slots. Hence, $n+1$ values should be located in the lower slots. A contradiction.
    \end{proof}

\begin{table}[tb]
    \centering
    \begin{tabular}{|c|c|c|c|c|c|}
        \hline
         \textbf{signs} & \textbf{phantoms} & \textbf{Status}& \textbf{ Loss (lower bound)} &  \textbf{ Loss (upper bound)} & \textbf{Gap} \\ \hline
            $(-,+,+)$ & $(r,r),(r,r)$ & INFEASIBLE & $-$ & $-$ & $-$ \\
            $(-,+,+)$ & $(b,r),(r,r)$ & OPTIMAL & $0.000000$ & $0.000000$ & $\num{0.00e+00}$ \\ 
            $(-,+,+)$ & $(r,r),(b,r)$ & OPTIMAL & $0.000000$ & $0.000000$ & $\num{0.00e+00}$ \\ 
            $(-,+,+)$ & $(b,r),(b,r)$ & OPTIMAL & $0.250002$ & $0.250009$ & $\num{7.49e-06}$ \\
            $(-,-,+)$ & $(r,r),(r,r)$ & INFEASIBLE & $-$ & $-$ & $-$ 
            \\
            $(-,-,+)$ & $(b,r),(r,r)$ & OPTIMAL & $0.000000$ & $0.000000$ & $\num{0.00e+00}$ \\ 
            $(-,-,+)$ & $(r,r),(b,r)$ & OPTIMAL & $0.000000$ & $0.000000$ & $\num{0.00e+00}$ \\ 
            $(-,-,+)$ & $(b,r),(b,r)$ & OPTIMAL & $0.499999$ & $0.500003$ & $\num{3.43e-06}$ \\ 
            \hline
    \end{tabular}
    \caption{The upper bounds computed by the QPQCs, for $t<1/2$ and $x_3=0$. The lower bound corresponds to the largest loss for a feasible solution computed by the solver. The upper bound corresponds to the smaller non-feasible lower bound computed be the solver. The last column shows the gap between them. Gaps smaller than $10^{-5}$ are insignificant due to the tolerance of the solver.}
    \label{tab:results:special}
\end{table}

Using Lemma~\ref{lem:NC_t<1/2}, we build a NLP to upper bound the $\ell_1$-loss for this case, described in detail in Figure~\ref{fig:NLP:worst-case:zeroinoutcome}. Following our techniques from Section~\ref{ssec:upperbound3}, we solve this NLP using a set of simpler, Quadratic Programs with Quadratic Constraints. 

Since $x_3=0$, note that $\bar{v}_3 -x_3 \geq 0$ and the only sign patterns we need to tackle are $(-,-,+)$ and $(-,+,+)$. Recall that $(+,+,+)$ is only possible when the $\ell_1$-loss is equal to $0$. 

For the phantom patterns, note (using Lemma~\ref{lem:NC_t<1/2}) that $n-a_2-a_3 - b_{2,3} - b_{3,2} >1/2$ and $n-a_1-a_3-b_{1,3}-b_{3,1}>1/2$; otherwise at least one of the $x_1$ or $x_2$ is equal to $0$, a case we have already covered. Hence, the upper bounds in inequalities~\ref{eq:NC_t<1/2:1} and~\ref{eq:NC_t<1/2:2} refer to red phantoms. We don't have any guarantee for the lower phantoms, hence we use a phantom pattern in $\{(b,r),(r,r)\}^2$ for these cases. We check all $4$ possible sign patterns, for projects $1$ and $2$. We don't need to examine project $3$, since $x_3 \leq y_{k}(t)$ for any $k \in \zerotok{n}$ and any $t \in [0,1/2]$. Eventually, we need to check $2 \times 4$ QPQCs. By using the solver, we provide upper bounds for each program, which do not exceed $1/2+\epsilon$, for some $\epsilon$ no larger than $10^{-5}$. The detailed computed upper bounds are depicted in Table~\ref{tab:results:special}. 

\begin{figure}[t]
	\begin{align}
	\text{maximize }&\quad \sum_{j=1}^3 \absolute{ \bar{v}_j-x_j} \label{MIQCP:objective:worst-case:zero} \\
	\text{subject to}& && \nonumber \\ 
	\sum_{j=1}^3 x_j& =1 && \nonumber  \\
	\hat{A} &=\sum_{j=1}^3 \hat{a}_j && \nonumber \\
	\hat{B} &= \sum_{j,k \in [3],j\neq k} \hat{b}_{k,j} && \nonumber \\
	\hat{y}_{\hat{a}_1}(t^*) &\leq  x_1 \leq \hat{y}_{1-\hat{a}_2-\hat{a}_3-\hat{b}_{2,3}-\hat{b}_{3,2}}(t^*)  \nonumber \\
    \hat{y}_{\hat{a}_2}(t^*) &\leq  x_2 \leq \hat{y}_{1-\hat{a}_1-\hat{a}_3-\hat{b}_{1,3}-\hat{b}_{3,1}}(t^*) \nonumber \\
    x_3 &=0, \nonumber \\
    \hat{A}+\hat{B} & \leq 1 && \nonumber \\
	 x_j &\geq 0, \hat{a}_j \geq 0 ,&& \forall j \in [3] \nonumber \\
	 \hat{b}_{k,j} &\geq 0 ,&& \forall j,k \in [3]\nonumber \\
	 0 &\leq t^* \leq 1/2. \nonumber
	\end{align}
	\caption{The Non-Linear Program used to upper bound the maximum $\ell_1$-loss for the {\mechanismname} mechanism for the special case $x_3=0$ and $t<1/2$.}
	\label{fig:NLP:worst-case:zeroinoutcome}
	\end{figure}

\section{Examples of QPQCs}

In this section we provide some detailed examples of the Quadratic Programs with Quadratic Constraints we have used for the proof of Theorem~\ref{thm:PUmechanism_upper_bound}. Figure~\ref{fig:NLP:worst-case} depicts an example of the first $27$ programs Quadratic Programs we solve to get an initial upper bound, slightly higher than $2/3$. Figure~\ref{fig:NLP:++-} depicts one of the $81$ Quadratic Programs we solve to to check if any other case may yield loss higher than $2/3$. Finally, in Figure~\ref{fig:NLP:worst-case:special_examples} we present a specific Quadratic Program for the special case where there exists a single zero value in the outcome.

	\begin{figure}[t]
	\begin{align}
	\text{maximize }&\quad \bar{v}_1-x_1 + \sum_{j=2}^3 { x_j-\bar{v}_j}, \nonumber \\
	\text{subject to}& ,&& \nonumber \\ 
	\sum_{j=1}^3 x_j& =1 ,&& \nonumber  \\
	\hat{A} &=\sum_{j=1}^3 \hat{a}_j ,&& \nonumber \\
	\hat{B} &= \sum_{j,k \in [3],j\neq k} \hat{b}_{k,j} ,&& \nonumber \\
	\hat{z}_j&= \hat{a}_j + \sum_{k \in [3]\setminus\{j\}}{\hat{b}_{k,j}} ,&& \forall j \in [3] \nonumber \\
	\hat{q}_j&= \sum_{k \in [3]\setminus\{j\}}{\hat{b}_{j,k}} ,&& \forall j \in [3] \nonumber \\
	 x_1 & \geq \hat{z}_1  (3-2t^*)-2+2t^*, \nonumber \\
	 \hat{z}_1 &\geq 1/2, \nonumber \\
	 1/2 &\leq \hat{C} + \hat{z}_1 + \hat{q}_1, \nonumber \\
	 x_1 &\leq \left(\hat{C} + \hat{z}_1  + \hat{q}_2\right)(3-2t^*)-2+2t^*, \nonumber   \\
	 x_j & \geq \hat{z}_j (2t^*-1), && \forall j \in \{2,3\}  \nonumber  \\
	 x_j &\leq \left(\hat{C} + \hat{z}_j + \hat{q}_j  \right)(2t^*-1),&&  \forall j \in \{2,3\} \nonumber   \\
	 \hat{z}_j &\leq 1/2 ,&& \forall j \in \{2,3\} \nonumber \\
	 1/2 &\geq \hat{C} +\hat{z}_j + \hat{q}_j  ,&& \forall j \in \{2,3\} \nonumber \\
     \hat{A}+\hat{B} & \leq 1 ,&& \nonumber \\
	 x_j &\geq 0, \hat{a}_j \geq 0 ,&& \forall j \in [3] \nonumber \\
	 \hat{b}_{k,j} &\geq 0 ,&& \forall j,k \in [3] \nonumber \\
	 1/2 &\leq t^* \leq 1. \nonumber
	\end{align}
	\caption{The Quadratic Program with Quadratic Constraints for the maximum loss computation for the case $t>1/2$, $(+,-,-)$, $((r,r),(b,b),(b,b))$. The inequalities referring to function $\hat{y}$ in Figure~\ref{fig:NLP} are now replaced by $8$ new inequalities, for the specified phantom pattern. }
	 \label{fig:NLP:worst-case}
	\end{figure}

	\begin{figure}[t]
	\begin{align}
	\text{maximize }&\quad 1 \nonumber \\
	\text{subject to}& && \nonumber \\
	2/3 &\leq \bar{v}_1-x_1 + \sum_{j=2}^3 { x_j-\bar{v}_j}, \nonumber \\
	\sum_{j=1}^3 x_j& =1, && \nonumber  \\
	\hat{A} &=\sum_{j=1}^3 \hat{a}_j, && \nonumber \\
	\hat{B} &= \sum_{j,k \in [3],j\neq k} \hat{b}_{k,j}, && \nonumber \\
	\hat{z}_j&= \hat{a}_j + \sum_{k \in [3]\setminus\{j\}}{\hat{b}_{k,j}}, && \forall j \in [3] \nonumber \\
	\hat{q}_j&= \sum_{k \in [3]\setminus\{j\}}{\hat{b}_{j,k}}, && \forall j \in [3] \nonumber \\
	 x_j & \geq \hat{z}_j(2t^*-1), && \forall j \in [3] \nonumber \\
	 x_j &\leq \left(\hat{C} + \hat{z}_j+ \hat{q}_j \right)(2t^*-1),&&  \forall j \in [3]  \nonumber \\
	 \hat{z}_j &\leq 1/2 ,&&  \forall j \in [3] \nonumber \\
	 \hat{C} + \hat{z}_j + \hat{q}_j&\leq 1/2 ,&&  \forall j \in [3] \nonumber \\
     \hat{A}+\hat{B} & \leq 1, && \nonumber \\
	 x_j &\geq 0, \hat{a}_j \geq 0 ,&& \forall j \in [3] \nonumber \\
	 \hat{b}_{k,j} &\geq 0 ,&& \forall j,k \in [3] \nonumber \\
	 0 &\leq t^* \leq 1/2. \nonumber
	\end{align}
	\caption{The Quadratic Program with Quadratic Constraints which checks whether the case $t\leq 1/2$, $(+,+,-)$, $((b,b),(b,b),(b,b))$ yields $\ell_1$-loss greater than $2/3$. The inequalities referring to function $\hat{y}$ in Figure~\ref{fig:NLP} are now replaced by $4$ new inequalities, for the specified phantom pattern. This program has no feasible solutions.}
	 \label{fig:NLP:++-}
	\end{figure}

\begin{figure}
	\begin{align}
	\text{maximize }&\quad \sum_{j=1}^2 { x_j-\bar{v}_j} + \bar{v}_3  \nonumber\label{MIQCP:objective:worst-case:special} \\
	\text{subject to}& && \nonumber \\ 
	\sum_{j=1}^2 x_j& =1, && \nonumber  \\
	\hat{A} &=\sum_{j=1}^3 \hat{a}_j, && \nonumber \\
	\hat{B} &= \sum_{j,k \in [3],j\neq k} \hat{b}_{k,j}, && \nonumber \\
    x_1 &\leq ({1-\hat{a}_2-\hat{a}_3-\hat{b}_{2,3}-\hat{b}_{3,2}})\,\frac{4t^*}{n} -2\,t^*,  \nonumber \\
    x_2 &\leq ({1-\hat{a}_1-\hat{a}_3-\hat{b}_{1,3}-\hat{b}_{3,1}})\,\frac{4t^*}{n} -2\,t^*, \nonumber \\
    x_3 &=0, \nonumber \\
    \hat{a}_1 &\leq 1/2, \nonumber \\
    \hat{a}_2 &\leq 1/2, \nonumber \\
    1/2 &\leq {1-\hat{a}_2-\hat{a}_3-\hat{b}_{2,3}-\hat{b}_{3,2}}, \nonumber \\
    1/2 &\leq {1-\hat{a}_1-\hat{a}_3-\hat{b}_{1,3}-\hat{b}_{3,1}}, \nonumber \\
    \hat{A}+\hat{B} & \leq 1, && \nonumber \\
	 x_j &\geq 0, \hat{a}_j \geq 0 ,&& \forall j \in [3] \nonumber \\
	 \hat{b}_{k,j} &\geq 0 ,&& \forall j,k \in [3]\nonumber \\
	 0 &\leq t^* \leq 1/2. \nonumber
	\end{align}
	\caption{The Quadratic Program with Quadratic Constraints to compute an upper bound for the $\ell_1$-loss for the case $t\leq 1/2$, $(-,-,+)$, $((b,r),(b,r))$. The inequalities referring to function $\hat{y}$ in Figure~\ref{fig:NLP:worst-case:zeroinoutcome} are now replaced by $6$ new inequalities, for the specified phantom pattern.}
	\label{fig:NLP:worst-case:special_examples}
	\end{figure}

\end{document}